\documentclass[a4paper,11pt]{article}
\usepackage[top=2cm,bottom=2cm,left=2cm,right=2cm]{geometry}

% Useful packages
\usepackage{amsmath,amsthm,amsfonts,amssymb,bm}
\usepackage{graphicx}
\graphicspath{{Figures/}}
\usepackage{enumerate}
\allowdisplaybreaks
\usepackage[colorlinks=true, allcolors=blue]{hyperref}
\usepackage[sectionbib]{natbib}

\parindent=1.7pc
\parskip=0pt

\usepackage{authblk}

\newtheorem{thm}{Theorem}[section]
\newtheorem{lemma}{Lemma}[section]

\newtheorem{remark}{Remark}
\newtheorem{ex}{Example}

\newcommand{\R}{\mathbb{R}}
\newcommand{\E}{\mathbb{E}}

\renewcommand{\vec}[1]{\mathbf{#1}}
\newcommand{\parvec}[1]{\bm{#1}}

\newcommand{\xivec}{\mbox{\boldmath $\xi$}}
\newcommand{\muvec}{\mbox{\boldmath $\mu$}}
\newcommand{\sigmat}{\bm{\Sigma}}

\newcommand{\se}[1]{$\scriptstyle(#1)$}
\newcommand{\best}[1]{\textbf{#1}}

\title{Classification Using Global and Local Mahalanobis Distances}
\author[1]{Annesha Ghosh}
\author[1]{Anil K. Ghosh}
\author[2]{Rita SahaRay}
\author[3]{Soham Sarkar\footnote{Corresponding author. Email: {sohamsarkar@isid.ac.in}}}
\affil[1]{Theoretical Statistics and Mathematics Unit, Indian Statistical Institute, \protect\\ 203, B.\ T.\ Road, Kolkata 700108, India.}
\affil[2]{Department of Statistics and Applied Probability, University of California, Santa Barbara, \protect\\ CA 93106, USA.}
\affil[3]{Statistical Sciences Unit, Indian Statistical Institute, Delhi Centre, \protect\\ 7, S.\ J.\ S.\ Sansanwal Marg, New Delhi 110016, India.}

\date{}

\begin{document}
\maketitle
\begin{abstract}
We propose a novel semiparametric classifier based on Mahalanobis distances of an observation from the competing classes. Our tool is a generalized additive model with the logistic link function that uses these distances as features to estimate the posterior probabilities of different classes. While popular parametric classifiers like linear and quadratic discriminant analyses are mainly motivated by the normality of the underlying distributions, the proposed classifier is more flexible and free from such parametric modeling assumptions. Since the densities of elliptic distributions are functions of Mahalanobis distances, this classifier works well when the competing classes are (nearly) elliptic. In such cases, it often outperforms popular nonparametric classifiers, especially when the sample size is small compared to the dimension of the data. To cope with non-elliptic and possibly multimodal distributions, we propose a local version of the Mahalanobis distance. Subsequently, we propose another classifier based on a generalized additive model that uses the local Mahalanobis distances as features. This nonparametric classifier usually performs like the Mahalanobis distance based semiparametric classifier when the underlying distributions are elliptic, but outperforms it for several non-elliptic and multimodal distributions. We also investigate the behaviour of these two classifiers in high dimension, low sample size situations. A thorough numerical study involving several simulated and real datasets demonstrate the usefulness of the proposed classifiers in comparison to many state-of-the-art methods.

\medskip
\noindent
\textbf{Keywords:} bootstrap, generalized additive model, HDLSS asymptotics, nonparametric classifier, semiparametric classifier. 
\end{abstract}

\section{Introduction}\label{sec:intro}
Mahalanobis distance \citep[MD,][]{pcm} plays a major role in various statistical analyses of multivariate data. Some examples of its widespread applications include computing the distance between two populations, measuring the centrality of a multivariate observation with respect to a data cloud or a probability distribution, two-sample test based on Hotelling's $T^2$ type statistics, classification using linear and quadratic discriminant analyses, and clustering based on Gaussian mixture models. The density of an elliptic distribution \citep[e.g.,][]{fang2018symmetric} is a function of the Mahalanobis distance. So, in a classification problem involving two or more elliptic distributions, the Bayes classifier \citep[e.g.,][]{hastie2009elements} turns out to be a function of the Mahalanobis distances of the observation from the competing classes. This result provides the motivation for constructing a classifier based on these distances. To demonstrate the utility of such a classifier, let us consider a simple example of binary classification.
\begin{ex}\label{example1}
Each of the two competing classes is an equal mixture of two uniform distributions. Class-1 is a mixture of $U_d(0,1)$ and $U_d(2,3)$, whereas Class-2 is a mixture of $U_d(1,2)$ and $U_d(3,4)$. Here, $U_d(a,b)$ denotes the {$d$-dimensional} uniform distribution over the region $\{\vec x \in \R^d:\,\, a \le \|\vec x\|\le b\}$.
\end{ex}

\begin{figure}
\centering
\includegraphics[height=2.0in]{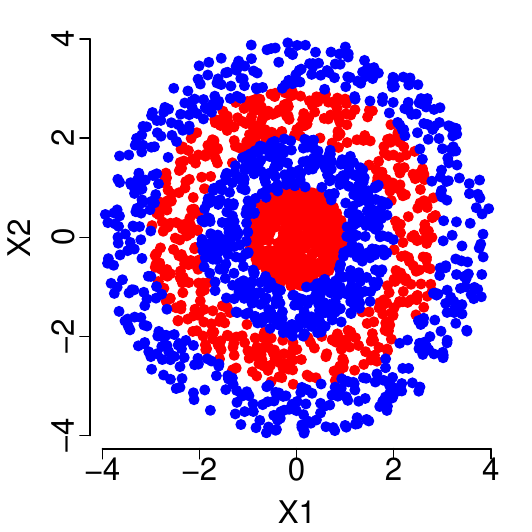}
\caption{Scatter plot of 2000 observations (1000 from each class) from the two classes in Example~\ref{example1} with $d=2$. Red (respectively, blue) represents observations from Class-1 (respectively, Class-2).\label{fig:scatter_ex1}}
\vspace{-0.2in}
\end{figure}

A scatter plot of this dataset for $d=2$ is shown in Figure~\ref{fig:scatter_ex1}. We consider this example with $d=2,4$ and $6$, and in each case, we form training and test sets of sizes $200$ and $10000$, respectively, taking an equal number of observations from the two classes. We repeat the data generation process $100$ times and compute the average empirical misclassification rates of several popular parametric and nonparametric classifiers on the test sets. These misclassification rates are reported in Table~\ref{tab:example1}. In this example, the competing classes have disjoint supports, so the Bayes risk is zero. But, the Bayes classifier is highly non-linear. So, as expected, linear discriminant analysis \citep[LDA, e.g.,][]{anderson1962introduction} and other linear classifiers like logistic regression, GLMNET \citep[e.g.,][]{friedman2010regularization} and linear support vector machines \citep[SVM Linear, e.g.,][]{duda2006pattern} have misclassification rates close to 50\%. Performances of quadratic discriminant analysis \citep[QDA, e.g.,][]{anderson1962introduction} and nonparametric methods like the $k$-nearest neighbour classifier \citep[$k$NN, e.g.,][]{cover1967nearest}, kernel discriminant analysis \citep[KDA, e.g.,][]{hand1982kernel}, classification tree \citep[CART, e.g.,][]{breiman2017classification} and random forest \citep{breiman2001random} are only marginally better. All these state-of-the-art classifiers fail to yield satisfactory performance. Nonlinear SVM with radial basis function kernel \citep[SVM RBF, e.g.,][]{christianini2003support, smola1998learning} has somewhat better performance, but even this classifier has misclassification rates close to 35\% for $d=4$ and $d=6$. Apart from the curse of dimensionality, one major drawback of the nonparametric methods is that even if we have some structural information about the underlying distributions, that useful information cannot be utilized during the construction of the classification rule. If we can extract that information and use it judiciously, the resulting classifier can have improved performance. It is known that for elliptic distributions, MD carries substantial information about the density. Keeping that in mind, we construct a classifier based on MD that leverages the underlying structural information (details given in the next section). This classifier (henceforth, referred to as the MD classifier) does not involve any density estimation, but still can be viewed as a generalization of LDA and QDA. The proposed classifier has excellent performance in Example~\ref{example1}, where it outperforms all other classifiers considered here. In particular, for $d=4$ and $6$, while all other classifiers have misclassification rates higher than 30\%, the misclassification rates of the MD classifier are less than 10\%.

\begin{table}[h!]
\setlength{\tabcolsep}{0.05in}
    \centering
    \small
    \begin{tabular}{c|ccccccccccccc}
         $d$ & LDA & QDA & Logistic & GLMNET & $k$NN & KDA & CART & Random  & SVM & SVM &MD\\ 
         & & &Regression &   &    & & &Forest & Linear & RBF & (proposed) \\ \hline
         $2$ & 49.58 & 41.92 & 49.59 & 49.57  & 12.75 & 19.71 & 28.86 & 14.22  & 45.09 & 7.84 & \best{6.85}\\ 
          &  \se{0.11} & \se{0.25} & \se{0.11} & \se{0.11} & \se{0.20} & \se{0.23} & \se{0.43} & \se{0.21} & \se{0.12} & \se{0.13} & \se{0.27}\\ \hline
         $4$ & 49.43 & 42.11 & 49.45 & 49.37  & 32.80 & 36.72& 38.85 & 33.08 & 46.39& 33.44 & \best{7.75} \\
         & \se{0.10} & \se{0.14} & \se{0.10} & \se{0.10}  & \se{0.31}& \se{0.24} & \se{0.26} & \se{0.12} & \se{0.18} & \se{0.23} & \se{0.18} \\\hline
         $6$  &49.12 & 42.29 & 49.16 & 49.17 & 39.56 & 41.41 & 39.84 & 34.64 & 46.55 &36.98 & \best{8.46}\\ 
         & \se{0.11} & \se{0.12} & \se{0.11} & \se{0.11}  & \se{0.19} & \se{0.20} & \se{0.25} & \se{0.12} & \se{0.17} & \se{0.16} & \se{0.18} \\\hline
    \end{tabular}
    
    Boldface character signifies the best result in each case.
    \caption{Average misclassification rates (in \%) of different classifiers in Example~\ref{example1} for varying choices of $d$. The standard errors are reported in the next line within brackets in a smaller font.\label{tab:example1}}
\end{table}

The description of the MD classifier is given in the next section, where we also analyze a few other simulated datasets involving elliptic distributions to evaluate its performance. However, when the underlying distributions are non-elliptic, in particular, if they are multimodal in nature, the MD classifier may not work well. To take care of this problem, we propose a local version of MD and develop a classifier based on those local distances. The description of this classifier is given in Section~\ref{sec:LMD classifier}. This local Mahalanobis distance (LMD) involves a tuning parameter that controls the degree of localization. For higher values of this parameter, local Mahalanobis distances behave like usual Mahalanobis distances, and the classifier behaves like the MD classifier. But, for smaller values of the tuning parameter, it performs like the nonparametric methods. So, for a suitable choice of this parameter, the proposed method can perform well both for elliptic and non-elliptic distributions. We use the bootstrap method to select this parameter and investigate the performance of the resulting classifier on several simulated datasets. In Section~\ref{sec:HDLSS}, we study the high-dimensional behaviour of our proposed classifiers and observe that unlike many popular nonparametric classifiers, our proposed methods can perform well even in high dimension, low sample size (HDLSS) situations. Several benchmark datasets are analyzed in Section~\ref{sec:real data} to compare the performance of the proposed classifiers with some state-of-the-art classifiers. Section~\ref{sec:conclusion} contains a brief summary of the work and ends with a discussion on related issues. All proofs and mathematical details are given in the Appendix. 

\section{Classifier Based on Mahalanobis Distances\label{sec:MD classifier}}

Consider a $J$-class classification problem, where the $j$-th class ($j=1,\ldots,J$) has probability density function $f_j$ with location $\parvec\mu_j$ and scatter matrix $\parvec\Sigma_j$. For any observation $\vec x$, its Mahalanobis distance from the $j$-th class is given by $\delta_j({\bf x})=\{({\bf x}-\muvec_j)^{\top}\sigmat_j^{-1}({\bf x}-\muvec_j)\}^{1/2}$, where $\vec a^{\top}$ denotes the transpose of the vector $\vec a$. Note that if $f_j$ is the density\ of an elliptically symmetric distribution \citep[e.g.,][]{fang2018symmetric}, then its functional form is given by
\[
f_j({\bf x})= C_j |\sigmat_j|^{-1/2} \phi_j (\delta_j({\bf x})),
\]
where $|\sigmat_j|$ is the determinant of $\sigmat_j$, $C_j$ is a positive constant, and $\phi_j:{\mathbb R}_+ \rightarrow {\mathbb R}_+$ is a non-negative function. Therefore, if the competing classes are elliptically symmetric and $\pi_j$ is the prior probability of the $j$-th class, then the logarithm of the ratio of the posterior probabilities of the $j$-th and the $j^\prime$-th classes ($1 \le j \neq j^\prime \le J$) can be expressed as 
\[
\log\left(\frac{p(j \mid \vec x)}{p(j^\prime \mid \vec x)}\right) = (\alpha_{j}-\alpha_{j^\prime}) + \psi_j(\delta_j({\bf x})) - \psi_{j^\prime}(\delta_{j^\prime}({\bf x})),
\]
where $\psi_j(t)=\log(\phi_j(t))$ and $\alpha_j = \log(C_j \pi_j |\sigmat|^{-1/2}_j)$. So, the posterior probabilities of different classes follow a generalized additive model \citep[GAM, e.g.,][]{hastie1990generalized, wood2017generalized} with the logistic link function involving Mahalanobis distances $\Delta({\bf x})=(\delta_1({\bf x}),\ldots,\delta_J({\bf x}))$ as the covariates.  We state this as a theorem below.

\begin{thm}\label{thm:MD_GAM}
If the densities of $J$ competing classes are elliptically symmetric, their posterior probabilities satisfy an additive logistic regression model given by
\[
p(j \mid \vec x) = \frac{\exp\Big(g_j\big(\Delta(\vec x)\big)\Big)}{1+\sum_{k=1}^{J-1}\exp\Big(g_k\big(\Delta(\vec x)\big)\Big)} \text{ for } j=1,\ldots,J-1 \text{ and } p(J \mid {\bf x})= \frac{1}{1+\sum_{k=1}^{J-1}\exp\Big(g_k\big(\Delta(\vec x)\big)\Big)},
\]
where $g_j\big(\Delta({\bf x})\big)=\sum_{k=1}^{J} g_{jk}\big(\delta_k({\bf x})\big)$ is an additive function of $\Delta(\vec x)=\big(\delta_1(\vec x),\ldots,\delta_J(\vec x)\big)$.
\end{thm}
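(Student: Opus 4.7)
The paragraph immediately preceding the theorem statement already carries out the essential computation, so my plan is to assemble those observations into a clean proof. First, I would invoke Bayes' rule to write
\[
p(j\mid\vec x) = \frac{\pi_j f_j(\vec x)}{\sum_{k=1}^{J} \pi_k f_k(\vec x)},
\]
and substitute the elliptically symmetric form $f_k(\vec x) = C_k |\sigmat_k|^{-1/2}\phi_k(\delta_k(\vec x))$. Writing $\psi_k = \log\phi_k$ and $\alpha_k = \log(C_k\pi_k |\sigmat_k|^{-1/2})$, I would record the key structural fact: $\log(\pi_k f_k(\vec x)) = \alpha_k + \psi_k(\delta_k(\vec x))$ depends on $\vec x$ only through the scalar $\delta_k(\vec x)$.

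Second, I would designate class $J$ as the baseline and compute, for $j=1,\ldots,J-1$,
\[
\log\!\left(\frac{p(j\mid\vec x)}{p(J\mid\vec x)}\right) = (\alpha_j - \alpha_J) + \psi_j(\delta_j(\vec x)) - \psi_J(\delta_J(\vec x)).
\]
Defining $g_{jj}(t) = \psi_j(t)$, $g_{jJ}(t) = -\psi_J(t) + (\alpha_j - \alpha_J)$, and $g_{jk}\equiv 0$ for $k \notin \{j,J\}$, and setting $g_j(\Delta(\vec x)) = \sum_{k=1}^{J} g_{jk}(\delta_k(\vec x))$, this identity becomes $p(j\mid\vec x) = p(J\mid\vec x)\exp(g_j(\Delta(\vec x)))$.

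Third, I would use the normalization $\sum_{k=1}^{J} p(k\mid\vec x) = 1$ to solve for the baseline, obtaining
\[
p(J\mid\vec x) = \frac{1}{1 + \sum_{k=1}^{J-1}\exp(g_k(\Delta(\vec x)))},
\]
from which the displayed formula for each $p(j\mid\vec x)$ follows by multiplying through by $\exp(g_j(\Delta(\vec x)))$.

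There is no real obstacle: all the mathematical content sits in the well-known fact that an elliptic density is a univariate function of its own Mahalanobis distance, which collapses the $J$-variate log-density ratio into a sum of univariate terms. The only genuine freedom is in how to distribute the additive constants $\alpha_j - \alpha_J$ among the $g_{jk}$'s, but since additive decompositions are not unique up to constants, any bookkeeping convention -- such as absorbing the constant into the $k=J$ coordinate as above -- produces a valid representative of the stated GAM structure.
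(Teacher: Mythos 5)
Your proposal is correct and follows essentially the same route as the paper's proof: Bayes' rule with the elliptic density form, the log-ratio against baseline class $J$, a definition of $g_{jk}$ that is nonzero only for $k\in\{j,J\}$, and normalization of the posteriors. The only difference is the bookkeeping of the constants $\alpha_j,\alpha_J$ (the paper splits them between the $k=j$ and $k=J$ terms while you absorb the whole difference into the $k=J$ term), which, as you note, is immaterial.
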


Motivated by the above result, for constructing a classifier, we start by estimating the empirical Mahalanobis distances from the training data $\{{\bf x}_{ji}:~j=1,\ldots,J,~i=1,\ldots,n_j\}$, where ${\bf x}_{ji}$ denotes the $i$-th observation from the $j$-th class. We use this dataset to estimate the location vectors and scatter matrices of different classes. Usually, one uses moment-based estimates, but robust estimates like minimum covariance determinant \citep[MCD, e.g.,][]{rousseeuw1999fast} or minimum volume ellipsoid \citep[MVE, e.g.,][]{van2009minimum} estimates can also be used. After getting these estimates ${\widehat \muvec}_j$ and ${\widehat\sigmat}_j$ for $j=1,\ldots,J$, for each  ${\bf x}_{ji}$, we estimate the vector of Mahalanobis distances by ${\widehat \Delta}({\bf x}_{ji})=\big(\widehat\delta_1({\bf x}_{ji}),\ldots,{\widehat\delta_J}({\bf x}_{ji})\big)$, where ${\widehat\delta}_j({\bf x}) = \big\{({\bf x}-{\widehat\muvec}_j)^{\top} {\widehat\sigmat}_j^{-1} ({\bf x}-{\widehat\muvec}_j)\big\}^{1/2}$ for $j=1,\ldots,J$. Pointwise convergence of $\widehat\delta_j$ to $\delta_j$ follows from the convergence of $\widehat\muvec_j$ and $\widehat\sigmat_j$ to their population counterparts. In fact, uniform convergence of $\widehat\delta_j$ over any compact set holds for consistent estimators $\widehat\muvec_j$ and $\widehat\sigmat_j$ of $\muvec$ and $\sigmat$, respectively. This follows from the result given below.
\begin{lemma}\label{lemma:MD_uniform_convergence}
Let $\vec X_1,\ldots,\vec X_n$ be independent copies of a $d$-dimensional random vector $\vec X$, which has finite second moments. Define $\muvec = E(\vec X)$ and $\sigmat = Var(\vec X)$, and suppose that $\sigmat$ is positive definite. Let $\widehat\muvec$ and $\widehat\sigmat^{-1}$ be $\sqrt{n}$-consistent estimators of $\muvec$ and $\sigmat^{-1}$, respectively, i.e., $\|\widehat\muvec-\muvec\| = O_P(n^{-1/2})$ and $\|\widehat\sigmat^{-1}\mkern-2mu-\sigmat^{-1}\|_{\rm F} = O_P(n^{-1/2})$, where $\|\cdot\|_{\rm F}$ is the Frobenius norm. Then, for any compact set $\mathcal C \subset \R^d$,
\[
\sup_{\vec x \in \mathcal C} \left\|(\vec x- \widehat\muvec)^\top \widehat\sigmat^{-1}(\vec x-\widehat\muvec) - (\vec x- \muvec)^\top \sigmat^{-1}(\vec x-\muvec)\right\| = O_P(n^{-1/2}).
\]
\end{lemma}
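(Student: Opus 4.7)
The plan is to write the difference $\widehat Q(\vec x) - Q(\vec x)$, where $Q(\vec x) = (\vec x-\muvec)^\top \sigmat^{-1}(\vec x-\muvec)$ and $\widehat Q(\vec x)$ is its plug-in version, as a sum of three pieces and to bound each of them, uniformly in $\vec x \in \mathcal C$, using the $\sqrt n$-consistency hypotheses together with the compactness of $\mathcal C$. First I would insert the hybrid quantity $(\vec x-\muvec)^\top \widehat\sigmat^{-1}(\vec x-\muvec)$ and expand $\vec x-\widehat\muvec = (\vec x-\muvec)+(\muvec-\widehat\muvec)$ to get the identity
\[
\widehat Q(\vec x) - Q(\vec x) = (\vec x-\muvec)^\top(\widehat\sigmat^{-1} - \sigmat^{-1})(\vec x-\muvec) + 2(\vec x-\muvec)^\top \widehat\sigmat^{-1}(\muvec-\widehat\muvec) + (\muvec-\widehat\muvec)^\top \widehat\sigmat^{-1}(\muvec-\widehat\muvec).
\]

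Next I would bound each term via the Cauchy--Schwarz-type inequality $|\vec a^\top A\vec b| \le \|\vec a\|\,\|A\|_{\mathrm{op}}\,\|\vec b\|$ combined with $\|A\|_{\mathrm{op}} \le \|A\|_{\mathrm{F}}$. Setting $M := \sup_{\vec x \in \mathcal C} \|\vec x-\muvec\|$, which is finite by compactness of $\mathcal C$, the first term is at most $M^2\,\|\widehat\sigmat^{-1} - \sigmat^{-1}\|_{\mathrm{F}} = O_P(n^{-1/2})$. For the second term I first note that the triangle inequality gives $\|\widehat\sigmat^{-1}\|_{\mathrm{op}} \le \|\sigmat^{-1}\|_{\mathrm{op}} + \|\widehat\sigmat^{-1} - \sigmat^{-1}\|_{\mathrm{F}} = O_P(1)$, whence the term is at most $2M\cdot O_P(1)\cdot \|\widehat\muvec-\muvec\| = O_P(n^{-1/2})$. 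Finally, the third term is at most $\|\widehat\muvec-\muvec\|^2\cdot \|\widehat\sigmat^{-1}\|_{\mathrm{op}} = O_P(n^{-1}) = o_P(n^{-1/2})$.

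All three bounds are uniform in $\vec x \in \mathcal C$ because the only $\vec x$-dependence sits inside $\|\vec x-\muvec\|$, which is dominated by $M$. Taking the supremum over $\mathcal C$ and summing the three contributions therefore yields the stated $O_P(n^{-1/2})$ rate. There is no serious obstacle here; the only point that needs minor care is establishing $\|\widehat\sigmat^{-1}\|_{\mathrm{op}} = O_P(1)$ so that the cross term can be controlled, and this follows immediately from the $\sqrt n$-consistency of $\widehat\sigmat^{-1}$ via the triangle inequality.
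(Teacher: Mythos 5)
Your proof is correct and follows essentially the same route as the paper's: an add-and-subtract decomposition controlled by operator-norm bounds (with $\|\cdot\|_{\mathrm{op}}\le\|\cdot\|_{\mathrm F}$) and the compactness of $\mathcal C$. The only cosmetic difference is which hybrid term you insert --- you center at $\muvec$ and pair the mean error with $\widehat\sigmat^{-1}$ (hence your extra step showing $\|\widehat\sigmat^{-1}\|_{\mathrm{op}}=O_P(1)$), whereas the paper centers at $\widehat\muvec$ and pairs the mean error with $\sigmat^{-1}$ (hence its step showing $\|\widehat\muvec\|=O_P(1)$); both variants close the argument equally well.
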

\begin{remark}\label{remark:MD_uniform_convergence}
The assumption of $\vec X$ having finite second moment in Lemma~\ref{lemma:MD_uniform_convergence} is not necessary. The same proof holds with $\muvec$ and $\sigmat$ being the location vector and scatter matrix (not necessarily moment based), as long as we can find $\sqrt{n}$-consistent estimators for $\muvec$ and $\sigmat^{-1}$. Moreover, if we have consistent estimators but of different rates of convergence, then also the result follows. In this case, the uniform rate of convergence will be the slower of the two convergence rates (of $\widehat\muvec$ and $\widehat\sigmat^{-1}$).
\end{remark}

In the case of a high-dimensional problem, the estimated MDs $\widehat\Delta(\vec x_{ji})$ can provide a low-dimensional view of the class separability. This visualization is particularly useful for binary classification. To demonstrate this, we show in Figure~\ref{fig:MD_plot_ex1} the scatter plot of the ${\widehat \Delta}({\bf x}_{ji})$'s for the classification problem in Example~\ref{example1} with $d=4$ and $d=6$. The separability of the two classes is clearly visible from this figure.

\begin{figure}[h]
\begin{center} 
\includegraphics[width=0.35\linewidth,,height=2in]{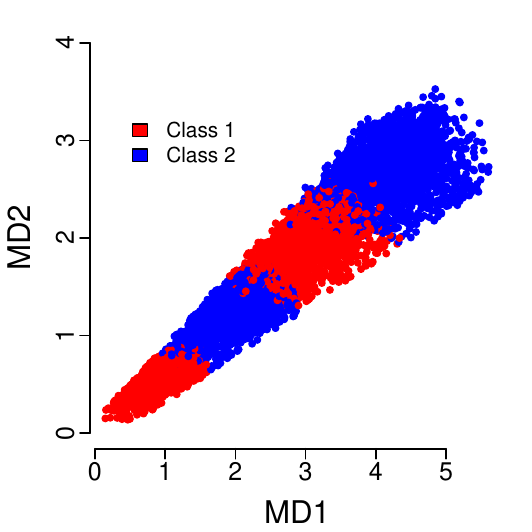}
\hspace{0.1\linewidth}
\includegraphics[width=0.35\linewidth,height=2in]{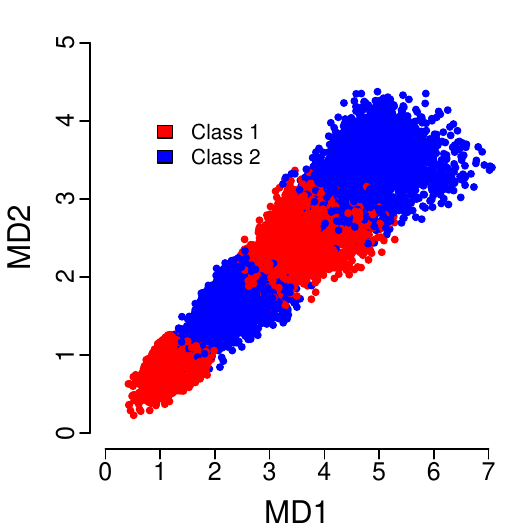}
\caption{Scatter plot of estimated Mahalanobis distances in Example~\ref{example1} for $d=4$ (left) and $d=6$ (right). Here, MD1 and MD2 denote estimated Mahalanobis distances with respect to Class-1 and Class-2 (i.e., ${\widehat \delta}_1(\cdot)$ and ${\widehat \delta}_2(\cdot)$), respectively.\label{fig:MD_plot_ex1}}
\end{center}
\vspace{-0.1in}
\end{figure}

Motivated by Theorem~\ref{thm:MD_GAM}, we use the $\widehat\Delta(\vec x_{ji})$'s as extracted features and fit a GAM to get estimates of the additive functions $\widehat g_1,\ldots,\widehat g_J$, and subsequently, those of the posterior probabilities  $\widehat p(1 \mid \vec x),\ldots,\widehat p(J \mid \vec x)$. Finally, an observation $\vec x$ is assigned to the class having the highest estimated posterior. We have already seen that this classifier has an excellent performance in Example~\ref{example1}. In the next subsection, we evaluate its empirical performance on a few more simulated datasets. 

\subsection{Empirical Performance of the MD Classifier\label{sec:MD_simulation}}
 
Here, we consider some classification problems involving two or more elliptic distributions. For all these examples, we form the training set by generating 100 observations from each class, while 1000 (5000 for two-class problems) observations from each class are generated to form the test set. Each experiment is repeated 100 times to compute the average test set misclassification errors and the corresponding standard errors for different classifiers. These results are reported in Table~\ref{tab:simulation_MD_2-8} for $d=2,4$ and $6$. Like Example~\ref{example1}, here also we compare the performance of the MD classifier with some popular classifiers. Some of these competing classifiers involve tuning parameters, which are chosen using 5-fold cross-validation. All classification methods are implemented using \texttt{R} codes. The codes for our proposed methods are available at \url{https://github.com/Agy-code/MD-LMD-Classifier}.

\begin{ex}\label{example2}
We consider two multivariate normal distributions $N_d({-0.3}{\bf 1}_d,{\bf I}_d)$ and $N_d({0.3}{\bf 1}_d,{\bf I}_d)$ differing only in their locations. Here, ${\bf 1}_d$ denotes the $d$-dimensional vector with all elements equal to $1$, and ${\bf I}_d$ denotes the $d \times d$ identity matrix.
\end{ex}
Here, the Bayes classifier is linear, and this is an ideal setup for LDA to perform well. So, as expected, LDA has the lowest misclassification rate in this example, and all linear classifiers perform better than nonlinear classifiers. Nevertheless, all classifiers have satisfactory performances.

\begin{ex}\label{example3}
We consider two multivariate normal distributions $N_d({\bf 0}_d,{\bf I}_d)$ and $N_d({\bf 0}_d,5 {\bf I}_d)$, which differ only in their scales. Here, ${\bf 0}_d$ is the $d$-dimensional vector with all elements equal to $0$.
\end{ex}
In this example, the Bayes classifier is a quadratic classifier, and this setup is ideal for QDA. So, QDA has the best performance here. The MD classifier has the second-best performance, closely followed by {the} nonlinear SVM. Among the others, Random Forest has a somewhat satisfactory performance. The rest of the classifiers perform very poorly.

\begin{ex}\label{example4}
Consider two elliptical distributions, one of which is $U_{d,\sigmat}(1,2)$ and the other one is an equal mixture of $U_{d,\sigmat}(0,1)$ and $U_{d,\sigmat}(2,3)$. Here, $U_{d,\sigmat}(a,b)$ denotes the $d$-dimensional uniform distribution over the region $\{{\bf x} \in {\mathbb R}^d: a \le \|\sigmat^{1/2}{\bf x}\|\le b\}$. The matrix $\sigmat$ has all diagonal elements equal to $1$ and all off-diagonal elements equal to $0.5$.
\end{ex}
Like Example~\ref{example1}, here also the MD classifier performs much better than its competitors. While it has misclassification rates close to 5\%, all other competing methods have much higher misclassification rates, especially for $d=4$ and $d=6$.

\begin{table}[!t]
\setlength{\tabcolsep}{0.05in}
  \centering
    {\small
    \begin{tabular}{c|c|cccccccccccc}
       Data &$d$&Bayes & LDA & QDA & Logistic & GLM & $k$NN & KDA & CART & Random & SVM & SVM & MD\\ 
        set  & &       &    & & Reg.    & NET &    & & & Forest & Linear & RBF     & (proposed) \\ \hline
        & 2 &33.53 & \best{33.89} & 34.18 & \best{33.89} & 34.39 &  35.71 & 35.12 & 38.08 &39.42 & 33.93 & 36.11 & 35.14\\
        & &  \se{0.05} &\se{0.07} &\se{0.08}& \se{0.07} &\se{0.14} & \se{0.22} &\se{0.29} &\se{0.24} &\se{0.17} &\se{0.06} &\se{0.29} &\se{0.15}\\ \cline{2-14}
         Ex 2 & 4& 27.48 & \best{28.09} & 28.97 & 28.10 & 28.50 & 29.27 & 28.93 & 34.92 & 31.29 & 28.13 & 29.36 & 29.59\\  
         & &\se{0.04}&\se{0.06} &\se{0.08} & \se{0.06}&  \se{0.10}& \se{0.12}& \se{0.16}& \se{0.21}&\se{0.11} & \se{0.06}& \se{0.20}&\se{0.10}\\\cline{2-14}
           &6 & 23.14 & \best{24.02} & 25.47 & 24.08 & 24.64 & 25.32 & 24.74 & 34.16 & 26.84 & 24.14 & 24.77 & 26.00\\  
         & &\se{0.05} &\se{0.07} &\se{0.10} & \se{0.07}& \se{0.12}& \se{0.11} & \se{0.11}& \se{0.20}& \se{0.10}&\se{0.07} &\se{0.11}&\se{0.11} \\ 
      \hline
 & 2 & 23.23 & 48.95 & \best{23.61} & 48.97 & 49.02 & 26.64 & 25.87 & 26.76 & 27.78 & 41.55 & 24.95 & 23.93 \\
        & &\se{0.04} & \se{0.12} & \se{0.05} & \se{0.12} & \se{0.11} & \se{0.16} & \se{0.14} & \se{0.25} & \se{0.12} & \se{0.11} & \se{0.14} & \se{0.07} \\ \cline{2-14}
              Ex 3 &4 &14.21 & 48.32 & \best{15.08} & 48.36 & 48.38 & 23.07 & 21.95 & 24.07 & 17.76 & 41.85 & 15.60 & 15.27 \\ 
     & & \se{0.03} & \se{0.14} & \se{0.05} & \se{0.14} & \se{0.14} & \se{0.15} & \se{0.13} & \se{0.26} & \se{0.10} & \se{0.10} & \se{0.11} & \se{0.06} \\ \cline{2-14}
          &6 & 9.12 & 47.84 & \best{10.42} & 47.90 & 47.91 & 23.94 & 22.56 & 23.86 & 13.56 & 42.05 & 10.73 & 10.70 \\ 
          & &\se{0.03} & \se{0.12} & \se{0.04} & \se{0.12} & \se{0.12} & \se{0.14} & \se{0.10} & \se{0.26} & \se{0.12} & \se{0.11} & \se{0.10} & \se{0.08} \\ \hline
        & 2 & 0.00 & 50.32 & 48.67 & 50.31 & 50.27 & 9.49 & 8.42 & 20.93 & 10.56 & 53.04 & 5.82 & \best{4.72}\\
        & & \se{0.00} &\se{0.08} & \se{0.50} & \se{0.08} & \se{0.08} 
          & \se{0.16} &\se{ 0.12} & \se{0.29} 
           & \se{0.14} & \se{0.08} & \se{ 0.11} & \se{0.24}\\ \cline{2-14}
        Ex 4 &4 &0.00 & 50.11 & 51.29 & 50.10 & 50.11 & 25.30 & 31.40 & 38.69 & 25.84 & 51.26 & 16.60 & \best{5.18} \\ 
        & &\se{0.00} & \se{0.07} & \se{0.33} & \se{0.07} & \se{0.08} &   \se{0.21} & \se{0.29} & \se{0.39} & \se{0.18} & \se{0.11} & \se{0.19} & \se{0.14}\\ \cline{2-14}
          &6 & 0.00 & 50.02 & 52.77 & 50.01 & 50.00& 32.95 & 40.93 & 42.53 & 31.04 & 50.80 & 25.61 & \best{4.80}\\ 
           & & \se{0.00}& \se{0.07} & \se{0.26} & \se{0.07} & \se{0.07}& \se{0.21} & \se{0.37} &\se{0.25}&\se{0.21}& \se{0.11} & \se{0.22} &\se{0.15}\\ \hline
        & 2 & 36.17& 49.92 & 45.72 & 49.89 & 49.89  & 41.79 & 42.05 & 42.80 & 43.36 & 49.73 & 40.02 & \best{37.82}\\
        & & \se{0.04} & \se{0.11} & \se{0.59} & \se{0.11} & \se{0.10} &  \se{0.21} & \se{0.25} & \se{0.24}& \se{0.14} & \se{0.26} & \se{0.26} & \se{0.13}\\ \cline{2-14}
        Ex 5 &4 & 30.48 & 50.02 & 43.41 & 49.98 & 49.95 & 39.86 & 40.08 & 41.12 & 37.85& 48.96 & 34.83 & \best{32.87}\\ 
        & & \se{0.05} & \se{0.10} &\se{0.67} &\se{0.10} & \se{0.09} &\se{0.16} &\se{ 0.30} &\se{0.24} &\se{0.15} &\se{ 0.21} &\se{ 0.22} &\se{ 0.14}\\ \cline{2-14}
          &6 &  27.19 & 49.94 & 41.40 & 49.90 & 49.89 & 39.79 & 40.49 & 41.10 & 35.30 & 48.68 & 32.58 & \best{30.95}\\ 
           & &\se{0.04} &\se{0.10} &\se{0.67}& \se{0.10}& \se{0.10} & \se{0.12}& \se{0.41}& \se{0.26} & \se{0.13} &\se{0.17} &\se{0.23}& \se{0.17}\\ \hline
        & 2 & 47.32 & 66.70 & 66.47 & 66.68 & 66.71  & 56.44 & 55.83 & 58.88 & 57.62 & 69.01 & 54.87 & \best{51.65}\\
        & & \se{0.08} &\se{0.24}& \se{0.30} &\se{0.24} &\se{0.22}& \se{0.19} & \se{0.23} & \se{0.22} & \se{0.12} &\se{0.19} & \se{0.20} &\se{0.13} \\ \cline{2-14}
          Ex 6 & 4 &  47.51 & 66.39 & 66.09 & 66.41 &66.46  & 59.23 & 59.43 & 61.30 & 59.19 & 67.69 & 58.00 & \best{53.07}\\ 
       & &  \se{0.07} & \se{0.24} &\se{0.31} &\se{0.24} & \se{0.22} &  \se{0.12} & \se{0.14} & \se{0.16} & \se{0.12} & \se{0.23} & \se{0.17} & \se{0.13}\\ \cline{2-14}
          &6 & 47.29 & 66.41 & 65.95 & 66.35 &66.35& 60.56 &61.02& 62.13 & 59.40 &67.17 &59.40 & \best{53.83}  \\ 
          & & \se{0.08} & \se{0.22} & \se{0.29} & \se{0.22} & \se{0.22} & \se{0.12} & \se{0.20} & \se{0.16} & \se{0.10} & \se{0.24} & \se{0.14} & \se{0.13}\\ \hline
        & 2 & 44.77 & 65.16 & \best{45.71} & 65.17 & 65.27 &  50.61 & 52.40 & 56.18 & 53.41 & 61.87 & 48.53 & 46.52\\
        & & \se{0.08} & \se{0.32} & \se{0.10} & \se{0.32} & \se{0.31} & \se{0.16} & \se{0.29} &\se{0.26} & \se{0.13} & \se{0.42} & \se{0.14} & \se{0.11}\\ \cline{2-14}
          Ex 7 & 4 & 29.68 & 63.85 & \best{31.52} & 63.87 & 63.81 &  41.97 & 44.80 & 50.46 & 39.04 & 57.77 & 36.48 & 32.32\\ 
       & & \se{0.07} &\se{0.28}& \se{0.11} & \se{0.28} & \se{0.28} &  \se{0.14} & \se{0.09} & \se{0.28} & \se{0.15} &\se{0.48} & \se{0.14} & \se{0.12}\\ \cline{2-14}
          &6 & 23.24 & 62.68 & \best{25.96} & 62.73 &62.91 & 40.05 & 43.43 & 49.26 & 34.07 & 56.05 & 31.64 & 26.82 \\ 
          & & \se{0.08}& \se{0.26} & \se{0.11} & \se{0.26} & \se{0.28} &  \se{0.16} &\se{0.10}& \se{0.31} & \se{0.13} & \se{0.41} & \se{0.17} & \se{0.12}\\ \hline   
          & 2 & 33.60 & 49.85 & 41.20 & 47.82 & 47.64& 40.37 & 37.72 & 38.72 & 39.33 & 48.03 & 37.49 &\best{35.01}\\
        & & \se{0.04} & \se{0.06} & \se{0.39} & \se{0.14} & \se{0.17} & \se{0.16} & \se{0.18} & \se{0.24} & \se{0.10} & \se{0.16} & \se{0.21} & \se{0.09}\\ \cline{2-14}
          Ex 8 & 4 & 28.72 & 49.85 & 44.86 & 47.94 & 47.51 & 40.80 & 37.41 & 38.82 & 35.46 & 46.80 & 33.81 & \best{30.50}\\ 
       & & \se{0.05} &\se{0.06} & \se{0.52} & \se{0.12} & \se{0.14} & \se{0.21} & \se{0.18} & \se{0.25} & \se{0.12} & \se{0.20} & \se{0.23} & \se{0.13}\\ \cline{2-14}
          &6 &25.54 & 49.79 & 47.46 & 48.08 & 47.28 & 41.49 & 38.70 & 38.43 & 33.59 & 46.04 & 32.88 &\best{28.30}\\   
          & & \se{0.04}& \se{0.07} & \se{0.29} & \se{0.10} & \se{0.15} &  \se{0.23} & \se{0.20}& \se{0.23} & \se{0.14} & \se{0.19} & \se{0.29} & \se{0.29}\\ \hline
    \end{tabular}

    Boldface character signifies the best result in each case.}
    \caption{Average misclassification rates (in \%) of different classifiers in Examples~\ref{example2} -- \ref{example8} for varying $d$. The corresponding standard errors are reported in the next line within brackets in a smaller font.\label{tab:simulation_MD_2-8}}

\vspace{-0.1in}
\end{table}

\begin{ex}\label{example5}
Here, the underlying distributions of the two competing classes are $N_d({\bf 0}_d,3{\bf I}_d)$ and $t_{3,d}({\bf 0}_d,{\bf I}_d)$, the standard multivariate $t$ distribution with $3$ degrees of freedom.
\end{ex}
These two distributions have the same locations and covariance matrices, while their shapes are different. In this example, the MD classifier outperforms all other classifiers. Only nonlinear SVM and Random Forest have somewhat competitive performances.

\begin{ex}\label{example6}
We consider three spherical distributions \citep[e.g.,][]{fang2018symmetric}. A spherical distribution is completely specified by the distribution of its radius $R$. The radius $R$ follows the uniform distribution $U(0,10)$ for Class-1 and the normal distribution $N(5.5,\sigma^2)$ for Class-2. For Class-3, it is distributed as $cY$, where $Y$ follows a beta distribution with parameters $0.5$ and $0.5$. We choose $\sigma^2$ and $c$ in such a way that $E(R^2)$ is the same in all three cases.
\end{ex}
This is a difficult classification problem, where all competing classes have the same location vector and covariance matrix. So, it is not surprising to see that many classifiers perform like the random classifier and have misclassification rates close to $2/3$. The MD classifier has the lowest misclassification rate. Among the rest, nonlinear SVM has a somewhat better result.

\begin{ex}\label{example7}
This is again a three class problem. The classes have normal distributions with the centres at the origin, differing only in their correlation structures. The corresponding covariance matrices have diagonal elements $1$, while all the off-diagonal elements are 0.1, 0.5 and 0.9 for the three classes, respectively.
\end{ex}
In this example with normal distributions differing in their covariance structures, as expected, QDA has the best performance. But, the MD classifier closely follows QDA and has much better performance than all other classifiers considered here.

\begin{ex}\label{example8}
We consider a classification problem between a  normal and a  Cauchy distribution with the same location ${\bf 0}_d$ and the same scatter matrix having all diagonal elements equal to $1$ and all off-diagonal elements equal to $0.1$.
\end{ex}
Since the Cauchy distribution does not have finite moments, in this example, we use the MCD estimates \citep{rousseeuw1999fast} of the scatter matrices (based on 75\% observations) to compute the empirical Mahalanobis distances. We use these estimates for LDA and QDA as well. Here also, the MD classifier outperforms its competitors and the difference becomes more prominent as the dimension increases.

\section{Classifier Based on Local Mahalanobis Distances\label{sec:LMD classifier}}

We have seen that the MD classifier performs very well when the underlying class distributions are elliptic. But, when the underlying densities are not elliptic, Mahalanobis distances may fail to extract substantial information about them. Especially, in the case of multimodal distributions, the local nature of the density functions may not be captured well by Mahalanobis distances. As a result, in a classification problem involving non-elliptic distributions, the proposed MD classifier may fail to have satisfactory performance (this is demonstrated later in the paper). To cope with such situations, we define a local version of the Mahalanobis distance. To this effect, first note that for any $\vec x\in \R^d$,
\[
(\vec x-\muvec)^\top \sigmat^{-1} (\vec x-\muvec) = E\left\{(\vec x-\vec X)^\top \sigmat^{-1}(\vec x-\vec X)\right\} - d,
\]
where the random vector $\vec X$ follows a $d$-dimensional distribution with mean $\muvec$ and covariance $\sigmat$. Similarly, for a collection of observations $\vec x_1,\ldots,\vec x_n$, if we define $\widehat\muvec = \bar{\vec x} = n^{-1} \sum_{i=1}^n \vec x_i$ and $\widehat\sigmat = \vec S = n^{-1} \sum_{i=1}^n (\vec x_i-\bar{\vec x})(\vec x_i-\bar{\vec x})^\top$ to be the moment-based estimators of location and scatter, then we have
\[
(\vec x-\widehat\muvec)^\top \widehat\sigmat^{-1}(\vec x-\widehat\muvec) = \frac{1}{n}\sum_{i=1}^{n}(\vec x-\vec x_i)^\top \widehat\sigmat^{-1}(\vec x-\vec x_i) - d.
\]
So, working with the squared Mahalanobis distance (respectively, its sample analog) is equivalent to working with its location shift version $E\{(\vec x-\vec X)^\top \sigmat^{-1}(\vec x-\vec X)\}$ (respectively, $n^{-1} \sum_{i=1}^{n}(\vec x-\vec x_i)^\top \widehat\sigmat^{-1}(\vec x-\vec x_{i})$). One can notice that additive functions of Mahalanobis distances are also additive functions of the location shift versions of squared Mahalanobis distances, so Theorem~\ref{thm:MD_GAM} remains valid for this version as well. Now, for a fixed point $\vec x$, $E\{(\vec x-\vec X)^\top\sigmat^{-1}(\vec x-\vec X)\}$ puts equal weight on all points $\vec X$ on the support of the distribution, irrespective of its relative position with respect to $\vec x$. Similarly, $n^{-1}\sum_{i=1}^{n}(\vec x-\vec x_i)^\top \widehat\sigmat^{-1}(\vec x-\vec x_i)$ puts equal weight on all observations $\vec x_1,\ldots,\vec x_n$. To capture the local nature of the underlying distribution, we need to use a weighted average, giving more weights to the points near $\vec x$. We achieve this by using the idea of kernel density estimation. Using a kernel function $K:\R^d \to \R_{+}$, which is the density function of a $d$-variate spherical distribution symmetric about the origin (i.e., $K(\vec t)=\Psi(\vec t^\top\vec t), \vec t \in \R^d$ for some decreasing function $\Psi: \R_{+} \to \R_{+}$), we define 
\begin{align*}
    \beta_{h}(\mathbf{x})
= & E\left[K\Big(\frac{\sigmat^{-1/2}(\mathbf{x}-\mathbf{X})}{h}\Big)(\mathbf{x}-\mathbf{X})^\top{\sigmat}^{-1}(\mathbf{x}-\mathbf{X})\right]\\
= & E\left[\Psi\Big(\frac{1}{h^2}{(\mathbf{x}-\mathbf{X})^{\top}{\sigmat}^{-1}(\mathbf{x}-\mathbf{X})}\Big)(\mathbf{x}-\mathbf{X})^\top{\sigmat}^{-1}(\mathbf{x}-\mathbf{X})\right]
\end{align*}
The behaviour of the function $\beta_h(\cdot)$ depends on the tuning parameter $h$. We call $h$ the ``localization parameter" as it determines the local weighting scheme in the calculation of $\beta_h(\cdot)$. The function $\beta_h$ exhibits different behaviors depending on the value of $h$. For large values of $h$, it behaves similar to the Mahalanobis distance. On the other hand, for small values of $h$, it behaves similar to the density of the underlying distribution. The formal result is given below.
\begin{lemma}\label{lemma:LMD_asymptotics}
(a) If $K$ is continuous at $\vec 0$, then $\beta_h(\vec x) \to K({\bf 0})\{(\vec x-\muvec)^\top\sigmat^{-1}(\vec x-\muvec)+d\}$ as $h \to \infty$.\\
(b) If $\int \|\vec z\|^3 K(\vec z)\,d\vec z < \infty$ and the gradient of the density $f$ of $\vec X$ is bounded (i.e., $\sup_{\vec x} \|\nabla f(\vec x)\|$ is finite), then $\beta_h(\vec x)/h^{d+2} \to |\sigmat|^{1/2} \kappa_2 f(\vec x)$ as $h \to 0$, where $\kappa_2 = \int \|\vec z\|^2\,K(\vec z)\,d\vec z$.
\end{lemma}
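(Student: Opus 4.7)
The plan is to handle the two parts separately, both via dominated convergence, with part (b) preceded by a standard kernel-style change of variables.

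For part (a), I would first observe that since $K(\vec t)=\Psi(\vec t^\top\vec t)$ with $\Psi$ decreasing, $K$ attains its supremum at the origin, so $K(\cdot)\le K(\vec 0)$ globally. As $h\to\infty$, for each realisation of $\vec X$ the argument $\sigmat^{-1/2}(\vec x-\vec X)/h$ tends to $\vec 0$, and continuity of $K$ at the origin gives pointwise convergence of the kernel weight to $K(\vec 0)$. The integrand defining $\beta_h(\vec x)$ is then dominated by $K(\vec 0)(\vec x-\vec X)^\top\sigmat^{-1}(\vec x-\vec X)$, which has finite expectation because $\vec X$ has finite second moments. Dominated convergence yields $\beta_h(\vec x)\to K(\vec 0)\,E[(\vec x-\vec X)^\top\sigmat^{-1}(\vec x-\vec X)]$, and the identity $E[(\vec x-\vec X)^\top\sigmat^{-1}(\vec x-\vec X)] = (\vec x-\muvec)^\top\sigmat^{-1}(\vec x-\muvec)+d$ noted just before the statement finishes the argument.

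For part (b), I would write $\beta_h(\vec x)$ as the integral $\int K(\sigmat^{-1/2}(\vec x-\vec y)/h)(\vec x-\vec y)^\top\sigmat^{-1}(\vec x-\vec y) f(\vec y)\,d\vec y$ and perform the substitution $\vec z=\sigmat^{-1/2}(\vec x-\vec y)/h$, so that $\vec y=\vec x-h\sigmat^{1/2}\vec z$, the quadratic form becomes $h^2\|\vec z\|^2$, and the Jacobian contributes $h^d|\sigmat|^{1/2}$. This produces
\[
\beta_h(\vec x)=h^{d+2}|\sigmat|^{1/2}\int K(\vec z)\|\vec z\|^2 f(\vec x-h\sigmat^{1/2}\vec z)\,d\vec z,
\]
so the claim becomes showing $\int K(\vec z)\|\vec z\|^2 f(\vec x-h\sigmat^{1/2}\vec z)\,d\vec z\to f(\vec x)\kappa_2$. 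I would split this integral into the leading piece $f(\vec x)\int K(\vec z)\|\vec z\|^2\,d\vec z=\kappa_2 f(\vec x)$ plus a remainder $R_h=\int K(\vec z)\|\vec z\|^2\big[f(\vec x-h\sigmat^{1/2}\vec z)-f(\vec x)\big]\,d\vec z$. The bounded-gradient hypothesis yields the mean-value bound $|f(\vec x-h\sigmat^{1/2}\vec z)-f(\vec x)|\le M h\|\sigmat^{1/2}\vec z\|\le Ch\|\vec z\|$ with $M=\sup_{\vec x}\|\nabla f(\vec x)\|$, so $|R_h|\le Ch\int K(\vec z)\|\vec z\|^3\,d\vec z\to 0$ by the third-moment hypothesis on $K$.

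The main conceptual step is recognising that the $h^{d+2}$ normalisation arises as the product of the Jacobian $h^d$ and the quadratic factor $h^2$; once that is in place, everything reduces to dominated convergence plus a one-line Lipschitz estimate. I do not foresee a subtle obstacle: the delicate points are only that part (a) needs boundedness of $K$ (supplied by monotonicity of $\Psi$) and that part (b) needs exactly $\int K(\vec z)\|\vec z\|^3\,d\vec z<\infty$ to kill the remainder, both of which are furnished by the stated hypotheses.
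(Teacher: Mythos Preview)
Your proposal is correct and mirrors the paper's proof almost exactly: part (a) uses the bound $K\le K(\vec 0)$ together with dominated convergence, and part (b) uses the same change of variables $\vec z=\sigmat^{-1/2}(\vec x-\vec y)/h$ followed by a first-order Taylor/mean-value estimate that turns the remainder into $O(h)\int\|\vec z\|^3K(\vec z)\,d\vec z$. The only cosmetic difference is that the paper writes the Taylor expansion with an explicit intermediate point $\xivec$ and bounds $\vec z^\top\sigmat^{1/2}\nabla f(\xivec)$ via Cauchy--Schwarz, whereas you phrase it as a Lipschitz bound on $f$; the resulting estimates are identical.
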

Based on this result, we define the local (squared) Mahalanobis distance (LMD) as
\begin{align}\label{eq:LMD_def}
\gamma^{h}(\vec x) = \begin{cases} 
    \beta_{h}(\vec x) & \text{if } h>1, \\
    \beta_{h}(\vec x)/{h^{d+2}} & \text{if } h\leq 1. \end{cases}
\end{align}
Lemma~\ref{lemma:LMD_asymptotics} shows that for a small value of $h$, $\gamma^h$ contains information about the underlying density function. Hence, it is quite meaningful to use them as features to construct a classification rule. In fact, the Bayes classifier is related to (the limiting value of) $\gamma^{h}$ via a GAM, as shown below.

\begin{thm}\label{thm:LMD_GAM}
In a $J$-class classification problem, if the underlying class distributions are absolutely continuous, then the posterior probabilities satisfy an additive logistic regression model given by
\[
p(j \mid \vec x) = \frac{\exp \Big(\zeta_j\big(\Gamma^{0}(\vec x)\big)\Big)}{1+\sum_{k=1}^{J-1}\exp\Big(\zeta_k\big(\Gamma^{0}(\vec x)\big)\Big)} \text{ for } j=1,\ldots,J-1 \text{ and } p(J \mid \vec x) = \frac{1}{1+\sum_{k=1}^{J-1}\exp\Big(\zeta_k\big(\Gamma^{0}(\vec x)\big)\Big)},
\]
where $\zeta_j\big(\Gamma^{0}(\vec x)\big) = \sum_{k=1}^{J} \zeta_{jk}\big(\gamma^{0}_k(\vec x)\big)$ is an additive function of $\Gamma^{0}(\vec x)=\big(\gamma^{0}_1(\vec x),\ldots,\gamma^{0}_J(\vec x)\big)$, and $\gamma^{0}_j(\vec x)=\lim_{h \downarrow 0} \gamma_j^{h}(\vec x)$ for $j=1,\ldots,J$.
\end{thm}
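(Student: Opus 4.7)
The plan is to apply Lemma~\ref{lemma:LMD_asymptotics}(b) separately to each class in order to identify the limiting feature $\gamma_j^{0}(\vec x)$ with the class density $f_j(\vec x)$ (up to a known positive constant), and then read off the GAM representation directly from Bayes' rule.

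For each $j=1,\ldots,J$, denote by $\beta_{h,j}(\vec x)$ the integral in \eqref{eq:LMD_def} computed with $\vec X$ drawn from the $j$-th class (with location $\muvec_j$, scatter $\sigmat_j$ and density $f_j$), so that $\gamma_j^{h}(\vec x) = \beta_{h,j}(\vec x)/h^{d+2}$ for $h\le 1$. Lemma~\ref{lemma:LMD_asymptotics}(b) then gives
\[
\gamma_j^{0}(\vec x) = \lim_{h\downarrow 0} \gamma_j^{h}(\vec x) = |\sigmat_j|^{1/2}\,\kappa_2\,f_j(\vec x),
\]
so each coordinate of $\Gamma^0(\vec x)$ is a positive constant multiple of the corresponding class density.

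Next, Bayes' rule $p(j\mid \vec x) = \pi_j f_j(\vec x)/\sum_{k=1}^{J}\pi_k f_k(\vec x)$ combined with the substitution $f_j(\vec x) = \gamma_j^{0}(\vec x)/(|\sigmat_j|^{1/2}\kappa_2)$ yields, upon taking the log-odds against the reference class $J$,
\[
\log\frac{p(j\mid \vec x)}{p(J\mid \vec x)} = \log\frac{\pi_j\,|\sigmat_J|^{1/2}}{\pi_J\,|\sigmat_j|^{1/2}} + \log\gamma_j^{0}(\vec x) - \log\gamma_J^{0}(\vec x),\qquad j=1,\ldots,J-1.
\]
The $\kappa_2$ factors cancel, and the right-hand side is manifestly additive in the coordinates of $\Gamma^0(\vec x)$. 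Choosing $\zeta_{jj}(t) = \log t + \log\{\pi_j |\sigmat_J|^{1/2} / (\pi_J|\sigmat_j|^{1/2})\}$, $\zeta_{jJ}(t) = -\log t$ when $j<J$, and $\zeta_{jk}\equiv 0$ otherwise realizes the decomposition $\zeta_j(\Gamma^0(\vec x)) = \sum_{k=1}^{J} \zeta_{jk}(\gamma_k^{0}(\vec x))$ with $\zeta_j(\Gamma^0(\vec x))$ equal to the displayed log-odds.

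Exponentiating this identity and imposing $\sum_{j=1}^{J}p(j\mid \vec x) = 1$ then reproduces the two softmax expressions in the theorem statement. The main technical point is confirming that the regularity hypotheses of Lemma~\ref{lemma:LMD_asymptotics}(b) hold for each class, namely that every $f_j$ has a bounded gradient and the kernel satisfies $\int \|\vec z\|^3 K(\vec z)\,d\vec z < \infty$; these conditions are implicit in the theorem's assumption of absolute continuity together with the kernel choice. A minor bookkeeping item is that $\log\gamma_j^{0}(\vec x)$ equals $-\infty$ on the null set $\{\vec x:\,f_j(\vec x)=0\}$, but $p(j\mid \vec x)$ vanishes there as well, so the identity holds on the support of $\sum_k \pi_k f_k$ under the convention $e^{-\infty}=0$.
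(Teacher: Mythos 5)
Your proposal is correct and follows essentially the same route as the paper: invoke Lemma~\ref{lemma:LMD_asymptotics}(b) to identify $\gamma_j^{0}(\vec x)$ with $|\sigmat_j|^{1/2}\kappa_2 f_j(\vec x)$, substitute into the log-odds from Bayes' rule, and read off the additive decomposition before normalizing to the softmax form, exactly as in the paper's proof (which defers the last step to the proof of Theorem~\ref{thm:MD_GAM}). Your explicit choice of the $\zeta_{jk}$ and your remarks on the regularity hypotheses of Lemma~\ref{lemma:LMD_asymptotics}(b) and the null set where $f_j$ vanishes are welcome additional care, but they do not change the argument.
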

So, instead of MD, we can use LMD (with a suitable choice of $h$) as features and construct a classifier using GAM as before. In practice, we compute $\widehat\sigmat_1,\ldots,\widehat\sigmat_J$ from the data to get 
\begin{align*}
\widehat\beta_{h,j}(\vec x) &= \frac{1}{n_j}\sum_{i=1}^{n_j}\left\{K\bigg(\frac{\widehat\sigmat_j^{-1/2}(\vec x-\vec x_{ji})}{h}\bigg)(\vec x-\vec x_{ji})^\top{\widehat\sigmat}_j^{-1}(\vec x-\vec x_{ji})\right\}\\
&= \frac{1}{n_j}\sum_{i=1}^{n_j}\left\{\Psi\bigg(\frac{1}{h^2}{(\vec x-\vec x_{ji})^{\top}{\widehat\sigmat}_j^{-1}(\vec x-\vec x_{ji})}\bigg)(\vec x-\vec x_{ji})^\top{\widehat\sigmat}_j^{-1}(\vec x-\vec x_{ji})\right\}, \qquad j=1,\ldots,J.
\end{align*}
Estimates of the LMDs are obtained using the formula
\begin{align}\label{eq:LMD_empirical}
\widehat\gamma_j^{h}(\vec x) = \begin{cases}
    \widehat\beta_{h,j}(\vec x) & \text{if } h>1,\\
    \widehat\beta_{h,j}(\vec x)/{h^{d+2}} & \text{if } h \leq 1.\end{cases}
\end{align}
For the empirical version of LMD, we have a uniform convergence result similar to Lemma~\ref{lemma:MD_uniform_convergence} if the function $\Psi$ is bounded and Lipschitz continuous. The result is stated below.

\begin{lemma}\label{lemma:LMD_uniform_convergence}
Let $\vec X_1,\ldots,\vec X_n$ be independent copies of a $d$-dimensional random vector  $\vec X$, which has finite fourth moments. Let $\gamma^{h}(\cdot)$ be the LMD as defined in \eqref{eq:LMD_def} and $\widehat\gamma^{h}(\cdot)$ be its empirical version as defined in \eqref{eq:LMD_empirical}, based on $\vec X_1,\ldots,\vec X_n$. Suppose that the function $\Psi$ is bounded and Lipschitz continuous, and $\widehat\sigmat^{-1}$ is a $\sqrt{n}$-consistent estimator of $\sigmat^{-1}$, i.e., $\|\widehat\sigmat^{-1}\mkern-2mu-\sigmat^{-1}\|_{\rm F} = O_P(n^{-1/2})$. Then, for any fixed $h>0$ and any compact set $\mathcal C \subset \R^d$, we have $\sup_{\vec x \in \mathcal C} \big|\widehat\gamma^{h}(\vec x)-\gamma^{h}(\vec x)\big| = O_P(n^{-1/2})$.
\end{lemma}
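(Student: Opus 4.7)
The plan is to reduce to showing $\sup_{\vec x \in \mathcal C}|\widehat\beta_{h}(\vec x) - \beta_h(\vec x)| = O_P(n^{-1/2})$, since $\gamma^{h}$ and $\beta_h$ differ by at most the fixed constant factor $h^{d+2}$ (depending on whether $h\leq 1$ or $h>1$), which does not affect the rate. I would introduce the intermediate ``oracle'' quantity
\[
\widetilde\beta_h(\vec x) = \frac{1}{n}\sum_{i=1}^n \Psi\Big(h^{-2}(\vec x-\vec X_i)^\top\sigmat^{-1}(\vec x-\vec X_i)\Big)\,(\vec x-\vec X_i)^\top\sigmat^{-1}(\vec x-\vec X_i),
\]
which uses the true scatter $\sigmat$, and split $\widehat\beta_h - \beta_h = (\widehat\beta_h - \widetilde\beta_h) + (\widetilde\beta_h - \beta_h)$ by the triangle inequality. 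Bounding these two pieces separately decouples the plug-in error from the sampling error.

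For the plug-in error, write $q_i(\vec x) = (\vec x-\vec X_i)^\top\sigmat^{-1}(\vec x-\vec X_i)$ and $\widehat q_i(\vec x) = (\vec x-\vec X_i)^\top\widehat\sigmat^{-1}(\vec x-\vec X_i)$. Using the identity $\Psi(\widehat q/h^2)\widehat q - \Psi(q/h^2)q = \Psi(\widehat q/h^2)(\widehat q-q) + (\Psi(\widehat q/h^2)-\Psi(q/h^2))q$ together with the bound $M$ on $\Psi$ and the Lipschitz constant $L$, I would obtain
\[
|\widehat\beta_h(\vec x) - \widetilde\beta_h(\vec x)| \leq \frac{1}{n}\sum_{i=1}^n |\widehat q_i(\vec x) - q_i(\vec x)|\,\big(M + L h^{-2} q_i(\vec x)\big).
\]
Since $|\widehat q_i(\vec x) - q_i(\vec x)| \leq \|\widehat\sigmat^{-1}-\sigmat^{-1}\|_{\rm F}\|\vec x-\vec X_i\|^2$ and $q_i(\vec x) \leq \|\sigmat^{-1}\|_{\rm op}\|\vec x-\vec X_i\|^2$, and $\vec x$ ranges over a compact set $\mathcal C$, a uniform bound of the form $O_P(n^{-1/2})\cdot n^{-1}\sum_i(1+\|\vec X_i\|)^4$ follows. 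The second factor is $O_P(1)$ by the strong law of large numbers under the fourth moment assumption, so the plug-in error is $O_P(n^{-1/2})$ uniformly over $\mathcal C$.

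For the sampling error, I would use empirical process theory. Writing $g_{\vec x}(\vec y) = \Psi\big(h^{-2}q(\vec x,\vec y)\big) q(\vec x,\vec y)$ with $q(\vec x,\vec y) = (\vec x-\vec y)^\top\sigmat^{-1}(\vec x-\vec y)$, we have $\widetilde\beta_h(\vec x) - \beta_h(\vec x) = (\mathbb P_n - P) g_{\vec x}$. The class $\mathcal G = \{g_{\vec x}:\vec x\in\mathcal C\}$ has envelope $G(\vec y) = \sup_{\vec x \in \mathcal C}|g_{\vec x}(\vec y)| \leq C(1+\|\vec y\|)^2$, which is square-integrable under the fourth moment condition. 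A direct computation using the boundedness and Lipschitz property of $\Psi$ gives the pointwise Lipschitz bound $|g_{\vec x_1}(\vec y)-g_{\vec x_2}(\vec y)| \leq C'(1+\|\vec y\|)^3\|\vec x_1-\vec x_2\|$, so $\mathcal G$ is a parametric class of Lipschitz-indexed functions over a compact set and has $L^2$-bracketing numbers of order $(1/\epsilon)^d$. The bracketing entropy integral therefore converges, making $\mathcal G$ a $P$-Donsker class (see, e.g., Example~19.7 of van der Vaart, \emph{Asymptotic Statistics}). This yields $\sqrt{n}(\widetilde\beta_h-\beta_h) \Rightarrow$ a tight Gaussian process in $\ell^\infty(\mathcal C)$, and hence $\sup_{\vec x\in\mathcal C}|\widetilde\beta_h(\vec x)-\beta_h(\vec x)| = O_P(n^{-1/2})$. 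Combining the two bounds completes the proof.

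The main obstacle is handling the interaction between the plug-in error and the empirical-average error in a uniform manner while retaining the exact $\sqrt{n}$ rate (without spurious logarithmic factors). The key enabling fact is the fourth moment assumption, which simultaneously provides a square-integrable envelope for the Donsker argument and an $L^1$-integrable bound for the stochastic terms $n^{-1}\sum_i(1+\|\vec X_i\|)^4$ appearing in the plug-in step. Verifying the Donsker property of $\mathcal G$ (via bracketing) is the most technical piece, but it is essentially routine once the Lipschitz-in-$\vec x$ structure and the envelope bound are in place.
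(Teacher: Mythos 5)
Your decomposition matches the paper's in substance. The paper splits $|\widehat\beta_h(\vec x)-\beta_h(\vec x)|$ into three terms $A_{1,n}+A_{2,n}+A_{3,n}$; your single ``plug-in'' term $\widehat\beta_h-\widetilde\beta_h$ is exactly $A_{1,n}+A_{2,n}$ (your identity $\Psi(\widehat q/h^2)\widehat q-\Psi(q/h^2)q=\Psi(\widehat q/h^2)(\widehat q-q)+\{\Psi(\widehat q/h^2)-\Psi(q/h^2)\}q$ is precisely the paper's split), and you bound it with the same ingredients: boundedness of $\Psi$ for the first piece, Lipschitz continuity for the second, the bound $|\widehat q_i(\vec x)-q_i(\vec x)|\le\|\widehat\sigmat^{-1}\mkern-2mu-\sigmat^{-1}\|_{\rm F}\,\|\vec x-\vec X_i\|^2$, and $n^{-1}\sum_i\|\vec X_i\|^4=O_P(1)$. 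Your $\widetilde\beta_h-\beta_h$ is the paper's $A_{3,n}$, and that is where the two arguments genuinely diverge: the paper only checks $\sup_{\vec x\in\mathcal C}\mathrm{Var}(Y)<\infty$ and invokes the CLT pointwise, asserting uniformity over $\mathcal C$ without an equicontinuity argument, whereas you supply an explicit uniform bound via bracketing and the Donsker property. Your route is the more rigorous one, and it addresses exactly the step the paper glosses over.

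One caveat in that step, however. As written, the bracketing argument (van der Vaart's Example~19.7) converts an $\epsilon$-net of $\mathcal C$ into $L_2(P)$-brackets only if the Lipschitz majorant is square-integrable; your majorant is $m(\vec y)=C'(1+\|\vec y\|)^3$, so you need $E\|\vec X\|^6<\infty$, which is stronger than the lemma's fourth-moment hypothesis. (The envelope $G(\vec y)\le C(1+\|\vec y\|)^2$ is fine under fourth moments; it is the Lipschitz-in-$\vec x$ bound that is not.) To close this under the stated assumptions you would need either a truncation argument, or an additional decay condition such as $\sup_t|t\,\Psi(t)|<\infty$ together with global Lipschitz continuity of $t\mapsto t\Psi(t)$ (true for the Gaussian kernel), which makes $g_{\vec x}$ uniformly bounded with a Lipschitz majorant of order $1+\|\vec y\|$ and reduces the moment requirement. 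Apart from this mismatch, the proof is sound and, on the sampling-error term, more careful than the paper's.
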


Note that if we use the Gaussian kernel (i.e., $K(\vec t) = (2\pi)^{-d/2} e^{-(\vec t^\top\vec t)/2}$), then the corresponding function $\Psi(t)=(2\pi)^{-d/2}e^{-t^2/2}$ satisfies the properties mentioned in the above lemma. Also, although we have proved the lemma for a fixed $h$, from the proof it is easy to see that the same result holds for a sequence $\{h_n\}$ that is uniformly bounded away from $0$. That is, if $\{h_n\}$ is a sequence of real numbers satisfying $\inf_{n \ge 1} h_n \ge h_0$ for some $h_0>0$, then $\sup_{\vec x \in \mathcal C} |\widehat\gamma^{h_n}(\vec x) - \gamma^{h_n}(\vec x)| = O_P(n^{-1/2})$. So, for arbitrarily large $h$, the estimated LMD is close to its population counterpart, which in turn behaves like the shifted squared Mahalanobis distance. The case of small $h$, however, is more tricky. For a general sequence $\{h_n\}$, we have $\sup_{\vec x \in \mathcal C} |\widehat\gamma^{h_n}(\vec x)-\gamma^{h_n}(\vec x)| = O_P\big(n^{-1/2}h_n^{-(d+2)} \wedge n^{-1/2} h_n^{-(d+4)}\big)$. Thus, for $h_n \to 0$, we have the uniform convergence $\sup_{\vec x \in \mathcal C} |\widehat\gamma^{h_n}(\vec x) - \gamma^{h_n}(\vec x)| \overset{P}{\to} 0$ as $n \to \infty$ if $\sqrt{n}h_n^{d+4}\to \infty$. So, we cannot use an arbitrarily small value of $h$ in practice. It should decrease at an appropriate rate as the sample size increases.

To demonstrate the empirical behaviour of $\widehat\gamma^{h}(\cdot)$ for varying choices of $h$, we consider three simple examples, one involving elliptic and the other two involving non-elliptic bivariate distributions. The density contours of these distributions are shown on the first column in Figure~\ref{fig:choice_of_h_contour_plots}. In each case, we generate 1000 observations and compute the estimated MD contours (shown on the second column), and the estimated LMD contours for a small and a large value of $h$ (shown on the third and the {fourth} columns, respectively). For the first example, we generate observations from $N_2(\vec 0,\sigmat)$, where $\sigmat=0.5{\vec I}_2+0.5{\bf 1}_2{\bf 1}_2^\top$. Here, estimated MD contours almost coincide with the density contours. The LMD contours also well approximate the density contours both for large and small values of $h$, but the approximation is much better when a larger value is used. For the second example, we generate observations from the bivariate Laplace distribution with density $f(x_1,x_2) \propto \exp\{-(|x_1|+|x_2|)\}$. In this case, the estimated MD contours and the estimated LMD contours with large $h$ are very different from the density contours, but the LMD contours with small $h$ approximate the density contours well. We observe a similar phenomenon for the third example, involving the mixture normal distribution $0.5 N_2({\bf 1}_2,\sigmat)+0.5N_2(-{\bf 1}_2,\sigmat)$, where $\sigmat = \mathbf{I}_2 + 4 \vec 1_{2} \vec 1_{2}^\top$. Figure~\ref{fig:choice_of_h_contour_plots} clearly shows that for non-elliptic distributions, it is better to use LMD with small $h$ to capture more information about the density. On the other hand, for elliptic distributions, LMD with large $h$ and MD behave similarly, and they approximate the density contours better than {the} estimated LMD with small $h$, which has relatively higher stochastic variation.

\begin{figure}[h]
\centering
\setlength{\tabcolsep}{2pt}
\begin{tabular}{cccc}
\multicolumn{4}{c}{(a) Normal distribution} \\ [5pt]
Density & MD & LMD with $h=0.4$ & LMD with $h=5$ \\
\includegraphics[width=0.22\linewidth]{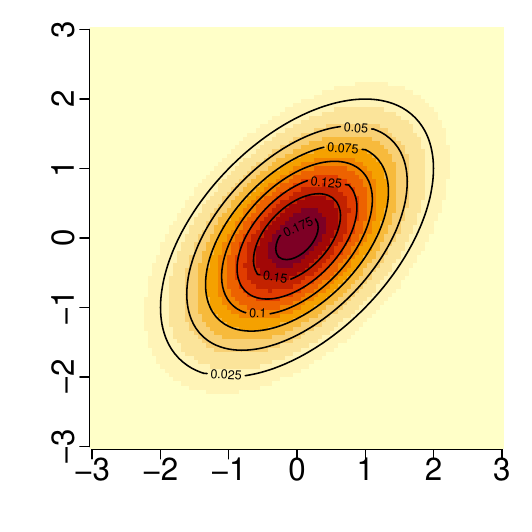} & \includegraphics[width=0.22\linewidth]{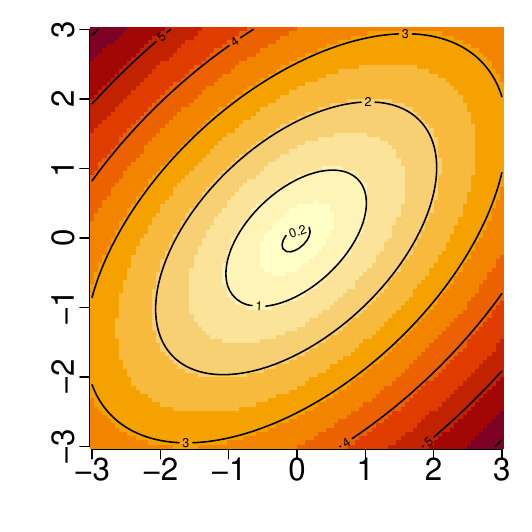} & \includegraphics[width=0.22\linewidth]{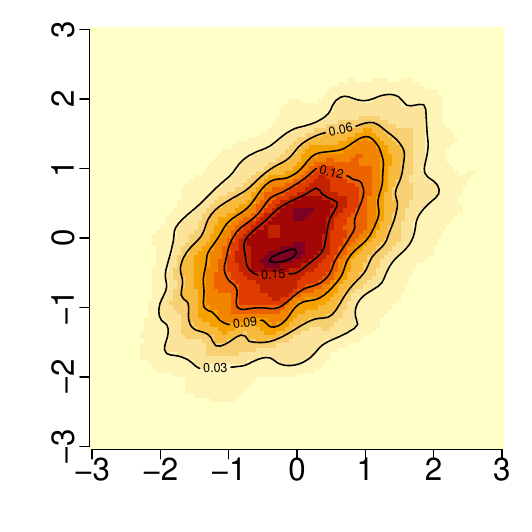} &  \includegraphics[width=0.22\linewidth]{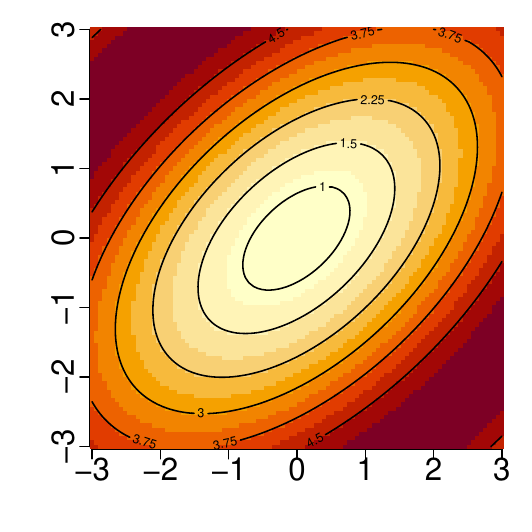} \\ [2pt]
\multicolumn{4}{c}{(b) Double Exponential distribution} \\ [5pt]
Density & MD & LMD with $h=0.3$ & LMD with $h=4$ \\
\includegraphics[width=0.22\linewidth]{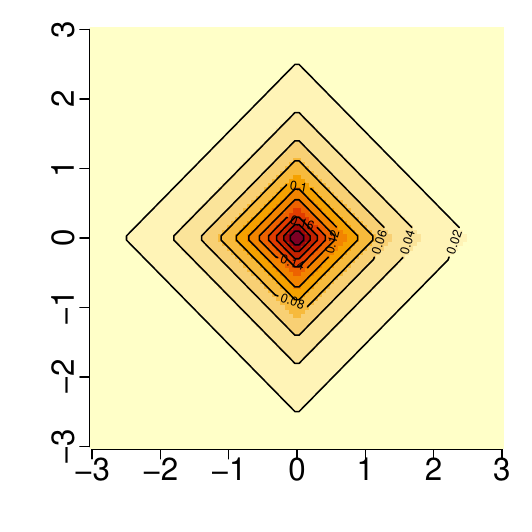} & \includegraphics[width=0.22\linewidth]{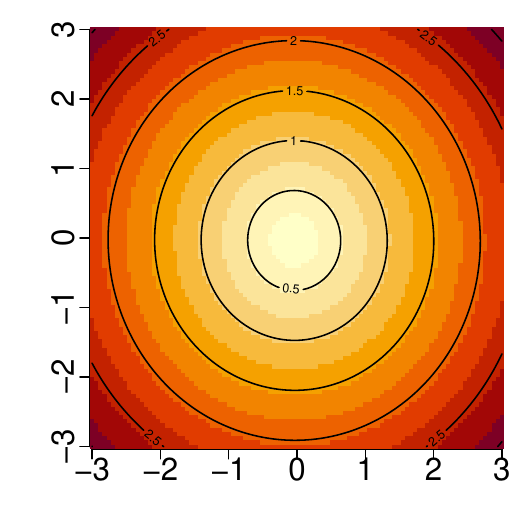} & \includegraphics[width=0.22\linewidth]{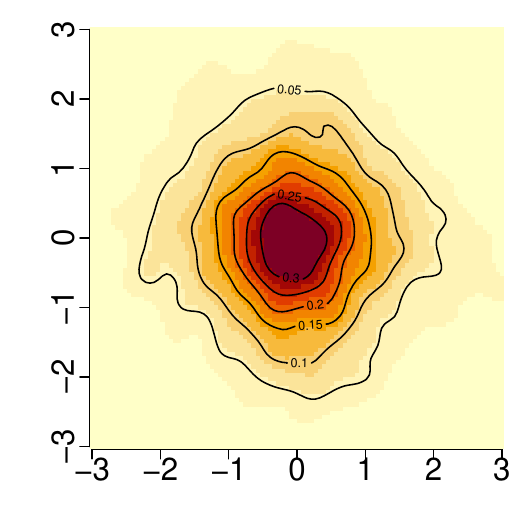} & \includegraphics[width=0.22\linewidth]{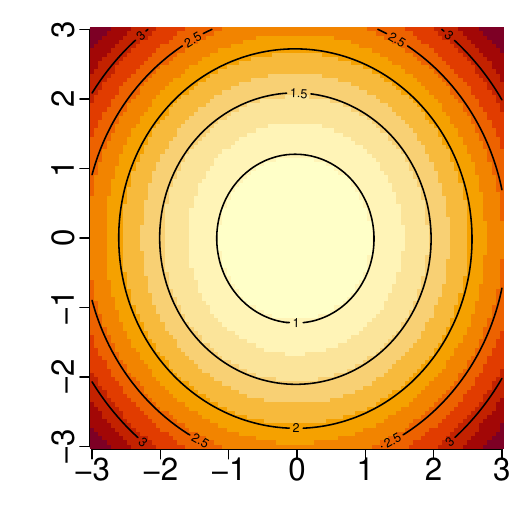} \\
\multicolumn{4}{c}{(c) Mixture Normal distribution} \\ [5pt]
Density & MD & LMD with $h=0.4$ & LMD with $h=5$ \\
\includegraphics[width=0.22\linewidth]{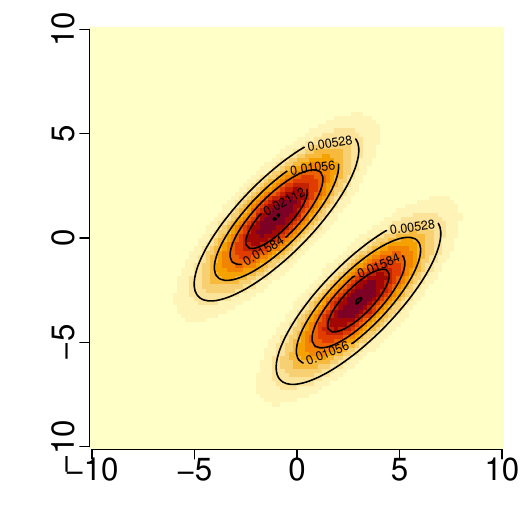} & \includegraphics[width=0.22\linewidth]{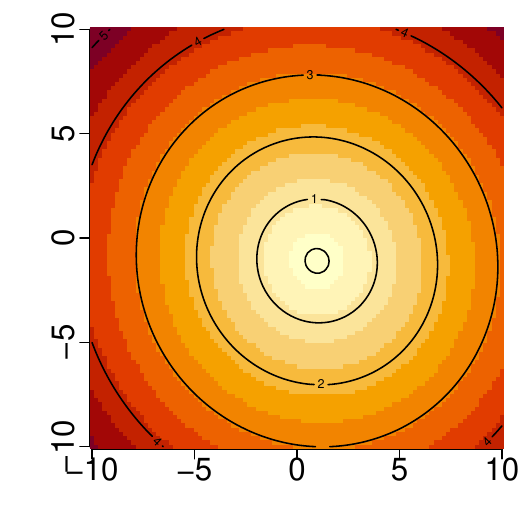} & \includegraphics[width=0.22\linewidth]{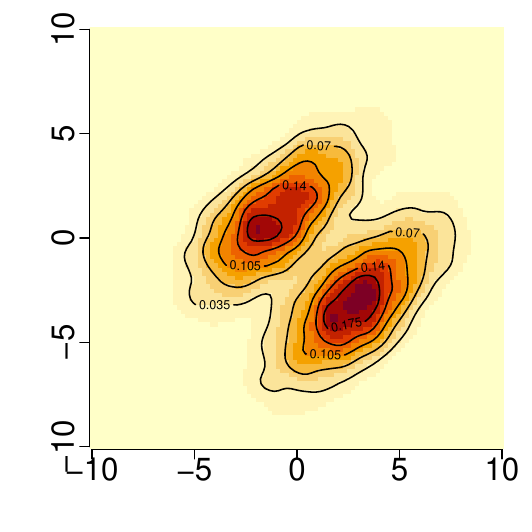} & \includegraphics[width=0.22\linewidth]{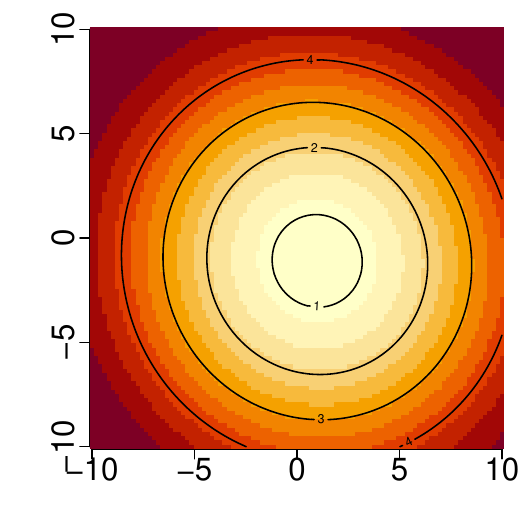} 
\end{tabular}
\caption{Contour plots of density, estimated MD, and estimated LMD with small and large values of $h$ in three examples involving elliptic (a) and non-elliptic (b and c) distributions.\label{fig:choice_of_h_contour_plots}}
\end{figure}

\begin{figure}[h]
\centering
\begin{tabular}{ccc}
\includegraphics[width=0.4\linewidth,height=2.25in]{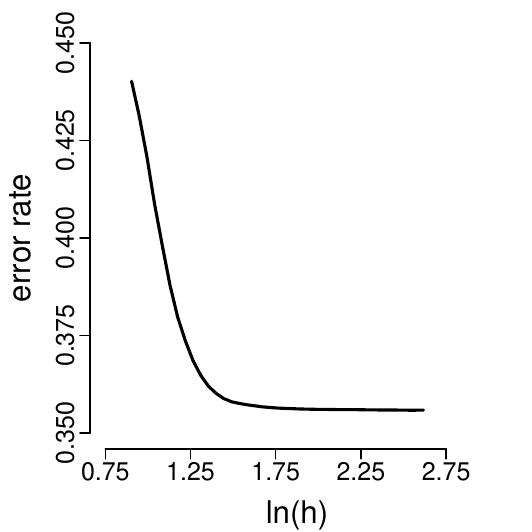} & & \includegraphics[width=0.4\linewidth,height=2.25in]{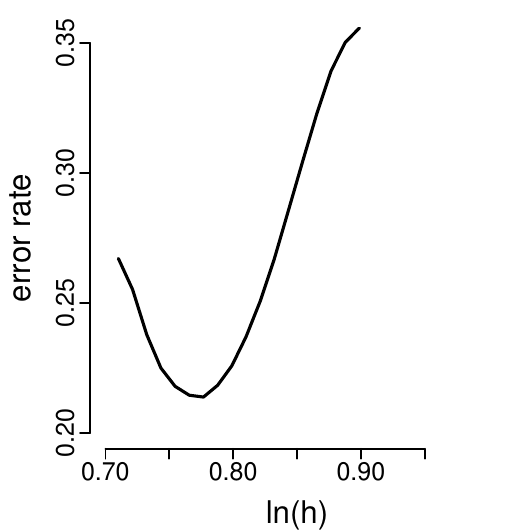}
\end{tabular}
 \caption{Average misclassification rates of the LMD classifier for different values of $h$ in Example~A (left) and Example~B (right) involving elliptical and non-elliptical distributions, respectively.\label{fig:LMD_error_varying_h}}
\end{figure}

A similar behavior can be observed for the classifier based on the local Mahalanobis distances. We formally define the classifier here. For a fixed $h>0$, we compute the empirical LMDs $\widehat\Gamma_h(\vec x_{ji}) = \big(\widehat\gamma^{h}_1(\vec x_{ji}),\ldots,\widehat\gamma^{h}_J(\vec x_{ji})\big)$ for $j=1,\ldots,J$ and $i=1,\ldots,n_j$. These are used as feature vectors, and a GAM with logistic link is fitted on these features to model the posterior probabilities of different classes (cf.\ Theorem~\ref{thm:LMD_GAM}). Finally, an observation is classified to the class with the highest posterior probability estimated from the fitted GAM. We refer to this classifier as the LMD classifier. To demonstrate the performance of the LMD classifier for different values of $h$, we consider two examples involving binary classification. In Example A, the class distributions $N_2(-0.3 {\bf 1}_2,{\bf I}_2)$ and $N_2(0.3 {\bf 1}_2,{\bf I}_2)$ differ only in their locations. In Example B, each of the competing classes is an equal mixture of two bivariate normal distributions. While one class is a mixture of $N_2\big((1,-1)^\top, \sigmat\big)$ and $N_2\big((-3,3)^{\top}, \sigmat\big)$, the other one is a mixture of $N_2\big((-1,1)^{\top}, \sigmat\big)$ and $N_2\big((3,-3)^{\top}, \sigmat\big)$, where $\sigmat={\bf I}_2+4{\bf 1}_2{\bf 1}_2^{\top}$.  For these two examples, we consider training and test samples of sizes 200 and 2000, respectively (equal number of observations from the two classes). The average misclassification rates of the LMD classifier for different choices of $h$ are shown in Figure~\ref{fig:LMD_error_varying_h}. In Example A, where the distributions are elliptic, the LMD classifiers with large values of $h$ perform much better than those based on small values of $h$. But, in Example B, where the distributions are non-elliptic, we observe a diametrically opposite picture. Here, the LMD classifiers based on large values of $h$ do not perform well, but those based on smaller values of $h$ (e.g., $h=2.16$) have excellent performance. In this example, while the MD classifier misclassifies more than 35\% observations, the LMD classifier with this choice of $h$ has a misclassification rate close to 22\%.

These examples clearly show that the performance of the LMD classifier depends heavily on the value of $h$, and it needs to be chosen appropriately. In this article, we use the bootstrap method for this purpose. More specifically, we fix a collection of $h$ values, say $\{h_1,\ldots,h_k\}$. For each of them, we construct the LMD classifier and compute the estimate of its average misclassification rate based on 100 bootstrap samples. Finally, the value of $h$ leading to the minimum average misclassification rate is selected for constructing the classifier. For selecting the collection $\{h_1,\ldots,h_k\}$, we use the following strategy. We compute the pairwise Mahalanobis distances among the observations in a class, separately for all the classes. One-third of the fifth percentile of these pairwise distances is taken to be the lower limit $h_1$. Since LMD behaves like a shifted version of the squared MD for large values of $h$, for the upper limit, we choose a value of $h$ for which $\mathrm{LMD}$ is close to $K(\vec 0)(\mathrm{MD}^2+d)$ (cf.\ Lemma~\ref{lemma:LMD_asymptotics}). In order to do this, we start with $h_1$ and keep increasing the value in a $\log$-scale taking $h_i=h_1 k_0^{i-1}$ for $i\ge 2$ and some $k_0>1$ until the correlation between $\big\{\widehat\gamma_j^{h}(\vec x_{ji}):j=1,\ldots,J;~i=1,\ldots,n_j\big\}$ and $\big\{\big(\widehat\delta_j(\vec x_{ji})\big)^2+d: j=1,\ldots,J;i=1,\ldots,n_j\big\}$ becomes higher than a threshold $r$ close to $1$. We stop at the first instance when the correlation is higher than $r$ and all the intermediate $h$ values (including the one at which we stop, say $h_k$) constitute the set $\{h_1,\ldots,h_k\}$. The performance of the LMD classifier in Examples~\ref{example1}--\ref{example8} with the bootstrapped selection of $h$ is shown in Table~\ref{tab:MD_vs_LMD}. We also show the results for the MD classifier for reference. It is clear from this table that the LMD classifier with bootstrapped choice of $h$ perform well and its misclassification rates are close to those of the MD classifier in Examples~\ref{example1}--\ref{example8}.

\begin{table}[h]
\setlength{\tabcolsep}{0.05in}
    \centering
    \small
    \begin{tabular}{c|c|cccccccccc}
    Classifier & $d$ & Ex 1 & Ex 2 & Ex 3 & Ex 4 & Ex 5 & Ex 6 & Ex 7 & Ex 8 \\ \hline
     & 2& 6.85 \se{.27}& 35.14 \se{.15}& 23.93 \se{.07} & 4.72 \se{.24}& 37.82 \se{.13}& 51.65 \se{.13}& 46.52 \se{.11}& 35.01 \se{.09}\\
    MD& 4& 7.75 \se{.18}& 29.59 \se{.10} & 15.27 \se{.06} & 5.18 \se{.14} &32.87 \se{.14} &53.07 \se{.13} & 32.32 \se{.12} &30.50 \se{.13}\\ 
     & 6 & 8.46 \se{.18}& 26.00 \se{.11} & 10.70 \se{.08} & 4.80 \se{.15} & 30.95 \se{.17} & 53.83 \se{.13} & 26.82 \se{.12} &28.30 \se{.29}\\ \hline
     
    &2 & 6.83 \se{.26} & 35.16 \se{.15}& 23.99 \se{.08} & 4.58 \se{.23} & 38.14 \se{.14} & 51.99 \se{.14} & 46.72 \se{.11}& 36.50 \se{.39} \\
    LMD &4 &7.74 \se{.18} & 29.58 \se{.09} & 15.31 \se{.07}  & 5.40 \se{.42}  &32.59 \se{.13} &52.83 \se{.13} & 32.44 \se{.12} &32.98 \se{.50}\\
    &6 & 8.42 \se{.18}& 25.96 \se{.11} & 10.65 \se{.09} & 4.55 \se{.12} & 30.57 \se{.17} & 53.24 \se{.14}&  27.29 \se{.26} & 32.46 \se{.72}\\ \hline
   \end{tabular}
    \caption{Average misclassification rates (in \%) of MD and LMD (with bootstrap choice of $h$) classifiers in Examples~\ref{example1}--\ref{example8}. The corresponding standard errors are reported within brackets in a smaller font.\label{tab:MD_vs_LMD}}
 \end{table}

In order to take care of non-elliptic distributions and complex class boundaries, several classifiers based on kernelized Mahalanobis distance (KMD) have been proposed in the literature \citep[e.g.,][]{mika1999fisher,ruiz2001nonlinear,chatpatanasiri2010new}. But unlike LMD, KMD does not behave like the Mahalanobis distance for large values of $h$. Moreover, for small values of $h$, it converges to a constant, not to a constant multiple of the density function. So, the Bayes classifier cannot be expressed as a function of KMDs. Instead of LMD, when we used GAM based on KMD for classification, it led to poor results. For instance, in Example~\ref{example1}, while the MD and LMD classifiers had misclassification rates around 7--9\%, the use of KMD yielded misclassification rates more than 25\%. So, in this article, we do not consider it for further study.

\subsection{Numerical Examples Involving Non-elliptic Distributions}
Here, we consider some classification problems involving non-elliptic distributions for further evaluation of the performances of MD and LMD classifiers. Misclassification rates of some popular state-of-the-art classifiers (those considered in Section~\ref{sec:MD_simulation}) are also reported for comparison. As in Section~\ref{sec:MD_simulation}, we generate 100 observations from each class to form the training set, and 1000 observations from each class (5000 for two-class problems) to form the test set. For each example, we consider three different choices of $d$ (2, 4 and 6) as before, and each experiment is repeated 100 times. The average test set misclassification rates and the corresponding standard errors of different classifiers over these 100 repetitions are reported in Table~\ref{tab:simulation_non-elliptic}.

\begin{ex}\label{example9}
We consider four multivariate Laplace distributions (the components are i.i.d.\ Laplace) with the same covariance matrix $0.75 {\bf I}_d$ but different locations. The locations of these four classes are taken to be $\vec 1_d$ , $-\vec 1_d$, $\vec a_d$ and $-\vec a_d$, respectively, where the components of $\vec a_d$ are $(-1)^i$ for $i=1,\ldots,d$.
\end{ex}
In this location problem, most of the classifiers barring the classification tree (CART) have satisfactory performance for all values of $d$. Among them, linear SVM has an edge for $d=2$ and LDA for $d=4$, but for $d=6$, the Random Forest classifier has the lowest misclassification rate. CART has higher misclassification rates for $d=4$ and $d=6$.
\begin{ex}\label{example10}
Here, we consider a classification problem between two multivariate exponential distributions ${\mathcal E}_d(1)$ and ${\mathcal E}_d(2)$, where ${\mathcal E}_d(\lambda)$ has $d$ i.i.d. component variables each following the exponential distribution with mean $\lambda$. 
\end{ex}
In this scale problem, the Bayes classifier has a linear class boundary. So, as expected, linear classifiers perform better than their competitors. Among them, logistic regression has the lowest misclassification rate. Nonlinear SVM, MD and LMD classifiers have competitive performance. They perform better than other nonlinear classifiers like $k$NN, KDA, CART and Random Forest.

\begin{table}[!t]
\setlength{\tabcolsep}{0.03in}
    \centering
    \small
    \begin{tabular}{c|c|cccccccccccccc}
       Data &$d$&Bayes & LDA & QDA & Logistic & GLM &  $k$NN & KDA & CART & Random & SVM & SVM & MD & LMD\\ 
        set  & &       &    & & Regression    & NET &     & & & Forest & Linear & RBF & \multicolumn{2}{c}{(Proposed)}\\ \hline
    & 2 & 24.57 & 24.84 & 25.23 & 24.91 & 24.91 &  25.24 & 25.06 & 25.69 & 28.70 & \best{24.83} & 25.58 & 25.64 & 25.73 \\
        & & \se{0.06} & \se{0.07} & \se{0.07} & \se{0.07} & \se{0.07} &  \se{0.08} & \se{0.07} & \se{0.11} & \se{0.12} & \se{0.07} & \se{0.08} & \se{0.07} & \se{0.07}\\ \cline{2-15}
              Ex 9 &4 &  15.58 & \best{15.96} & 16.52 & 16.27 & 16.31 &  16.22 & 16.08 & 20.59 & 17.53 & 16.29 & 16.66 & 17.17 & 17.20\\ 
    & &   \se{0.05} & \se{0.06} & \se{0.06} & \se{0.06} & \se{0.06}  & \se{0.07} & \se{0.07} & \se{0.19} & \se{0.08} & \se{0.05} & \se{0.08} & \se{0.07} & \se{0.06}\\ \cline{2-15}
          &6 & 7.74 & 10.19 & 11.07 & 10.80 & 10.73 &  10.30 & 10.01 & 19.20 & \best{9.46} & 10.85 & 10.47 & 11.98 & 12.04\\ 
          & &  \se{0.04} & \se{0.05} & \se{0.06} & \se{0.06} & \se{0.06} &  \se{0.05} & \se{0.05} & \se{0.18} & \se{0.08} & \se{0.06} & \se{0.07} & \se{0.07} & \se{0.07}\\ \hline  
      & 2 &   31.96 & 32.28 & 33.56 & \best{32.18} & 32.44 &  33.71 & 34.75 & 35.47 & 37.85 & 32.51 & 33.90 & 33.18 & 33.23\\
        & & \se{0.05} & \se{0.06} & \se{0.06} & \se{0.06} & \se{0.10} &  \se{0.19} & \se{0.23} & \se{0.25} & \se{0.15} & \se{0.08} & \se{0.20 }& \se{0.13} & \se{0.12}  \\ \cline{2-15}
              Ex 10 &4 & 24.89 & 25.69 & 27.15 &\best{25.45} & 25.74 & 27.99 &29.24 & 32.23 & 28.82 & 25.68 & 26.62 & 26.60 &26.76\\ 
    & & \se{0.05} & \se{0.06} & \se{0.07} & \se{0.06} & \se{0.09} &  \se{0.16} & \se{0.21} & \se{0.26} & \se{0.13} & \se{0.08} & \se{0.10} & \se{0.11} & \se{0.11}\\ \cline{2-15}
          &6  &20.22 & 21.43 & 22.88 & \best{20.99} & 21.30 & 25.13 & 26.26 & 31.13 & 23.68 & 21.18 & 22.57 & 22.59 & 22.74\\ 
          & & \se{0.05} & \se{0.07} & \se{0.06} & \se{0.07} & \se{0.09} &  \se{0.14} & \se{0.20} & \se{0.20} & \se{0.10} & \se{0.07} & \se{0.14} & \se{0.10} & \se{0.11} \\ \hline
         & 2 & 27.60 & 49.73 & \best{28.96} & 49.73 & 49.78 & 32.04 & 30.76 & 34.61 & 33.64 & 46.57 & 29.96 & 29.08 & 29.00 \\
        & &\se{0.04} & \se{0.10} & \se{0.07} & \se{0.10} & \se{0.09} & \se{0.16} & \se{0.14} & \se{0.25} & \se{0.13} & \se{0.41} & \se{0.22} & \se{0.10} & \se{0.09}\\ \cline{2-15}
              Ex 11 &4 &15.26 &49.27 & \best{17.37} & 49.28 & 49.35 & 23.87 & 22.64 & 31.74 & 21.77 & 45.47 & 17.61 & 17.52 & 17.50\\ 
    & & \se{0.03} &  \se{0.11} & \se{0.06} & \se{0.11} & \se{0.11} & \se{0.12} & \se{0.12} & \se{0.28} & \se{0.13} & \se{0.40} & \se{0.09} & \se{0.07} & \se{0.07}\\ \cline{2-15}
          &6 & 8.76 & 48.85 & \best{11.30} & 48.87 & 48.99 &  21.06 & 19.79 & 30.73 & 15.86 & 44.30 & 11.54 & 11.66 & 11.62\\ 
          & & \se{0.03}  & \se{0.12} & \se{0.05} & \se{0.11} & \se{0.11} & \se{0.13} & \se{0.14} & \se{0.23} & \se{0.16} & \se{0.39} & \se{0.09} & \se{0.08} & \se{0.08}\\ \hline            
       
                 & 2 & 0.00 & 49.99 & 50.80 & 49.99 & 49.92 & \best{0.00} & \best{0.00} & 1.83 & 1.79 & 59.49 & \best{0.00} & 44.22 & \best{0.00}\\
        & & \se{0.00} & \se{0.04} & \se{0.88} & \se{0.04} & \se{0.20} &  \se{0.00} & \se{0.00} & \se{0.07} & \se{0.04} & \se{0.39} & \se{0.00} & \se{0.69} & \se{0.00}\\ \cline{2-15}
              Ex 12 &4 &0.00 &50.02 & 50.53 & 50.02 & 50.00 & 0.26 & 0.18 & 1.78 & 0.50 & 57.65 & 0.15 & 47.69 & \best{0.00}\\ 
    & & \se{0.00} & \se{0.04} & \se{0.34} & \se{0.04} & \se{0.16} & \se{0.13} & \se{0.07} & \se{0.05} & \se{0.03} & \se{0.38} & \se{0.10} & \se{0.38} & \se{0.00}\\ \cline{2-15}
          & 6 & 0.00 & 50.03 & 50.17 & 50.03 & 49.89  & 10.95 & 17.49 & 1.85 & \best{0.14} & 55.73 & 15.85 & 49.40 & 5.96\\ 
          & & \se{0.00} & \se{0.03} & \se{0.15} & \se{0.03} & \se{0.26} &  \se{0.42} & \se{0.24} & \se{0.07} & \se{0.01} & \se{0.39} & \se{0.38} & \se{0.15} & \se{0.22} \\ 
     \hline

        & 2 & 0.00 & 50.44 & 50.16 & 50.43 & 50.22 &  \best{12.60} & 15.33 & 30.24 & 14.17 & 52.24 & 16.65 & 46.74 & 29.92\\
        & & \se{0.00} & \se{0.27} & \se{0.19} & \se{0.27} & \se{0.16} &  \se{0.17} & \se{0.19} & \se{0.46} & \se{0.20} & \se{0.21} & \se{0.23} & \se{0.32} & \se{0.29}\\ \cline{2-15}
              Ex 13 &4 &0.00 & 50.12 & 50.14 & 50.12 & 50.00 &  29.08 & \best{28.96} & 45.61 & 36.55 & 51.25 & 30.27 & 48.93 & 34.04\\ 
    & & \se{0.00} & \se{0.09} & \se{0.10} & \se{0.09} & \se{0.05}& \se{0.24}& \se{0.13} & \se{0.28} & \se{0.18} & \se{0.11} & \se{0.19} & \se{0.14} & \se{0.39}\\ \cline{2-15}
          &6 & 0.00 & 50.32 & 50.06 & 50.30 & 50.12 & 37.17 & 38.74 & 48.07 & 42.39 & 50.84 & \best{35.90} & 49.39 & 36.42\\ 
          & & \se{0.00} & \se{0.08} & \se{0.10} & \se{0.08} & \se{0.05} &  \se{0.33} & \se{0.31} & \se{0.23} & \se{0.17} & \se{0.10} & \se{0.23} & \se{0.12} & \se{0.33} \\ 
     \hline
     
 & 2 &0.00 & 49.83 & 43.43 & 49.83 & 49.59& 10.69 & 10.02 & 24.86 & 12.71 & 50.55 & \best{7.38} & 21.42 & 14.92 \\ 
        & &  \se{0.00} & \se{0.09} & \se{0.43} & \se{0.09} & \se{0.13} & \se{0.17} & \se{0.13} & \se{0.29} & \se{0.16} & \se{0.14} & \se{0.16} & \se{0.14} & \se{0.30}\\ \cline{2-15}
        Ex 14 & 4&  0.00 & 49.95 & 43.67 & 49.95 & 50.02 &  36.26 & 34.54 & 41.95 & 34.26 & 49.77 & 35.71 & 31.62 & \best{29.20}\\  
         & &\se{0.00} & \se{0.06} & \se{0.17} & \se{0.06} & \se{0.09} &  \se{0.28} & \se{0.22} & \se{0.27} & \se{0.16} & \se{0.08} & \se{0.22} & \se{0.18} & \se{0.18}\\ \cline{2-15}
           &6 & 0.00 & 49.92 & 43.86 & 49.92 & 49.90 & 43.87 & 42.64 & 43.75 & 36.50 & 49.66 & 38.53 & 33.87 & \best{32.17} \\  
         & & \se{0.00} & \se{0.05} & \se{0.14} & \se{0.05} & \se{0.06} & \se{0.19} & \se{0.24} & \se{0.21} & \se{0.16} & \se{0.06} & \se{0.23} & \se{0.16} & \se{0.18} \\ 
      \hline
       & 2 & 29.48 & 50.48 & 43.85 & 50.59 & 50.55 & \best{32.47} & 33.13 & 34.82 & 35.73 & 47.95 & 33.66 & 33.46 & 33.84\\
        & & \se{0.05} & \se{0.35} & \se{0.38} & \se{0.36} & \se{0.31} & \se{0.18} & \se{0.17} & \se{0.25} & \se{0.14} & \se{0.31} & \se{0.25} & \se{0.18} & \se{0.20}\\ \cline{2-15}
            Ex 15 &4 &20.24 & 50.83 & 40.98 & 51.00 & 50.09 & 25.03 & 26.80 & 32.80 & 26.15 & 46.36 & \best{23.98} & 26.42 & 27.23\\
    & & \se{0.04} & \se{0.5} & \se{0.46} & \se{0.50} & \se{0.41} & \se{0.16} & \se{0.22} & \se{0.25} & \se{0.13} & \se{0.31} & \se{0.24} & \se{0.3} & \se{0.43}\\ \cline{2-15}
          & 6 & 15.00 & 51.05 & 37.64 & 51.28 & 49.79 & 21.29 & 23.47 & 32.45 & 21.15 & 46.03 & \best{17.89} & 21.17 & 21.96\\ 
          & & \se{0.04} & \se{0.58} & \se{0.68} & \se{0.57} & \se{0.5} & \se{0.16} & \se{0.25} & \se{0.21} & \se{0.14} & \se{0.37} & \se{0.14} & \se{0.22} & \se{0.34}\\ 
     \hline
                      & 2 & 29.34 & 51.55 & 47.34 & 51.48 & 51.36 & 35.29 & 36.16 & 38.17 & 36.37 & 51.47 & 34.38 & \best{31.16}& 32.45\\
        & & \se{0.05} & \se{0.27} & \se{0.15} & \se{0.27} & \se{0.27} & \se{0.19} & \se{0.20} & \se{0.23} & \se{0.16} & \se{0.29} & \se{0.22} & \se{0.09} & \se{0.13}\\ \cline{2-15}
             Ex 16 &4 &20.24 & 52.08 & 45.71 & 51.95 & 51.87 & 31.25 & 31.08 & 38.34 & 29.65 & 53.18 & 27.36 & \best{23.04} & 24.40\\ 
    & & \se{0.04} & \se{0.29} & \se{0.19} & \se{0.29} & \se{0.29} & \se{0.19} & \se{0.15} & \se{0.27} & \se{0.15} & \se{0.25} & \se{0.29} & \se{0.11} & \se{0.15}\\ \cline{2-15}
          & 6 & 15.02 & 52.61 & 44.22 & 52.37 & 52.31 & 30.54 & 26.17 & 38.91 & 26.76 & 53.65 & 22.87 & \best{18.47} & 19.43\\ 
          & & \se{0.04} & \se{0.24} & \se{0.24} & \se{0.24} & \se{0.24} & \se{0.21} & \se{0.15} & \se{0.27} & \se{0.17} & \se{0.26} & \se{0.29} & \se{0.11} & \se{0.13}\\ 
     \hline
    \end{tabular}
    
  Boldface character signifies the best result in each case.
    \caption{Average misclassification rates (in \%) of different classifiers in Examples~\ref{example9}--\ref{examplenew2} for varying $d$. The corresponding standard errors are reported in the next line within brackets in a smaller font.\label{tab:simulation_non-elliptic}}
\vspace{-0.05in}
\end{table}

\vspace{0.1in}
Now, we consider some classification problems involving mixture distributions. In Examples 11 and 12, both of the competing classes are bimodal -- each of them is an equal mixture of two normal distributions. Example 13 deals with mixtures of several uniform distributions, where each of the component distributions is spherically symmetric. In Examples 14--16, we consider mixtures of non-elliptic distributions.

\begin{ex}\label{example11}
In this example, we consider two mixture normal distributions. The distribution corresponding to the first class is an equal mixture of $N_d(\vec 1_d,{\bf I}_d)$ and $N(-\vec 1_d,{\bf I}_d)$, whereas that for the second class is an equal mixture of $N_{d}(\vec a_d,4{\bf I}_d)$ and $N_{d}(-\vec a_d,4{\bf I}_d)$. Here, $\vec a_d$ is as defined in Example~\ref{example9}.
\end{ex}
Here, the Bayes class boundary is nonlinear. So, the nonlinear classifiers outperform the linear classifiers. Overall performances of QDA, nonlinear SVM, MD and LMD classifiers are much better than the other classifiers considered here.

\begin{ex}\label{example13}
For Class-1, all the component variables are i.i.d., and their distributions are equal mixtures of $N(1,0.01)$ and $N(-1,0.01)$. The distribution for Class-2 is a rotated version of the distribution for Class-1, where we rotate each successive pair of covariates by an angle of $45^\circ$.
\end{ex}

Here, the LMD classifier has significantly better performance than the MD classifier, irrespective of the dimension. Besides LMD, the overall performances of CART and Random Forest are also good. The performances of $k$NN and KDA are exceptionally good for $d=2$ and $4$.  But, their performance deteriorates as the dimension increases to $6$. The rest of the classifiers have very poor performance.

\begin{ex}\label{example12}
The two underlying populations are mixtures of three uniform distributions. Let $R_{(a,b)}(\vec z)$ be the region $\{\vec x \in \R^{d}: a \le \|\vec x-\vec z\| \le b\}$. Class-1 is a mixture of uniform distribuions on $R_{(0,1)}(-\vec c)$, $R_{(1,2)}(\vec c)$ and $R_{(2,3)}(-\vec c)$ with mixing proportions $0.25$, $0.5$ and $0.25$, respectively, where $\vec c=(5,0,\ldots,0)^\top$. Class-2 is obtained if $\vec c$ is replaced by $-\vec c$ in the description of Class-1. 
\end{ex}
Here also, the LMD classifier outperforms the MD classifier. For $d=2$, $k$NN has the lowest misclassification rate followed by Random Forest, KDA and nonlinear SVM. The LMD classifier has the next best misclassification rate. The performance of most of the classifiers, including $k$NN and Random Forest, deteriorates as the dimension increases. For $d=6$, while nonlinear SVM and the LMD classifier have misclassification rates close to 35\%, all other classifiers barring $k$NN and KDA misclassify more than 40\% observations.

\begin{ex}\label{example14}
For this example, we generate observations from the uniform distribution on the hypercube $[-2,2]^d$. An observation $\vec x=(x_1,\ldots,x_d)^\top$ is assigned to Class-$1$ if $1/2<\prod_{i=1}^{d}|x_i|<2$. Otherwise, it is assigned to Class-$2$.
\end{ex}
In this example, the LMD classifier performs much better than the MD classifier, especially for $d=2$. For $d=4$ and $d=6$. While most of the other classifiers perform poorly and have misclassification rates in excess of 40\%, MD and LMD classifiers have misclassification rates close to 30\%. Among the other classifiers, Random Forest and nonlinear SVM have relatively better performances.

\begin{ex}\label{examplenew1} In this example, Class 1 has the ${\mathcal E}_d(5)$ distribution, while Class-2 is an equal mixture of ${\mathcal E}_d(1)$ and ${\mathcal E}_d(10)$. Here ${\mathcal E}_d(\cdot)$ has the same meaning as in Example \ref{example10}.
\end{ex}

One can check that here the Bayes class boundary is nonlinear. So, as expected, all linear classifiers have poor performance. QDA also has higher misclassification rates.
In this example, all nonlinear classifiers have almost similar performance, with nonlinear SVM having an edge.

\begin{ex}\label{examplenew2} This example also deals with a mixture distribution. Here Class-1 has the ${\mathcal L}_d(5)$ distribution, while Class-2 is an equal mixture of ${\mathcal L}_d(1)$ and ${\mathcal L}_d(10)$, where ${\mathcal L}_d(\gamma)$ is a $d$-dimensional distribution with each component variables following i.i.d Laplace distribution with scale
parameter $\gamma$.
\end{ex}

In this example also, all linear classifiers and QDA perform poorly. Nonlinear classifiers have relatively better performance. Interestingly, MD and LMD classifiers have much lower misclassification rates than their competitors. Nonlinear SVM has the third-best performance.

\section{High Dimension, Low Sample Size Behaviour of MD and LMD Classifiers\label{sec:HDLSS}}

In this section, we consider the high dimension, low sample size (HDLSS) scenario, where the dimension of the data is much larger compared to the sample size. This type of data is frequently encountered nowadays in various areas including microarray gene expression studies, medical image analyses and spectral analysis in chemometrics. The characteristic property of this type of data is the scarcity of observations compared to the number of variables, which makes many existing methods unsuitable. For instance, popular parametric classifiers like LDA and QDA cannot be used in HDLSS situations due to the singularity of the sample covariance matrices. On the other hand, nonparametric classifiers like KDA and $k$NN often lead to poor performance in HDLSS situations, especially when the competing classes vary widely in their scales \citep[e.g.,][]{hall2005geometric, pal2016high, dutta2016some}. 

MD and LMD classifiers in their present form are also not usable in HDLSS situations. If the dimension of the data exceeds the sample size, we cannot use the sample covariance matrices for computing the empirical versions of the Mahalanobis distances or their localized versions since they are not invertible. Hence, for such data, we propose to modify the MD (respectively, LMD) classifier by using the identity matrix or the diagonals of the sample covariance matrices in the computation of the Mahalanobis distances (respectively, localized Mahalanobis distances). After computing the distances, the classifier is constructed as before by fitting a GAM with logistic link. Also, we use the bootstrap method to select the localization parameter $h$ for the LMD classifier. These modified classifiers have excellent empirical performance in Example~\ref{example1} with $d=500$. All of them correctly classify almost all observations, whereas many popular classifiers fail to achieve satisfactory performance (see Example~\ref{example20} in Table~\ref{tab:simulation_HDLSS}). Motivated by this, we now investigate the theoretical behaviour of these modified classifiers under the HDLSS asymptotic regime, where the sample sizes $n_1,\ldots,n_J$, with $\min\{n_1,\ldots,n_J\} \ge 2$, remain fixed and the dimension $d$ diverges to infinity. For our investigation, we make some assumptions which are given below. In the following, $\vec X=(X_1,\ldots,X_d)^\top$, $\vec Y=(Y_1,\ldots,Y_d)^\top$ and $\vec Z= (Z_1,\ldots,Z_d)^\top$ are generic random vectors. Also, we denote the mean and the covariance of Class-$j$ by $\muvec_j$ and $\sigmat_j$, respectively, for $j=1,\ldots, J$. 

\begin{enumerate}[({A}1)]
\item For all $1 \le j, j^\prime \le J$, and three independent random vectors $\vec X$ from Class-$j$ and $\vec Y, \Vec Z$ from  Class-$j^\prime$, we have $\Big|d^{-1} \sum_{i=1}^d (X_i-Y_i)(X_i-Z_i) - E\big\{d^{-1}\sum_{i=1}^d (X_i-Y_i)(X_i-Z_i)\big\}\Big| \stackrel{P}{\rightarrow} 0$ and $\Big|d^{-1} \sum_{i=1}^d (X_i-Y_i)^2 - E\big\{d^{-1}\sum_{i=1}^d (X_i-Y_i)^2\big\}\Big| \stackrel{P}{\rightarrow} 0$ as $d \rightarrow \infty$.

\item For all $j=1,\ldots,J$, there exists a constant $\sigma_j^2$ such that $d^{-1} \mathrm{trace}(\sigmat_j) \to \sigma_j^2$ as $d \to \infty$. Also, for every $1 \le j \neq j^\prime \le J$, there exists a constant $\nu_{jj^\prime}^2$ such that $d^{-1} \|\muvec_j-\muvec_{j^\prime}\|^2 \to \nu_{jj^\prime}^2$ as $d \to \infty$.
\end{enumerate}

These assumptions are pretty standard in the context of HDLSS asymptotics. \cite{hall2005geometric} considered the $d$-dimensional observations as time series truncated at time $d$ and studied the behaviour of pairwise distances as $d$ increases. They proved some distance convergence results under uniform boundedness of the fourth moments and $\rho$-mixing property of the time series. Assumption (A1) holds under those conditions. One can also assume some sufficient moment conditions like $\mathrm{Var}(\|\vec X-\muvec_j\|^2) = o(d^2)$, $\mathrm{Var}(\|\vec Y-\muvec_{j^\prime}\|^2)=o(d^2)$, $\mathrm{trace}(\sigmat^2_j)=o(d^2)$ and $\mathrm{trace}(\sigmat^2_{j^\prime})=o(d^2)$ for (A1) to hold \citep{banerjee2022high}. Some other relevant conditions can be found in \cite{ahn2007high,jung2009pca,aoshima2018two,sarkar2019perfect,yata2020geometric}. Under (A1) and (A2), we have high dimensional convergence of estimated Mahalanobis distances, as shown below. The following result is stated for the case when the identity matrix is used for computation of Mahalanobis distances from all competing classes. The case of diagonal covariance matrices is analogous, where we need to assume (A1) and (A2) for the standardized variables.

\begin{thm}\label{thm:MD_asymptotics_HDLSS}
{ Let $\vec X_{j1},\ldots,\vec X_{jn_j} \stackrel{iid}{\sim} F_j$ for $j=1,\ldots,J$ be independent collections of observations, and the distributions $F_1,\ldots,F_J$ satisfy (A1) and (A2). Define $\widehat\muvec_j = \overline{\vec X}_j = n_j^{-1}\sum_{i=1}^{n_j} \vec X_{ji}$ and suppose that the identity matrix ${\bf I}_d$ is used as $\widehat\sigmat_j$  to compute the empirical MD ${\widehat \delta}_j(\cdot)$ for $j=1,\ldots,J$. Then, we have the following results as $d \to \infty$.}
\begin{enumerate}[(a)]
    \item For any $j=1,\ldots,J$ and $i=1,\ldots,n_j$, $d^{-1}\,\widehat\Delta(\vec X_{ji}) \stackrel{P}{\rightarrow} \Theta_j=(\theta_{j1},\ldots,\theta_{jJ})$, where
    \[
    \theta_{jk}= \begin{cases} \Big(1-\frac{1}{n_j}\Big) \sigma_{j}^2 & \text{if } k=j,\\
    \nu_{jk}^2+\sigma_j^2+\frac{\sigma_k^2}{n_k} & \text{if } k \neq j.\end{cases}
    \]
    \item For an independent observation $\vec Z \sim F_i$, $d^{-1}\,\widehat\Delta(\vec Z) \stackrel{P}{\rightarrow} \Theta_i^{\ast}=(\theta^{\ast}_{i1},\ldots,\theta^{\ast}_{iJ})$, where
    \[
    \theta^{\ast}_{ik}=\begin{cases} \Big(1+\frac{1}{n_i}\Big) \sigma_{i}^2 & \text{if } k=i,\\
    \nu_{ik}^2+\sigma_i^2+\frac{\sigma_k^2}{n_k} & \text{if } k \neq i. \end{cases}
    \]
\end{enumerate}
\end{thm}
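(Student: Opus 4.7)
The plan is to compute, for each pair $(j,k)$, the limit of the squared empirical distance $d^{-1}\widehat\delta_k^2(\vec X_{ji})$ (respectively $d^{-1}\widehat\delta_k^2(\vec Z)$) by expanding $\widehat\delta_k^2(\cdot) = \|\cdot - \overline{\vec X}_k\|^2$ and applying assumption (A1) term-by-term, with the limiting expectations read off from assumption (A2). Since $\widehat\sigmat_k = \vec I_d$, every entry of $\widehat\Delta$ is just a squared Euclidean distance from a sample mean, which is the structure (A1) is designed to control.

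For part (a), I would start from the identity
\[
\vec X_{ji} - \overline{\vec X}_k = \frac{1}{n_k}\sum_{\ell=1}^{n_k}\bigl(\vec X_{ji} - \vec X_{k\ell}\bigr),
\]
which collapses the trivial term when $k=j$ and $\ell=i$. Squaring gives
\[
\widehat\delta_k^2(\vec X_{ji}) = \frac{1}{n_k^2}\sum_{\ell=1}^{n_k}\sum_{m=1}^{n_k}\bigl(\vec X_{ji} - \vec X_{k\ell}\bigr)^{\top}\bigl(\vec X_{ji} - \vec X_{km}\bigr).
\]
Assumption (A1) applied to each of the finitely many $(\ell,m)$ pairs — together with the continuous mapping theorem and the finiteness of $n_k$ — lets me replace $d^{-1}$ times each inner product by its expectation, up to an $o_P(1)$ error. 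The remaining task is a bookkeeping exercise: I split the double sum into the diagonal terms ($\ell=m$), which yield $E\|\vec X_{ji}-\vec X_{k\ell}\|^2$, and off-diagonal terms ($\ell\ne m$), which yield $E[(\vec X_{ji}-\vec X_{k\ell})^{\top}(\vec X_{ji}-\vec X_{km})]$. These two expectations expand (using independence, the trace identity $E\|\vec X\|^2=\mathrm{trace}(\sigmat)+\|\muvec\|^2$, and $E\vec X^{\top}\vec Y=\muvec_j^{\top}\muvec_k$) and divide by $d$ to yield, via (A2), either $\sigma_j^2+\sigma_k^2+\nu_{jk}^2$ (diagonal) or $\sigma_j^2+\nu_{jk}^2$ (off-diagonal), with $\nu_{jj}=0$. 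The extra care comes in when $k=j$: the diagonal $\ell=i$ term contributes zero, so summing over the $n_j-1$ nontrivial diagonals and $(n_j-1)(n_j-2)$ off-diagonals (with $\nu_{jj}^2=0$) produces the factor $(1-1/n_j)\sigma_j^2=\theta_{jj}$. When $k\ne j$ all $n_k^2$ terms contribute, producing $\sigma_k^2/n_k+\sigma_j^2+\nu_{jk}^2=\theta_{jk}$ after simplification.

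Part (b) is essentially the same computation with $\vec Z\sim F_i$ independent of the entire training pool. Writing
\[
\widehat\delta_k^2(\vec Z) = \frac{1}{n_k^2}\sum_{\ell=1}^{n_k}\sum_{m=1}^{n_k}(\vec Z-\vec X_{k\ell})^{\top}(\vec Z-\vec X_{km}),
\]
I apply (A1) again term-by-term. The case $k\ne i$ is algebraically identical to the off-class case of part (a), producing $\theta_{ik}^{\ast}=\sigma_k^2/n_k+\sigma_i^2+\nu_{ik}^2$. The case $k=i$ now differs, because $\vec Z$ is not one of the $\vec X_{i\ell}$: no diagonal term drops out, so all $n_i$ diagonals (each contributing $2\sigma_i^2$) and all $n_i(n_i-1)$ off-diagonals (each contributing $\sigma_i^2$) survive. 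Dividing by $n_i^2 d$ yields $(1+1/n_i)\sigma_i^2=\theta_{ii}^{\ast}$, which is precisely why the ``in-class'' distance for a test point is larger than for a training point.

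The main obstacle is not technical difficulty but bookkeeping: one has to be scrupulous about whether the point whose distance is being measured lies inside the sample used to form the mean. That single distinction between ``$\ell=i$ drops out'' (training) and ``no self-term'' (test) is what produces the $(1-1/n_j)$ versus $(1+1/n_i)$ dichotomy, so I will present the two cases separately after giving the common expansion. The probabilistic content, namely that $d^{-1}$ times each inner product concentrates on its mean, is bought outright from (A1), and convergence of the overall averages follows from the finiteness of $n_1,\ldots,n_J$ together with Slutsky's theorem.
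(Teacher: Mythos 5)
Your proposal is correct and follows essentially the same route as the paper's proof: expand $\|\vec x-\overline{\vec X}_k\|^2$ as an $n_k^{-2}$-weighted double sum of inner products $(\vec x-\vec X_{k\ell})^\top(\vec x-\vec X_{km})$, apply (A1) to each of the finitely many terms, evaluate the limiting expectations via (A2), and track whether the self-term drops out (training point) or not (test point) to obtain the $(1-1/n_j)$ versus $(1+1/n_i)$ factors. The bookkeeping of diagonal and off-diagonal contributions matches the paper's computation exactly.
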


Part (b) of the above theorem shows that under (A1) and (A2), for a test observation $\vec Z$ from Class-$i$, the feature vector $\widehat\Delta(\vec Z)$ containing Mahalanobis distances tend to cluster (after scaling by $d$) around $\Theta_i^\ast$ when the dimension is large. Note that for $j \neq j^\prime$, $\Theta_j^\ast=\Theta_{j^\prime}^\ast$ if and only if $\sigma_j^2=\sigma_{j^\prime}^2$ and $\nu^2_{jj^\prime}=0$. So, if the distributions differ either in their locations or in their scales (i.e., for all $j \neq j^\prime$, either $\nu^2_{jj^\prime}>0$ or $\sigma_j^2 \neq \sigma_{j^\prime}^2$), then the limiting quantities $\Theta_1^\ast,\ldots,\Theta_J^\ast$ are all distinct. From part (a) of the theorem, it is also clear that if the sample sizes are not too small, then for a training sample observation $\vec X_{ji}$ from Class-$j$, $d^{-1}\widehat\Delta(\vec X_{ji})$ lies very close to $\Theta_j^\ast$. Hence, they also cluster around $J$ distinct points, one corresponding to each class, which are located close to $\Theta_1^\ast,\ldots,\Theta_J^\ast$. Naturally, the classifier based on GAM can separate out these $J$ points correctly. It partitions the observation space in such a way that these $J$ points belong to $J$ different regions formed by the partition. As a result, the training sample error becomes close to $0$ when $d$ is large. Since the test observations have similar convergence, they are also correctly classified by the MD classifier with high probability. So, the test set misclassification rate of the MD classifier is also often close to $0$ for large $d$. This is demonstrated in the left panel of Figure~\ref{fig:decision_boundary_normal_scale} for an experiment with observations from two competing normal populations $N_d(\vec 0_d,\vec I_d)$ and $N_d(\vec 0_d,1.5\vec I_d)$ differing only in their scales. We take $d=500$ and generate a  training (respectively, test) sample of size 200 (respectively, 2000) with an equal number of observations from each class. In Figure~\ref{fig:decision_boundary_normal_scale}, we can see that the estimated Mahalanobis distances in the training set and the test set are closely clustered. Moreover, the decision boundary from the fitted GAM perfectly separates the test data points. We observed a similar phenomenon in most of our numerical examples.

A result similar to Theorem~\ref{thm:MD_asymptotics_HDLSS} can be derived for LMD as well. The following theorem shows the high-dimensional behavior of LMD under (A1) and (A2) when the tuning parameter $h$ increases with the dimension at an appropriate rate. 

\begin{thm}\label{thm:LMD_asymptotics_HDLSS}
Let $\vec X_{j1},\ldots,\vec X_{jn_j} \stackrel{iid}{\sim} F_j$ for $j=1,\ldots,J$ be independent collections of observations, and the distributions $F_1,\ldots,F_J$ satisfy (A1) and (A2). {Suppose that the identity matrix ${\vec I}_d$ is used as $\widehat\sigmat_j$ to compute the empirical LMD ${\widehat \gamma}_j^h(\cdot)$ for $j=1,\ldots,J$.} Also, assume that $\Psi$ is continuous and $h$ increases with $d$ in such a way that $h^2/d \rightarrow C_0>0$ as $d \rightarrow \infty$. For $1 \le j,j^\prime \le J$, define $\theta_{jj^\prime}^{\circ} = \Psi\Big(\frac{\sigma_j^2+\sigma_{j^\prime}^2+\nu_{jj^\prime}^2}{C_0}\Big) (\sigma_j^2+\sigma_{j^\prime}^2+\nu_{jj^\prime}^2)$, where $\nu^2_{jj} = 0$. Then, we have the following results as $d \to \infty$.
\begin{enumerate}[(a)]
    \item For any $j=1,\ldots,J$ and $i=1,\ldots,n_j$, $d^{-1}\,\widehat\Gamma_h(\vec X_{ji}) \stackrel{P}{\rightarrow} \widetilde\Theta_j = (\widetilde\theta_{j1},\ldots,\widetilde\theta_{jJ})$, where
    \[
    \widetilde\theta_{jk} = \begin{cases} \Big(1-\frac{1}{n_j}\Big) \theta_{jj}^{0} & \text{if } k=j,\\ \theta_{jk}^{0} & \text{if } j \neq k.\end{cases}
    \]
    \item For an independent observation $\vec Z \sim F_i$, $d^{-1}\,\widehat\Gamma_h(\vec Z) \stackrel{P}{\rightarrow} \Theta_i^{\circ}=(\theta^{\circ}_{i1},\ldots,\theta^{\circ}_{iJ})$.
\end{enumerate}
\end{thm}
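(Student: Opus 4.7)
The plan is to reduce the statement to the pairwise squared-distance convergence provided by (A1)--(A2), and then combine this with the continuity of $\Psi$ and Slutsky's theorem. First I would observe that $h^2/d \to C_0 > 0$ and $d \to \infty$ force $h \to \infty$, so $h > 1$ for all sufficiently large $d$; hence by the definition of $\widehat\gamma_j^h$ and the fact that $\widehat\sigmat_j = \vec I_d$,
\[
d^{-1}\widehat\gamma_j^h(\vec x) \;=\; \frac{1}{n_j}\sum_{l=1}^{n_j}\Psi\!\left(\frac{d^{-1}\|\vec x - \vec X_{jl}\|^2}{h^2/d}\right)\,d^{-1}\|\vec x - \vec X_{jl}\|^2.
\]
This writes every coordinate of $d^{-1}\widehat\Gamma_h(\vec x)$ as a finite average of continuous functions of normalized pairwise squared distances, which is exactly the quantity governed by (A1).

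For part (b), I would fix an independent $\vec Z \sim F_i$ and apply (A1) to $\vec Z$ and each $\vec X_{jl}$: these are independent with laws $F_i$ and $F_j$, so the elementary identity $E\bigl[d^{-1}\|\vec Z-\vec X_{jl}\|^2\bigr] = d^{-1}\mathrm{trace}(\sigmat_i) + d^{-1}\mathrm{trace}(\sigmat_j) + d^{-1}\|\muvec_i-\muvec_j\|^2$ combined with (A2) yields $d^{-1}\|\vec Z-\vec X_{jl}\|^2 \stackrel{P}{\to} \sigma_i^2 + \sigma_j^2 + \nu_{ij}^2$ (setting $\nu_{ii}^2=0$). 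Since $h^2/d \to C_0$ and $\Psi$ is continuous, the continuous mapping theorem and Slutsky give that each summand converges in probability to $\Psi\bigl((\sigma_i^2+\sigma_j^2+\nu_{ij}^2)/C_0\bigr)(\sigma_i^2+\sigma_j^2+\nu_{ij}^2) = \theta_{ij}^{\circ}$. Averaging over the finitely many (here $n_j$) summands preserves this, so $d^{-1}\widehat\gamma_j^h(\vec Z) \stackrel{P}{\to} \theta_{ij}^{\circ}$, and assembling over $j$ gives $d^{-1}\widehat\Gamma_h(\vec Z) \stackrel{P}{\to} \Theta_i^{\circ}$.

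For part (a) the argument is identical for the slots $k \neq j$, because then $\vec X_{ji}$ is independent of every $\vec X_{kl}$ with laws $F_j$ and $F_k$, yielding $d^{-1}\widehat\gamma_k^h(\vec X_{ji}) \stackrel{P}{\to} \theta_{jk}^{0} = \widetilde\theta_{jk}$. The only delicate slot is $k = j$: the sum over $l = 1,\dots,n_j$ now contains the self term $l = i$, for which $\|\vec X_{ji} - \vec X_{jl}\|^2 = 0$ contributes $\Psi(0)\cdot 0 = 0$, while each of the remaining $n_j - 1$ terms involves independent $\vec X_{ji},\vec X_{jl} \sim F_j$ and converges (by the same (A1)--(A2) + Slutsky route) to $\theta_{jj}^{0} = 2\sigma_j^2\,\Psi(2\sigma_j^2/C_0)$. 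The average over $n_j$ summands is therefore $\tfrac{n_j-1}{n_j}\theta_{jj}^{0} = (1-1/n_j)\theta_{jj}^{0} = \widetilde\theta_{jj}$, completing the proof. The only point requiring any care is this self-term bookkeeping in part (a); everything else is a direct transfer of (A1)--(A2) through the continuous function $t \mapsto \Psi(t/C_0)\,t$.
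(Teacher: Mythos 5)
Your proposal is correct and follows essentially the same route as the paper's proof: reduce each coordinate of $d^{-1}\widehat\Gamma_h$ to an average of terms of the form $\Psi\big(d^{-1}\|\cdot\|^2 \cdot d/h^2\big)\,d^{-1}\|\cdot\|^2$, invoke the pairwise distance convergence already established under (A1)--(A2) in the proof of Theorem~\ref{thm:MD_asymptotics_HDLSS} together with continuity of $\Psi$, and account for the single vanishing self-term when $k=j$ in part (a). No gaps.
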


\begin{figure}[t]
\begin{center} 
\includegraphics[width=0.475\linewidth,,height=2.5in]{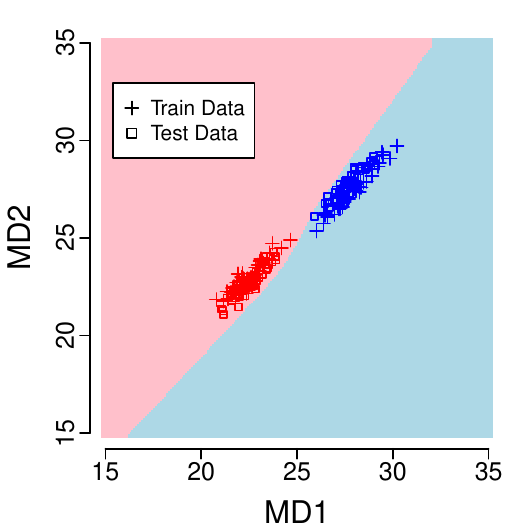}
\includegraphics[width=0.475\linewidth,height=2.5in]{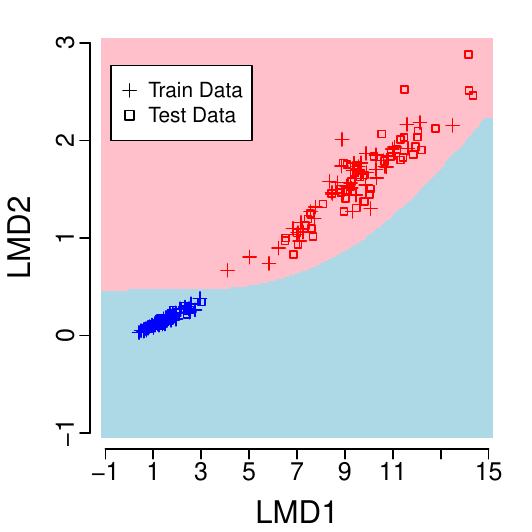}
\caption{Decision boundaries estimated by MD and LMD classifiers (with bootstrapped choice of $h$) along with scatter plots of empirical MD and LMD in a high-dimensional scale problem. { Here, MD1 and MD2 (respectively, LMD1 and LMD2) denote estimated Mahalanobis distances (respectively, local Mahalanobis distances) with respect to Class-1 and Class-2.}\label{fig:decision_boundary_normal_scale}}
\end{center}
\vspace{-0.15in}
\end{figure}

From the proof of the theorem, it can be seen that the result continues to hold for a data driven choice of the localization parameter $h$, as long as $h^2/d \overset{P}{\to} C_0$ as $d \to \infty$. If $h$ is chosen based on the median heuristic, it satisfies this condition. The condition is also satisfied if $h$ is chosen based on quantiles of pairwise distances. Theorem \ref{thm:LMD_asymptotics_HDLSS} shows that if the choice of the function $\Psi$ ensures that $\widetilde\Theta_1,\ldots,\widetilde\Theta_J$ are distinct, then the feature vectors of estimated LMDs obtained from the training data converge (after scaling by $d$) to $J$ distinct points, one for each class. As a result, the classifier based on GAM correctly classifies these $J$ points. If the training sample sizes are not too small, the limiting behaviour of the feature vectors for all the test set observations is almost the same as that of the training sample observations from the corresponding classes. Hence, the test cases are also correctly classified by the LMD classifier, and its error rate becomes close to $0$. This behavior of the LMD classifier can be seen in the right panel of Figure~\ref{fig:decision_boundary_normal_scale}, where we show the decision boundary of the LMD classifier and the scatter plots of the estimated LMDs for the training and the test samples.

Note that for the Gaussian kernel, the function $x\Psi(x/C_0)$ is monotonically increasing for $x<C_0$ and monotonically decreasing for $x>C_0$. So, in a two-class problem, if $d_{11}:=2\sigma_1^2$, $d_{12}:=\sigma_1^2+\sigma_2^2+\nu_{12}^2$ and $d_{22}:=2\sigma_2^2$ are on the same side of $C_0$, we have $\Theta_1^{\circ}=\Theta_2^{\circ}$ only when $d_{11}=d_{12}=d_{22}$, which in turn implies $\sigma_1^2=\sigma_2^2$ and $\nu_{12}^2=0$. One can always choose the tuning parameter $h$ (which determines $C_0$) in such a way that $d_{11},d_{12}$ and $d_{22}$ belong to either $(-\infty,C_0]$ or $[C_0, \infty)$. For instance, we can use the median heuristic on the observations from the two classes separately and use the smaller of the two as the tuning parameter. In that case, $\min\{d_{11},d_{22}\}/d$ plays the role of $C_0$ (note that $d_{12}$ cannot be smaller than both $d_{11}$ and $d_{22}$). A similar strategy works for classification problems involving more than two classes and also when the localization parameter is chosen based on the quantiles of the pairwise distances. However, this requirement is not necessary for $\Theta_j^{\circ} \neq \Theta_{j^\prime}^{\circ}$ to hold.

\subsection{Simulation Results}\label{sec:simulation_HDLSS}
To investigate the empirical behaviour of our proposed MD and LMD classifiers for high-dimensional data, here we consider some high-dimensional simulated examples. We take $d=500$, and in each case, the training set (respectively, test set) is formed by taking 100 (respectively, 1000) observations from each class. Each experiment is repeated 100 times to compute the average test set misclassification errors of different classifiers and the corresponding standard errors. These are reported in Table~\ref{tab:simulation_HDLSS}. For MD and LMD classifiers, we use both the identity matrix and the diagonals of the sample covariance matrices, and report the best result out of these two choices (separately for MD and LMD). The same strategy is adopted for LDA as well. For QDA, we use the diagonals of the sample covariance matrices.

\begin{ex}\label{example15}
We consider a location problem involving normal distributions $N_{500}(-0.2{\bf 1}_{500},\sigmat_{500})$ and $N_{500}(0.2{\bf 1}_{500},\sigmat_{500})$. Here, $\sigmat_{500}=((\rho^{|i-j|}))$ is an auto-correlation matrix with $\rho = 0.75$.
\end{ex}
As expected, in this location problem, LDA has the best performance, but the error rates of the other two linear classifiers, GLMNET and linear SVM, are  slightly higher. CART has poor performance in this high-dimensional example. Performances of all other classifiers are fairly satisfactory.

\begin{ex}\label{example16}
This is a scale problem with two multivariate normal distributions $N_{500}({\bf 0}_{500},\bm{\Gamma}_{500})$ and $N_{500}({\bf 0}_{500},1.5\,\bm{\Gamma}_{500})$, where $\bm{\Gamma}_{500} = ((\rho^{|i-j|}))$ with $\rho = 0.25$.
\end{ex}
In this problem, as expected, QDA performs well. On the other hand, all linear classifiers have misclassification rates close to 50\%. KDA and $k$NN also have poor performance. The reason behind the failure of $k$NN can be explained following \cite{hall2005geometric}. A similar argument can be given for KDA as well. CART also performs poorly, but the Random Forest has relatively better performance. Apart from QDA, nonlinear SVM, MD and LMD classifiers have excellent performance. Among them, the LMD classifier has an edge.

\medskip
The next three examples deal with binary classification involving mixture normal distributions. 
\begin{ex}\label{example17}
Class-$1$ is an equal mixture of $N_{500}(0.05{\bf 1}_{500},0.20{\bf I}_{500})$ and $N_{500}(-0.05{\bf 1}_{500}, 0.20{\bf I}_{500})$, while Class-$2$ is an equal mixture of $N_{500}(0.05{\bf a}_{500},0.25{\bf I}_{500})$ and $N_{500}(-0.05{\bf a}_{500},0.25{\bf I}_{500})$, where $\vec a_{500} = (-1,1,-1,1,\ldots,-1,1)^\top$ is as defined in Example~\ref{example9}.
\end{ex}
\begin{ex}\label{example18}
Class-$1$ is an equal mixture of $N_{500}(0.5{\bf 1}_{500},{\bf I}_{500})$ and $N_{500}(-0.5{\bf 1}_{500},4{\bf I}_{500})$, whereas Class-$2$ is an equal mixture of $N_{500}(0.5{\bf a}_{500},{\bf I}_{500})$ and $N_{500}(-0.5{\bf a}_{500}, 4{\bf I}_{500})$.
\end{ex}
In these two examples, the LMD classifier outperforms all its competitors. The MD classifier also has reasonably good performance in Example~\ref{example17}, but in Example~\ref{example18}, its misclassification rate is much higher. In these two examples, the competing classes do not satisfy assumptions (A1) and (A2), but each of the sub-classes satisfies these properties. So, depending on which of the two sub-classes an observation is coming from, the scaled version of $\widehat\Delta(\cdot)$ converges to two different points. It is not difficult to check that in the case of Example~\ref{example18}, these two points for one class are nearly the same as the corresponding points from the competing class (see the left panel of Figure~\ref{fig:decision_boundary_ex18}). This indistinguishability is the reason behind the poor performance of the MD classifier. For the LMD classifier, however, these four points (two from each class) are well separated (see the right panel of Figure~\ref{fig:decision_boundary_ex18}). As a result, the LMD classifier correctly classifies almost all observations.

\begin{figure}[t]
\begin{center} 
\includegraphics[width=0.475\linewidth,,height=2.5in]{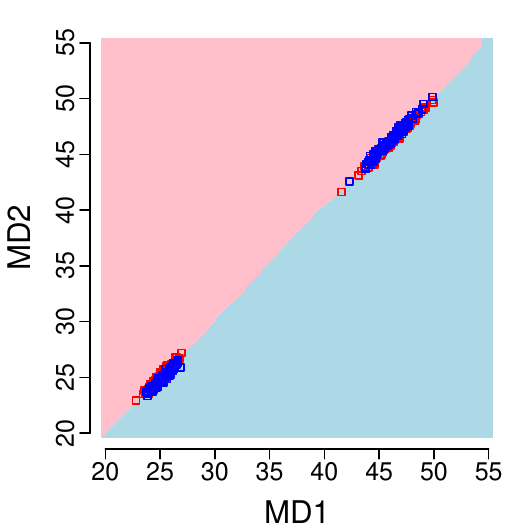}
\includegraphics[width=0.475\linewidth,height=2.5in]{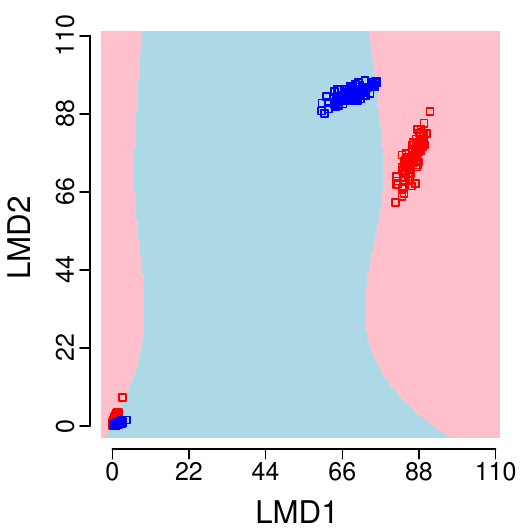}
\caption{Decision boundaries estimated by MD and LMD classifiers (with bootstrapped choice of $h$) along with scatter plots of empirical MD and LMD in Example~\ref{example18}. Here, MD1 and MD2 (respectively, LMD1 and LMD2) denote estimated Mahalanobis distances (respectively, local Mahalanobis distances) with respect to Class-1 and Class-2.\label{fig:decision_boundary_ex18}}
\end{center}
\vspace{-0.15in}
\end{figure}

\begin{table}[!b]
\setlength{\tabcolsep}{0.05in}
    \centering
    \small
    \begin{tabular}{c|ccccccccccccccc}
       Data & Bayes & LDA  & QDA & GLMNET & $k$NN & KDA & CART & Random & SVM & SVM & MD & LMD\\       
        set & & & & &  & & & Forest&  Linear& RBF & \multicolumn{2}{c}{(proposed)} \\ \hline
        
          Ex 17&  4.51 & \best{5.13} &  5.38 & 12.86 &  6.41 &  6.13 & 38.93 & 6.78  & 7.94 & 5.48 & 5.90  & 5.87  \\ 
           &  \se{0.08} & \se{0.09} & \se{0.09}& \se{0.16}&  \se{0.10}& \se{0.14}& \se{0.22} & \se{0.10}& \se{0.12}& \se{0.09} & \se{0.10} & \se{0.10} \\ \hline
          Ex 18&  0.00 & 48.48  & 0.91 & 49.28 & 50.00 & 50.00 & 44.83 & 12.98 & 49.11 & 1.74 &0.91  & \best{0.45} \\ 
          & \se{0.00}& \se{0.16} & \se{0.03 }& \se{0.16} & \se{0.00} & \se{0.00} & \se{0.25} & \se{0.16} & \se{0.15} &  \se{0.06} & \se{0.14} & \se{0.04} \\ \hline
          Ex 19&   0.00  & 49.70 & 50.00 & 49.94 & 49.90 & 49.99 & 48.26 & 33.41 & 50.08 & 19.61 & 13.50  & \best{5.17} \\ 
          &  \se{0.00}  & \se{0.16} & \se{0.00} & \se{0.17} & \se{0.02} & \se{0.01} & \se{0.18} & \se{0.17}& \se{0.15} & \se{0.18 }& \se{0.55} & \se{0.11} \\ \hline
          Ex 20&  0.00 & 48.70 & 49.33 & 43.87 &50.17 & 25.19 & 43.76& 22.54 & 42.34 & 28.69 & 37.86 &  \best{0.16} \\ 
          &  \se{0.00} & \se{0.40} &  \se{0.37} & \se{0.23}  & \se{0.15} & \se{0.12} & \se{0.29} & \se{0.40} & \se{0.15} & \se{0.53}& \se{1.50} & \se{0.03}\\ \hline
          Ex 21& 0.00 & 50.01 & 48.25 & 50.23  & \best{0.00} & \best{0.00}& 1.72 & \best{0.00} & 50.16 & \best{0.00} & 3.80 & \best{0.00}  \\ 
           & \se{0.00} & \se{0.11} & \se{0.78} & \se{0.23} & \se{0.00} & \se{0.00} & \se{0.05} & \se{0.00} & \se{0.11} & \se{0.00} & \se{0.73} & \se{0.00}\\ \hline
          Ex 22&  0.00 & 44.86  & 50.00 & 46.02 & 50.00 & 50.00 & 40.94 & 34.46 & 44.78 & 41.67 & \best{0.00}  & 0.14\\ 
          &  \se{0.00} & \se{0.13} & \se{0.00} & \se{0.17} &  \se{0.00} & \se{0.00} & \se{0.27}& \se{0.38} & \se{0.14} & \se{0.49} & \se{0.00} &  \se{0.08} \\ \hline
          Ex 23&4.28  &  48.64 & 22.15 & 47.34 & 50.00 & 14.26 & 40.69 & 17.37 & 45.85 & 15.82 & 17.42  & \best{6.84}   \\ 
           &  \se{0.0}  & \se{0.23}& \se{1.36} & \se{0.18}  & \se{0.00} & \se{0.15} & \se{0.28} & \se{0.12}& \se{0.11}& \se{0.13}  & \se{0.67}& \se{0.49} \\ \hline
           Ex 24&  0.00 & 50.08 & 50.00 &49.52 & 49.94 & 50.02&  50.04 &50.01 & 48.70 & 49.31 & 48.94  & \best{1.13} \\ 
          &   \se{0.00} & \se{0.15} & \se{0.00} & \se{0.16} & \se{0.03} & \se{0.03} & \se{0.13} & \se{0.17} & \se{0.15} & \se{0.17} & \se{0.42} & \se{0.68} \\ \hline
    \end{tabular}

    Boldface character signifies the best result in each case.
    \caption{Average misclassification rates (in \%) of different classifiers in Examples~\ref{example15}--\ref{example22}. The standard errors are reported in the next line within brackets in a smaller font.\label{tab:simulation_HDLSS}}
\end{table}

\begin{ex}\label{example19}
Both classes are equal mixtures of four normal distributions, each with covariance matrix $0.01\vec I_d$. For Class-$1$, the locations of the four sub-populations are $\vec a_d$, $-\vec a_d$, $\vec 1_d$ and $-\vec 1_d$. For Class-$2$, we take a rotation of the distribution for Class-$1$, where we rotate each successive pair of covariates by $45^\circ$ as done in Example~\ref{example14}.
\end{ex}
Here, the LMD classifier has zero misclassification rate along with $k$NN, KDA, Random Forest and Nonlinear SVM. CART and the MD classifier also have misclassification rates less than $5\%$. The rest of the classifiers have misclassification rates close to $50\%$.

\begin{figure}[b!]
\begin{center} 
\includegraphics[width=0.475\linewidth,,height=2.5in]{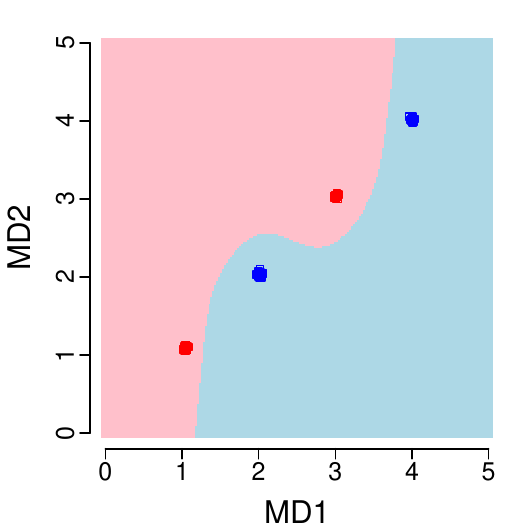}
\includegraphics[width=0.475\linewidth,height=2.5in]{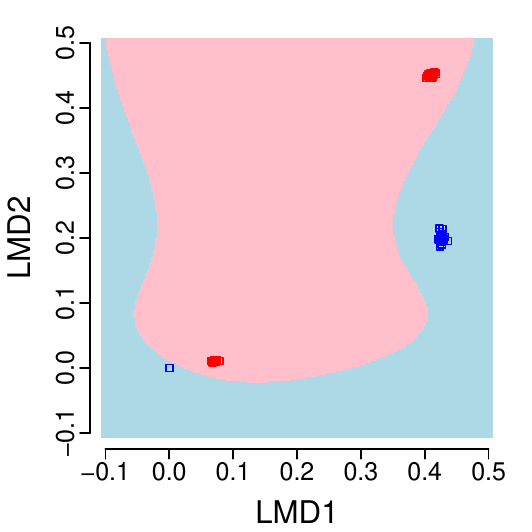}
 \caption{Decision boundaries estimated by MD and LMD classifiers (with bootstrapped choice of $h$) along with scatter plots of empirical MD and LMD in Example~\ref{example20}. Here, MD1 and MD2 (respectively, LMD1 and LMD2) denote estimated Mahalanobis distances (respectively, local Mahalanobis distances) with respect to Class-1 and Class-2.\label{fig:decision_boundary_ex20}}
\end{center}
\vspace{-0.15in}
\end{figure}

\begin{ex}\label{example20}
Class-1 is an equal mixture of $U_{500}(0,1)$ and $U_{500}(2,3)$, while Class-2 is an equal mixture of $U_{500}(1,2)$ and $U_{500}(3,4)$. Here, $U_{500}(a,b)$ is the uniform distribution on $\{\vec x \in \R^{500}: a \le \|\vec x\| \le b\}$, as defined in Example~\ref{example1}.
\end{ex}
This is a high-dimensional version of Example~\ref{example1}. Here, one can show that the sub-classes satisfy (A1) and (A2) \citep{sarkar2019perfect}. This is also quite evident from Figure~\ref{fig:decision_boundary_ex20}, where we plot the decision boundaries of the MD and LMD classifiers along with the scatter plots of the estimated MD and LMD values for the two classes. In this example, both MD and LMD classifiers have excellent performance. Their misclassification rates are very close to zero, while all the other classifiers have much higher misclassification rates.

\begin{ex}\label{example21}
We consider a classification problem between $N_{500}({\bf 0}_{500},3{\bf I}_{500})$ and the $500$-dimensional standard $t$ distribution with $3$ degrees of freedom.
\end{ex}
This is the high-dimensional version of Example~\ref{example5}, where the underlying distributions have the same mean and covariance, but differ in their shapes. Here, the assumption (A1) does not hold, and we analyze this dataset to investigate the performance of the MD and LMD classifiers under departure from the stipulated assumptions. Table~\ref{tab:simulation_HDLSS} shows that even in this example, the LMD classifier outperforms its competitors. Among the rest, KDA, nonlinear SVM, Random Forest, and the MD classifier have relatively better performance.

\begin{ex}\label{example22}
Class-1 is a mixture of uniform distributions on $R_{(0,1)}(-\vec c)$, $R_{(1,2)}(\vec c)$ and $R_{(2,3)}(-\vec c)$ with mixing proportions $0.25$, $0.5$ and $0.25$, respectively. Class-2 is a mixture of uniform distributions on $R_{(0,1)}(\vec c)$, $R_{(1,2)}(-\vec c)$ and $R_{(2,3)}(\vec c)$ with mixing proportions $0.25$, $0.5$ and $0.25$, respectively. Here, $R_{(a,b)}(\vec z) = \{\vec x \in \R^{500}: a \le \|\vec x - \vec z\| \le b\}$ and $\vec c = (5,0,\ldots,0)^\top$ are as defined in Example~\ref{example13}.
\end{ex}
This example is the high-dimensional version of Example~\ref{example13}. Here,  the LMD classifier has excellent performance, while all other classifiers including the MD classifier misclassify almost half of the observations. The reason behind this diametrically opposite behaviour of MD and LMD classifiers is the same as that in Example~\ref{example18}; indistinguishability (respectively, distinguishability) of the two populations in terms of the Mahalanobis distances (respectively, the localized Mahalanobis distances). This is also visible in Figure~\ref{fig:decision_boundary_ex22}.

\begin{figure}[t!]
\begin{center} 
\includegraphics[width=0.475\linewidth,,height=2.35in]{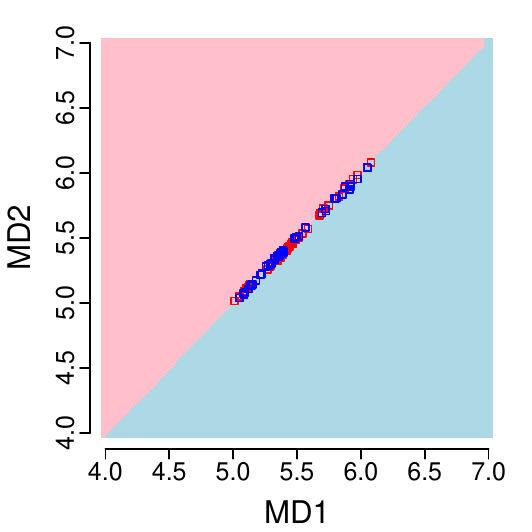}
\includegraphics[width=0.475\linewidth,height=2.35in]{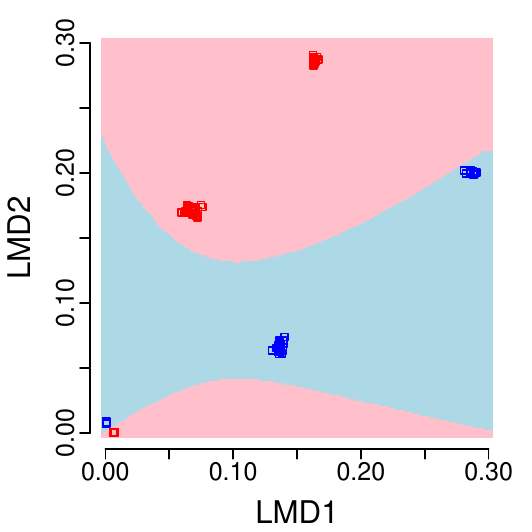}
 \caption{Decision boundaries estimated by MD and LMD classifiers (with bootstrapped choice of $h$) along with scatter plots of empirical MD and LMD in Example~\ref{example22}. {Here, MD1 and MD2 (respectively, LMD1 and LMD2) denote estimated Mahalanobis distances (respectively, local Mahalanobis distances) with respect to Class-1 and Class-2.}\label{fig:decision_boundary_ex22}}
\end{center}

\vspace{-0.2in}
\end{figure}

\section{Analysis of Benchmark Datasets\label{sec:real data}}
We analyze 18 benchmark datasets for further comparison of the performance of the proposed classifiers with other popular classifiers. Brief descriptions of these datasets are given in Table~\ref{tab:datasets}. ``Synthetic" and ``Biomed" datasets are taken from the CMU data archive (\url{http://lib.stat.cmu.edu/datasets/}). The Biomed dataset contains some observations with missing values, which we exclude from our analyses. ``Blood Transfusion" and ``Pima Indian Diabetes" datasets are taken from Kaggle (\url{https://www.kaggle.com/datasets}). The ``Colon Cancer" dataset can be obtained from the \texttt{R} package \texttt{rda}. ``Iris", ``Vehicle", ``Statlog (Landsat Satellite)" and ``Control Chart" datasets are taken from the UCI Machine Learning Repository (\url{https://archive.ics.uci.edu/datasets}). The Control Chart dataset contains six classes, out of which we consider only two classes, viz., `Normal' and `Cyclic' for our analysis. The rest of the datasets are taken from the UCR Time Series Classification Archive (\url{https://www.cs.ucr.edu/~eamonn/time_series_data_2018/}). Performances of different classifiers on these datasets are reported in Table~\ref{tab:benchmark_results}. The overall performance of GLMNET on these benchmark datasets is much better than that of logistic regression. Moreover, for some high-dimensional datasets, we received several warning messages while running the \texttt{R} codes for logistic regression. So, the error rates of logistic regression are not reported here.

\begin{table}[!b]
    \centering
    \small
    \begin{tabular}{|ccccc |c| ccccc|} \cline{1-5} \cline{7-11}
     Dataset & $d$ & $J$& \multicolumn{2}{c|}{Sample size} &  & dataset & $d$ & $J$& \multicolumn{2}{c|}{Sample size} \\
     &  & & Train & Test & & &  & & Train & Test\\ \cline{1-5} \cline{7-11}

         Synthetic $\ast$ & 2 & 2 & 250 & 1000 & &
        {Sony Robot Surface 1 $\ast$} & 70 & 2 & 20 & 601 \\ \cline{1-5}\cline{7-11}

          Blood Transfusion & 3 & 2 & 375 & 373 & &
        {Trace $\ast$} & 275 & 4 & 100 & 100\\ \cline{1-5} \cline{7-11}
        
          Iris & 4 & 3 & 75 & 75 & &   
        {Ham $\ast$} & 431 & 2 & 109 & 105\\ \cline{1-5} \cline{7-11}
         
        Biomed & 4 & 2 & 100 & 94 & &
        {Meat $\ast$} & 448 & 3 & 60 & 60\\ \cline{1-5} \cline{7-11}
        
        Pima Indian Diabetes & 8 & 2 & 385 & 383 & &
        {Earthquakes $\ast$} & 512 & 2 & 322 & 139\\ \cline{1-5} \cline{7-11}

         Vehicle & 18 & 4 & 425 & 421 & &
      Lightning 2 $\ast$ & 637 & 2 & 60 & 61   \\ \cline{1-5} \cline{7-11}

        {Italy Power Demand $\ast$} & 24 & 2 & 67 & 1029 & & 
        {Worms Two-class $\ast$} & 900 & 2 & 181 & 77 
        \\ \cline{1-5} \cline{7-11}

         Statlog $\ast$ 
         & 36 & 6 & 4435 & 2000      & &
       Colon Cancer & 2000 & 2 & 30 &32   \\ \cline{1-5} \cline{7-11}

        Control Chart & 60 & 2 & 100 & 100 & &
        {House Twenty $\ast$} & 2000 & 2 & 40 & 119 \\ \cline{1-5} \cline{7-11}
    \end{tabular}\\
    $\ast$ Datasets having specific training and test sets.
    
    \caption{Brief descriptions of the benchmark datasets.\label{tab:datasets}}
\end{table}

Synthetic, Statlog and all datasets from the UCR Time Series Archive have specific training and test samples. For these datasets, we report the test set misclassification rates of different classifiers. If a classifier has a misclassification rate $\varepsilon$, its standard error is computed as $\sqrt{\varepsilon(1-\varepsilon)/n_t}$, where $n_t$ is the size of the test set. For all other datasets, we form training and test samples by randomly partitioning the observations from each competing class into two parts of almost equal sizes. This random partitioning is done 100 times, and we report the average test set misclassification rates of different classifiers over these 100 random partitions along with their corresponding standard errors. Some of these datasets (see Table \ref{tab:datasets}) have dimensions larger than the corresponding training sample sizes. In those cases, both the identity matrix and the diagonals of the sample covariance matrices are used (as in Section~\ref{sec:HDLSS}) for constructing the MD classifier, and finally the one with a lower error rate is reported here. The same strategy is adopted for LDA and the LMD classifier as well. As before, QDA is implemented using the diagonals of the sample covariance matrices corresponding to the different classes. This strategy is also used for Italy Power Demand data, where due to collinearity among the feature variables, the pooled sample dispersion matrix turns out to be singular. In the case of Control Chart Data, the training sample sizes from each class is smaller than the data dimension. So, there also we use this strategy for QDA, MD and LMD classifiers. 

In \textbf{Synthetic} data, each class is a mixture of two distributions \citep{ripley2007pattern}, and the optimal class boundary is nonlinear. So, as expected, nonlinear classifiers (barring CART) perform better than linear classifiers. KDA and $k$NN have the lowest misclassification rate (8.4\%) followed by nonlinear SVM (9.2\%). QDA, MD and LMD classifiers have the same misclassification rate of 10.2\%, which is lower than the error rates of the remaining classifiers.

The \textbf{Blood Transfusion} data are collected from the Blood Transfusion Service Center in Hsin-Chu City, Taiwan. This dataset contains information on 748 blood donors and the aim is to predict whether the donor donated blood in March of 2007 \citep{yeh2009knowledge}. In this dataset, two features `frequency' (total number of blood donations)  and `monetary' (total blood donated in cc) have a perfect linear relation. So, we exclude `frequency' from our analysis. Here, all classifiers have almost similar performance with nonlinear SVM, MD and LMD classifiers having an edge.

\textbf{Iris} data contains measurements of sepal length, sepal width, petal length and petal width for three different species of Iris flowers: `Setosa', `Virginica' and `Versicolor'. This dataset was first analyzed by \cite{fisher1936use}, and it is known that the underlying distributions are almost normal. Unsurprisingly, LDA and QDA perform better than the other classifiers in this example. Linear SVM has the next best performance followed by the MD and LMD classifiers. All other classifiers except CART have similar misclassification rates. The error rate of CART is slightly higher.

\textbf{Biomed} dataset was created by \cite{cox1982exposition}. It contains information on 4 different measurements for 209 blood samples (134 `normals’ and 75 `carriers’ of a disease). Out of these 209 observations, 15 have missing values. We remove them and carry out our analysis using the remaining 194 observations (127 `normals’ and 67 `carriers’). As suggested by the diagnostic plots of \cite{li1997some}, the distributions of the two classes are nearly elliptic in this example. So, the MD classifier performs slightly better than the LMD classifier. Here, linear SVM and GLMNET have lower misclassification rates than others. QDA, MD and LMD classifiers also perform well, and they perform better than the rest of the competitors.

The \textbf{Pima Indian Diabetes} dataset was originally obtained from the US National Institute of Diabetes and Digestive and Kidney Diseases \citep{smith1988using}. It contains 8 measurements on 768 Pima Indian women (268 `diabetic' and 500 `non-diabetic') of age 21 years or more residing near Phoenix, Arizona, USA. In this example, all linear classifiers have lower error rates, but nonlinear classifiers also have satisfactory performance. Among the nonlinear methods, KDA and $k$NN have slightly higher error rates. Error rates of other nonlinear classifiers are almost similar.

In the \textbf{Vehicle} dataset, we have $18$ features extracted from the silhouettes of four different types of vehicles. The goal is to find the vehicle type from the extracted features. Here, the data distributions are nearly elliptic, which can be verified from the diagnostic plots of \cite{li1997some}. So, the MD classifier has slightly better performance than the LMD classifier. QDA has the lowest error rate in this example, followed by the MD and LMD classifiers. Except for $k$NN, KDA and CART, the other classifiers have reasonable performances.

The \textbf{Italy Power Demand} dataset deals with electrical power demands in Italy during Cold (October to March) and Warm (April to September) periods of a year. Here, each observation has 24 features, representing hourly power consumption over a day. The aim is to distinguish between the two periods based on the hourly power consumption patterns. In this dataset, all classifiers barring LDA and QDA perform well. Among them, CART and GLMNET have lower misclassification rates. The LMD classifier performs slightly better than the MD classifier.

The \textbf{Statlog} (Landsat Satellite) dataset contains information on multi-spectral values of pixels in $3 \times 3$ neighbourhoods in a satellite image. This is a 6-class classification problem where we want to classify the central pixel into one of six classes, each representing one of the six types of soil. In this example, nonlinear SVM has the lowest misclassification rate followed by Random Forest. KDA and $k$NN also perform well. The misclassification rates of MD and LMD classifiers are lower than the remaining classifiers. 

The \textbf{Control Chart} dataset contains examples of control charts synthetically generated using the process described by \cite{alcock1999time}. For our analysis, we consider $200$ observations belonging to the two classes `Cyclic' and `Normal'. Here, each observation is a time series observed at $60$ time points. In this example, all classifiers have very good performance and many of them including MD and LMD correctly classify all observations. 

The \textbf{Sony AIBO Robot Surface 1} dataset contains information from roll, pitch, and yaw accelerometers of a robot named Sony1, focusing solely on the X-axis. The primary objective is to identify the type of surface being traversed, specifically differentiating between the two classes: `cement' and `carpet'. In this dataset, LDA, QDA and the LMD classifier outperform their competitors. While they have an error rate of only 6.49\%, the MD classifier has the next best performance with an error rate of 16.81\%.

The \textbf{Trace} dataset is a synthetic dataset created by \cite{roverso2000multivariate} to simulate instrumentation failures in a nuclear power plant. This dataset includes four classes, which is a subset of a larger dataset with 16 classes. Each class consists of 50 instances, each of length 275. Here, MD and LMD classifiers perform exceptionally well with a misclassification rate of 8\%. Nonlinear SVM has the next best performance with 16\% misclassification.

The \textbf{Ham} dataset contains observations from two classes: `Spanish dry-cured hams' and `French dry-cured hams'. It has 109 training instances and 105 testing instances, obtained from multiple measurements of each ham sample. In particular, each of the 19 Spanish and 18 French ham samples are measured multiple times, resulting in a total of 214 observations. Each observation consists of 431 features, which represent different spectral properties of the ham samples. In this example, the MD classifier has the lowest misclassification rate followed by LMD. LDA, QDA, Nonlinear SVM and Random Forest perform better than the other competitors.

The \textbf{Meat} dataset consists of three classes: `chicken', `pork', and `turkey'. Each instance in the dataset has features obtained using Fourier transform infrared (FTIR) spectroscopy with attenuated total reflectance (ATR) sampling. The classification problem is to distinguish between chicken, pork, and turkey based on their spectroscopic data. For this dataset, MD and LMD classifiers have the lowest misclassification rate of 3.33\%, followed by QDA and GLMNET with 5\% misclassification. LDA and linear SVM have the next lowest misclassification rate of 6.67\%. Nonlinear SVM and CART have higher misclassification rates in this dataset.

\textbf{Earthquakes} classification problem involves predicting major events based on averaged hourly readings from the Northern California Earthquake Data Center. Here, we have two classes: `positive' and `negative'. A positive case is defined as a major earthquake (a reading over 5 on the Richter scale) that is not preceded by another major earthquake for at least 512 hours. A negative case is defined as a reading below 4, preceded by at least 20 non-zero readings in the previous 512 hours. The classification problem is to predict whether a major earthquake is about to occur based on the most recent seismic readings. In this dataset, $k$NN, KDA, CART, Random Forest, nonlinear SVM and the LMD classifier have the same misclassification rate of 25.18\%. GLMNET has an error rate of 26.62\%. Misclassification rates of other classifiers are slightly higher.

The \textbf{Lightning~2} dataset consists of 60 training and 61 test observations on two classes: `Cloud-to-Ground lightning' and `Intra-Cloud lightning'. The 637-dimensional observations were obtained from recordings of transient electromagnetic events by FORTE satellite with a two-step process. First, a Fourier transformation was used to get the spectrogram, which were collapsed to give a power density time series. These power density time series were finally smoothed to get the observations. In this example, $k$NN and KDA have the lowest misclassification rates (24.6\%). GLMNET, Random Forest and linear SVM have the second lowest misclassification rate of 26.23\%. MD and LMD classifiers have the next best performance with a misclassification rate of 27.87\%.

\begin{table}[!t]
\setlength{\tabcolsep}{0.03in}
    \centering
    \small
    \begin{tabular}{ccccccccccccc}
       Data set & LDA & QDA & GLM  & $k$NN & KDA & CART & Random & SVM & SVM & MD & LMD \\ 
        & &    &  NET  &  &   &  & Forest & Linear & RBF & \multicolumn{2}{c}{(proposed)}    \\ \hline

          Synthetic & 10.80 & 10.20  & 11.00 & \best{8.40} & \best{8.40} & 11.90 & 10.40 & 10.70 & 9.20 & 10.20 & 10.20\\ 
           & \se{0.98} & \se{0.96} & \se{0.99} & \se{0.88} & \se{0.88} & \se{1.02} & \se{0.97} & \se{0.98} & \se{0.91} & \se{0.96} & \se{0.96} \\ \hline

          Blood  & 22.66 & 22.43 & 22.65 &23.27 &23.04 & 22.57& 22.67 & 23.45 &22.09 & \best{21.91} &22.15\\ 
          Transfusion & \se{0.07} & \se{0.13}  & \se{0.07} & \se{0.14} & \se{0.16} & \se{0.15} & \se{0.15} & \se{0.03} & \se{0.16} & \se{0.11} & \se{0.09}\\ \hline

          Iris& \best{2.55} & 2.81  &5.03 & 4.72 & 4.39 & 7.32 &5.27 & 3.41 &5.68 & 3.99 & 4.31 \\
        & \se{0.16} & \se{0.17} & \se{0.21} & \se{0.26} & \se{0.18} & \se{0.57} & \se{0.18} & \se{0.18} & \se{0.24} & \se{0.23} & \se{0.26}\\ \hline

          Biomed & 15.01 &12.05 & 11.21 & 13.91 &14.91 &19.40 &13.10 & {\bf 11.05} &13.12 &12.49 & 12.72 \\
            & \se{0.31} & \se{0.27} & \se{0.27} & \se{0.30} & \se{0.40} & \se{0.44} & \se{0.30} & \se{0.24} & \se{0.30} & \se{0.27} & \se{0.27} \\ \hline
            
          Pima  Indian& 23.38 & 25.93 & 23.58 & 27.32 & 28.97 & 26.03 & 24.17 & \best{23.30} & 24.46 & 24.93 & 24.85\\
          Diabetes & \se{0.17} & \se{0.19} & \se{0.18} & \se{0.19} & \se{0.20} & \se{0.20} & \se{0.15} & \se{0.17} & \se{0.16} & \se{0.17} & \se{0.18} \\ \hline
       
          Vehicle &   22.57 & \best{16.47} & 21.02 & 37.53 & 36.47 & 32.58 & 25.96 & 21.46 &19.63 & 17.72 & 18.73\\
        & \se{0.14} & \se{0.15}  & \se{0.17} & \se{0.20} & \se{0.21} & \se{0.24} & \se{0.16} & \se{0.16} & \se{0.18} & \se{0.22} & \se{0.31} \\ \hline
        
        Italy Power & 8.55 & 10.30 &  3.11 & 3.69 & 4.57 & \best{3.01} & 4.57 & 3.89 & 6.12 & 6.03 & 5.25\\ 
        Demand & \se{0.87} & \se{0.95} & \se{0.54} & \se{0.59} & \se{0.65} & \se{0.53} & \se{0.65} & \se{0.6} & \se{0.75} & \se{0.74} & \se{0.70}\\ \hline

         Statlog  & 17.15 & 15.20 & 15.90& 9.60 & 9.70 &14.80 & 9.35 & 14.05 & \best{9.15} & 12.50 & 13.50 \\  
         (Landsat Satellite) &  \se{0.84} & \se{0.80} &  \se{0.82} & \se{0.66} & \se{0.66} & \se{0.79} & \se{0.65} & \se{0.78} & \se{0.64} & \se{0.74} & \se{0.76} \\ \hline  
         
        Control Chart & 0.59 & \best{0.00}  & 0.30 & 0.34 & 0.65 & 2.23 & \best{0.00} &\best{0.00} & \best{0.00} & \best{0.00} & \best{0.00} \\
        
        & \se{0.08} & \se{0.00} & \se{0.07} & \se{0.05} & \se{0.10} & \se{0.17} & \se{0.00} & \se{0.00} &\se{0.00} & \se{0.00} & \se{0.00}\\ \hline

        Sony AIBO & \best{6.49} & \best{6.49}  & 33.44 & 30.45 & 30.45 & 57.07 & 37.27 & 32.28 & 19.63 & 16.81 & \best{6.49} \\
       Robot Surface 1  & \se{1.00} & \se{1.00} & \se{1.92} & \se{1.88} & \se{1.88} & \se{2.02} & \se{1.97} & \se{1.91} & \se{1.62} & \se{1.53} & \se{1.00}\\ \hline

         Trace & 34.00 & 34.00 & 27.00 & 24.00 & 27.00 & 37.00 & 20.00 & 19.00 & 16.00 & \best{8.00} & \best{8.00} \\ 
         &\se{4.74} & \se{4.74} & \se{4.44} & \se{4.27} & \se{4.44} & \se{4.83} & \se{4} & \se{3.92} & \se{3.67} & \se{2.71} & \se{2.71}\\ \hline

         Ham & 23.81 & 25.71  & 34.29 & 40.00 & 40.00 & 29.52 & 27.62 & 38.10 & 26.67 & \best{21.90} & 22.86 \\
         & \se{4.16} & \se{4.27} & \se{4.63} & \se{4.78} & \se{4.78} & \se{4.45} & \se{4.36} & \se{4.74} & \se{4.32} & \se{4.04} & \se{4.10}\\ \hline

         Meat & 6.67 &5.00 & 5.00 & 8.33 & 8.00 & 16.67 & 8.00 & 6.67 & 10.00 & \best{3.33} & \best{3.33} \\ 
         & \se{3.22} & \se{2.81} & \se{2.81} & \se{3.57} & \se{3.50} & \se{4.81} & \se{3.50} & \se{3.22} & \se{3.87} & \se{2.32} & \se{2.32}\\ \hline

        Earthquakes & 29.50 & 32.37 & 26.62 & \best{25.18} & \best{25.18} & \best{25.18} & \best{25.18} & 35.97 & \best{25.18} & 30.94 & \best{25.18}\\
          & \se{3.87} & \se{3.97} & \se{3.75} & \se{3.68} & \se{3.68} & \se{3.68} & \se{3.68} & \se{4.07} & \se{3.68} & \se{3.92} & \se{3.68}\\ \hline
         
          Lightning 2&  32.79 & 32.79  & 26.23 & \best{24.60}  & \best{24.60} &37.71
          & 26.23 & 26.23 & 29.51 & 27.87 & 27.87\\ 
          &  \se{6.01} & \se{6.01}  & \se{5.63} & \se{5.51} & \se{5.51} & \se{6.21} & \se{5.63} & \se{5.63} & \se{5.84} & \se{5.74} & \se{5.74}\\ \hline 
          
          Worms Two-class & 48.05 & 48.05  & 48.05 & 38.96 & 38.96 & 41.56 & \best{35.06} & 46.75 & 36.36 & 46.75 & 36.36 \\
         & \se{5.69} & \se{5.69} & \se{5.69} & \se{5.56} & \se{5.56} & \se{5.62} & \se{5.44} & \se{5.69} & \se{5.48} & \se{5.69} & \se{5.48}\\ \hline
 
        Colon Cancer & \best{13.69} & 21.78 & 23.53 & 24.28 & 27.28 & 33.72 & 27.09 & 17.12 & 19.03 & 21.75 & 22.44 \\
        
        & \se{0.48} & \se{0.68}  & \se{1.20} & \se{0.61} & \se{0.46} & \se{0.65} & \se{0.62} & \se{0.56} & \se{0.82} & \se{0.76} & \se{0.76} \\ \hline

          House Twenty & 23.53 & 42.02 & 32.77 & 29.41 & 42.02 & 45.38 & 22.69 & 27.73 & 26.89 & \best{21.85} & 24.37 \\
         & \se{3.89} & \se{4.52} & \se{4.30} & \se{4.18} & \se{4.52} & \se{4.56} & \se{3.84} & \se{4.10} & \se{4.06} & \se{3.79} & \se{3.94}\\ \hline
    \end{tabular}

    Boldface character signifies the best result in each case.
    \caption{Average misclassification rates (in \%) of different classifiers in the benchmark datasets. The standard errors are reported in the next line within brackets in a smaller font.\label{tab:benchmark_results}}
    \vspace{-0.1in}
\end{table}

The \textbf{Worms Two-class} dataset consists of 258 traces of worms that have been transformed into four eigenworm series. This dataset focuses on the behavioral genetics of Caenorhabditis elegans, a roundworm commonly used in genetics research. The motion of the worms is represented by four scalar values, indicating the amplitudes along each dimension when the shape is projected onto eigenworms. There are two classes: N2 in one class (Wild), and mutant types including goa-1, unc-1, unc-38, and unc-63 in the second class (Mutant). The goal is to classify individual worms as Wild or Mutant based on the time series of the first eigenworm. The Random Forest classifier has the lowest misclassification rate (35.06\%) in this dataset closely followed by nonlinear SVM and the LMD classifier (36.36\%). Among the others, only $k$NN and KDA have misclassification rates lower than 40\%.

The \textbf{Colon Cancer} dataset contains information related to micro-array expression levels of 2000 genes from colon tissues, where we aim to classify the tissues into two classes: `normal' and `colon cancer' based on the expression levels of genes \citep{alon1999broad}. This dataset has a good linear separation among the observations from the two competing classes. As a result, linear classifiers like LDA and linear SVM have lower misclassification rates. Among the other classifiers, QDA, GLMNET, $k$NN, nonlinear SVM, MD, and LMD classifiers have satisfactory performance.

The \textbf{House Twenty} data collection was a part of the REFIT project, which aimed to develop personalized retrofit decision support tools for UK homes using smart home technology. It was created by \cite{murray2015data}. The dataset includes information from 20 households in the Loughborough area from 2013 to 2014. There are two classes: `household usage of electricity' and `electricity of tumble dryer and washing machine' \citep[see][for details]{murray2015data}. Here, the MD classifier has the minimum misclassification rate followed by Random Forest and LDA. Among the rest, both linear and nonlinear SVM and the LMD classifier perform better.

\begin{figure}[t]
\begin{center} 
\includegraphics[width=0.9\linewidth,,height=3.0in]{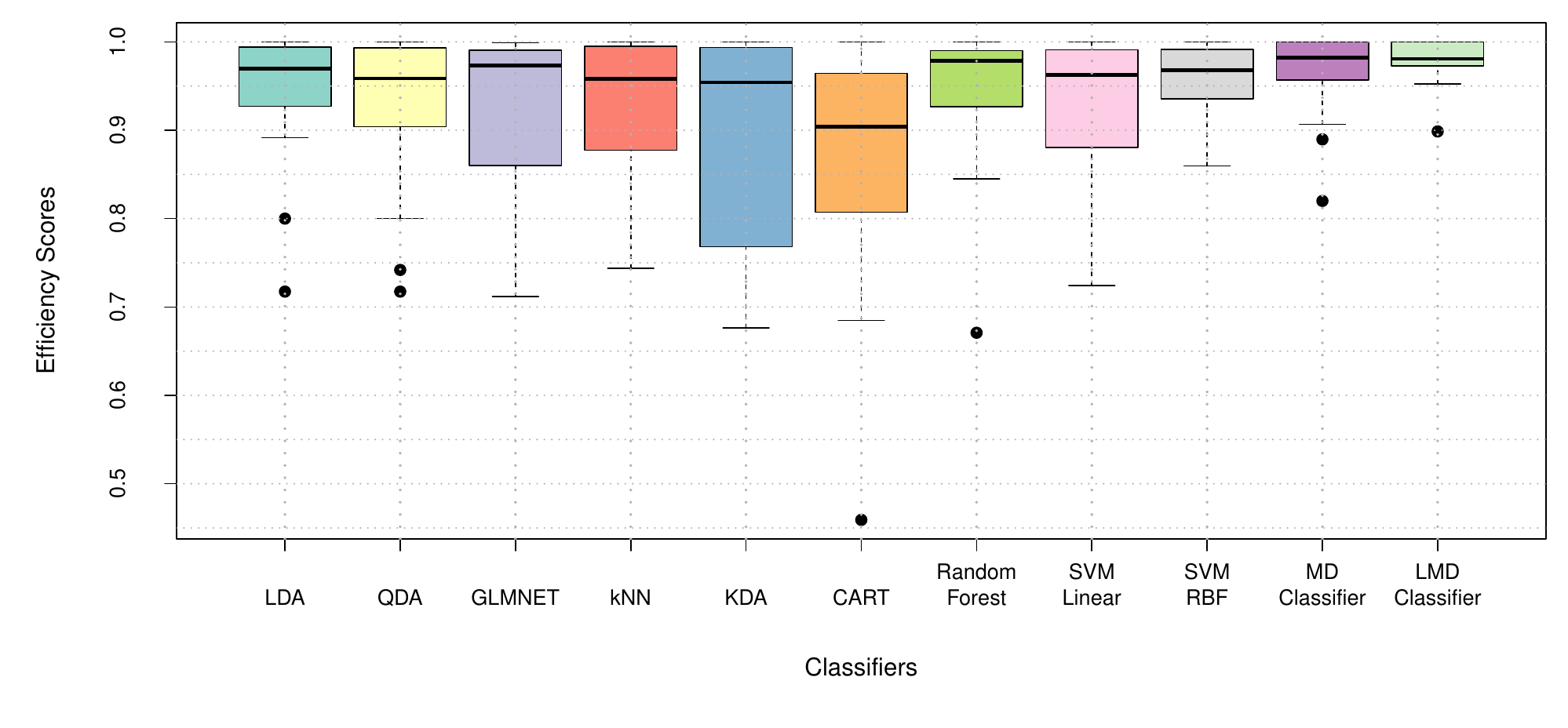}
\vspace{-0.25 in}
\caption{Boxplot of the efficincy scores of different classifiers on benchmark datasets.\label{fig:MD_LMD_boxplot}}
\end{center}
\vspace{-0.35in}
\end{figure}

To compare the overall performance of different classifiers on these benchmark datasets, following the idea \cite{ghosh2012hybrid}, we use a notion of efficiency. For a given dataset, if the correct classification rate of $T$ classifiers are $C_1,\ldots,C_T$, then the efficiency score of the $t$-th classifier is defined as $e_t=C_t/\max_{1 \le t \le T} C_t$. In any example, the best classifier has $e_t = 1$, while a value close to $1$ shows that the classifier performs similarly to the best classifier. On the other hand, a low value of $e_t$ indicates a poor relative performance (i.e., lack of efficiency) by the corresponding classifier. In each of these benchmark examples, we compute this ratio for all classifiers, and they are graphically represented by box plots in Figure~\ref{fig:MD_LMD_boxplot}. This figure clearly shows that the LMD classifier has the best overall performance. The MD classifier also outperforms most of the state-of-the-art classifiers considered in this article. Only Random Forest and nonlinear SVM have somewhat competitive overall performances.

\section{Concluding Remarks\label{sec:conclusion}}

In this article, we have proposed some classifiers based on Mahalanobis distances and local Mahalanobis distances. While popular parametric classifiers like LDA and QDA are mainly motivated by the normality of the underlying distributions, the MD classifier works well for a much broader class. Unlike logistic regression, GLMNET and linear SVM, it does not assume any linear form of the discriminating surface. So, if the Bayes class boundary is highly nonlinear, the MD classifier often performs much better than the linear classifiers. If the underlying distributions are elliptic, it usually outperforms popular nonparametric classifiers, especially when the sample size is small compared to the dimension of the data. However, in cases of non-elliptic, and more specifically multimodal distributions, the MD classifier may fail to yield satisfactory performance. The LMD classifier takes care of this problem. The data-driven choice of the localization parameter $h$ makes this classifier more flexible. For large values of $h$, it behaves like the MD classifier, which works well when the competing classes are nearly elliptic. At the same time, the use of small values of $h$ helps to cope with non-elliptic and multimodal distributions. For choosing the value of $h$, we have used the bootstrap method, which worked well in our numerical studies. However, one can use cross-validation or other resampling techniques \citep[e.g.,][]{ghosh2008error} as well. Moreover, for the LMD classifier, we have used the same tuning parameter for all the classes. It is possible to use different tuning parameters for different classes, albeit at the expense of an additional computation, which grows exponentially with the number of classes. In our numerical studies, the use of different tuning parameters (chosen by bootstrap) did not make any significant difference in the misclassification rates. Instead of choosing a particular value of $h$, one can also adopt a multiscale approach \citep[e.g.,][]{holmes2002probabilistic, holmes2003likelihood, ghosh2005visualization, ghosh2006classification} and judiciously aggregate the results obtained for several choices $h$ to come up with the final decision. Following a similar idea, we can use LMD computed for different values of $h$ as features and fit a generalized additive model based on them. The number of features can be large, and we can use a penalized method with LASSO or elastic net penalty \citep[e.g.,][]{hastie2015statistical} to estimate the posterior probabilities of different classes using a parsimonious model based on a lesser number of selected features.

Due to data sparsity in high dimensions, when many nonparametric classifiers perform poorly, the proposed MD and LMD classifiers can have excellent performance. Analyzing several simulated datasets, we have amply demonstrated these features of the proposed classifiers in this article. Analyses of benchmark datasets also show that our proposed classifiers can be at par or better than the state-of-the-art classifiers in a wide variety of classification problems.

\section*{Acknowledgement}
 The authors would like to thank the Editor-in-chief, Associate Editor, and an anonymous reviewer for their insightful comments, which led to an improved version of the manuscript. The research of Soham Sarkar was partially supported by the INSPIRE Faculty Fellowship from the Department of Science and Technology, Government of India. 

\appendix
\section*{Appendix: Proofs and Mathematical Details}

\begin{proof}[Proof of Theorem~\ref{thm:MD_GAM}]
Recall that the density $f_j$ of an elliptic distribution with location $\muvec_j$ and scatter matrix $\sigmat_j$ can be expressed as
\[
f_j({\bf x}) = C_j |\sigmat_j|^{-1/2} \phi_j(\delta_j({\bf x})),
\]
where $\delta_j({\bf x}) = \{({\bf x} -\muvec_j )^\top\sigmat_j^{-1}({\bf x} -\muvec_j )\}^{1/2}$.  Let $\pi_{j}$ be the prior probability of the $j$-th class. Then, for $j=1,\ldots,J-1$, we have
\[
\log\left(\frac{p(j \mid \vec x)}{p(J \mid \vec x)}\right) = \log\left(\frac{\pi_{j}f_{j}(\mathbf{x})}{\pi_{J}f_{J}(\mathbf{x})}\right) =(\alpha_j - \alpha_J) + \psi_j(\delta_j ({\bf x})) - \psi_J (\delta_J({\bf x})),
\]
where $\psi_j(t)= \log(\phi_j(t))$ and $\alpha_j= \log(C_j \pi_j |\sigmat_j|^{-1/2})$. Now, define $g_j(\Delta({\bf x)})=\sum_{k} g_{jk}(\delta_j(\vec x))$, where 
\[
g_{jk}(t)=\begin{cases}\alpha_j + \psi_j(t) & \text{if } k=j,\\
-(\alpha_J + \psi_J(t)) & \text{if } k=J,\\
0 & \text{if } k \neq {j, J}.
\end{cases}
\]
Hence, $\log\big(p(j \mid \mathbf{x})/p(J \mid {\bf x})\big) = g_j\big(\Delta(\vec x)\big)$ or $p(j \mid \vec x)=\exp\big\{g_j\big(\Delta(\vec x)\big)\big\}\,p(J \mid \vec x)$ for $j=1,2,\ldots,J-1$,
where $g_j(\cdot)$ is an additive function. The result now follows by noting that the sum of all posterior probabilities is $1$.
\end{proof}

\begin{proof}[Proof of Lemma~\ref{lemma:MD_uniform_convergence}]
Note that
\begin{align*}
&\left|(\vec x-\widehat\muvec)^\top \widehat\sigmat^{-1}(\vec x-\widehat\muvec) - (\vec x-\muvec)^\top \sigmat^{-1}(\vec x-\muvec)\right| \\
&=\left|(\vec x-\widehat\muvec)^\top \widehat\sigmat^{-1}(\vec x-\widehat\muvec) - (\vec x-\widehat\muvec)^\top \sigmat^{-1}(\vec x-\widehat\muvec) + (\vec x-\widehat\muvec)^\top \sigmat^{-1}(\vec x-\widehat\muvec) - (\vec x-\muvec)^\top \sigmat^{-1}(\vec x-\muvec) \right| \\
&=\left|(\vec x-\widehat\muvec)^\top (\widehat\sigmat^{-1}\mkern-2mu - \sigmat^{-1}) (\vec x-\widehat\muvec) + (\muvec - \widehat\muvec)^\top \sigmat^{-1}(2\vec x-\muvec-\widehat\muvec) \right| \\
&\le \left|(\vec x-\widehat\muvec)^\top(\widehat\sigmat^{-1}\mkern-2mu-\sigmat^{-1})(\vec x-\widehat\muvec)\right| + \left|(\widehat\muvec-\muvec)^\top \sigmat^{-1} (2\vec x-\widehat\muvec-\muvec)\right|.
\end{align*}
The second term on the right can be bounded by $\lambda_{\max}(\sigmat^{-1})\,\|\widehat\muvec-\muvec\|\,\big(2\|\vec x\|+\|\widehat\muvec\|+\|\muvec\|\big)$, where $\lambda_{\max}$ denotes the largest eigenvalue of a matrix. Since $\|\widehat\muvec - \muvec\| = O_P(n^{-1/2})$, it follows that $\|\widehat\muvec\| = O_P(1)$. Consequently, for any compact set $\mathcal C \subset \R^d$, $\sup_{{\bf x} \in {\cal C}} \big|(\widehat\muvec - \muvec)^\top \sigmat^{-1} (2 \vec x - \widehat\muvec - \muvec)\big| = O_P(n^{-1/2})$. Now, for the first term on the right in the equation above, we can find an upper bound $\max_{i=1,\ldots,d} |\lambda_i(\widehat\sigmat^{-1}\mkern-2mu-\sigmat^{-1})|\,\|\vec x - \widehat\muvec\|^2$, where $\lambda_i$ denotes the $i$-th ($i=1,\ldots,d$) largest eigenvalue of a matrix. Since $\max_{i=1,\ldots,d} |\lambda_i(\widehat\sigmat^{-1}\mkern-2mu-\sigmat^{-1})| \le \|\widehat\sigmat^{-1}\mkern-2mu-\sigmat^{-1}\|_{\rm F} = O_P(n^{-1/2})$, it follows that $\sup_{\vec x \in \mathcal C} |(\vec x-\widehat\muvec)^\top(\widehat\sigmat^{-1}\mkern-2mu-\sigmat^{-1})(\vec x-\widehat\muvec)| = O_P(n^{-1/2})$. The proof of the lemma follows by combining the two parts.
\end{proof}

\begin{proof}[Proof of Lemma~\ref{lemma:LMD_asymptotics}]
(a) Since $K$ is continuous at $\vec 0$, for any $\vec x \in \R^d$,
\[
K\Big(\frac{\sigmat^{-1/2}(\vec x-\vec X)}{h}\Big) \to K(\vec 0) \qquad \text{as } h \to \infty.
\]
Moreover, since $K$ is spherically symmetric about $\vec 0$ and $K(\vec t) \le K(\vec 0)$ for all $\vec t \in \R^d$, part (a) of the Lemma now follows from a simple application of the Dominated Convergence Theorem.

\medskip
\noindent
(b) For small $h$, $\beta_h(\vec x)$ can be expressed as
\begin{align*}
\beta_{h}(\vec x) &= \int K\Big(\frac{\sigmat^{-1/2}(\vec x-\vec y)}{h}\Big)(\vec x-\vec y)^\top\sigmat^{-1}(\vec x-\vec y)\,f(\vec y)\,d\vec y\\
&= h^{d+2}\,|\sigmat|^{1/2} \int \vec z^\top\vec z\,K(\vec z)\,f(\vec x-h\sigmat^{1/2}\vec z)\,d\vec z \kern10ex \Big(\text{substituting } \vec z=\frac{\sigmat^{-1/2}(\vec x-\vec y)}{h}\Big)\\
 &= h^{d+2}\,|\sigmat|^{1/2} \int \vec z^\top\vec z\,K(\vec z)\big\{f(\vec x)-h{\vec z}^{\top}\sigmat^{1/2}\nabla f(\xivec)\big\}\,d\vec z,\mkern10mu \text{(using first order Taylor expansion)}\\
&\kern45ex \mbox{ where } \xivec=\alpha \vec x +(1-\alpha)\vec z \text{ for some } \alpha \in (0,1).
\end{align*}

\noindent
Now, using Cauchy-Schwarz inequality, one gets 
\[
\Big|\int \vec z^\top\vec z\,K(\vec z)\,{\vec z}^{\top}\sigmat^{1/2}\nabla f(\xivec)\,d\vec z\Big| 
\le \sqrt{\lambda_{\text{max}}(\sigmat)} \, \sup_{\vec x} \|\nabla f(\vec x)\| 
\int \|\vec z\|^3\,K(\vec z)\,d\vec z .
\]
So, if $\sup_x \|\nabla f(\vec x)\|<\infty$ and $\int \|\vec z\|^3 K(\vec z)\,d\vec z <\infty$,  as $h \to 0$, we have
\[
\beta_{h}(\vec x)/h^{d+2} = |\sigmat|^{1/2} f(\vec x) \int \vec z^\top\vec z\,K(\vec z)\,d\vec z +O(h) \rightarrow |\sigmat|^{1/2} f(\vec x) \kappa_2 \quad\text{as } h \to 0. \qedhere
\]
\end{proof}

\begin{proof}[Proof of Theorem~\ref{thm:LMD_GAM}]
From Lemma~\ref{lemma:LMD_asymptotics}, we have $\gamma_j^0(\vec x) = \lim_{h \downarrow 0} \gamma_j^{h}(\vec x) = |\sigmat_j|^{1/2}\kappa_2\,f_j({\bf x})$ for $j=1,\ldots,J$. So, $f_j(x) = \kappa_2^{-1} |\sigmat_j|^{-1/2} \gamma_j^0(\vec x)$. Now, for $j=1,\ldots,J-1$
\[
\log\left(\frac{p(j \mid \vec x)}{p(J \mid \vec x)}\right) = \log\left(\frac{\pi_{j}f_{j}(\vec x)}{\pi_{J}f_{J}(\vec x)}\right) = (b_j - b_J) + \log\big(\gamma^0_j(\vec x)\big) - \log\big(\gamma^0_J(\vec x)\big),
\]
where $b_j= \log(\pi_j |\sigmat_j|^{-1/2})$. The rest of the proof is similar to the proof of Theorem~\ref{thm:MD_GAM}.
\end{proof}

\begin{proof}[Proof of Lemma~\ref{lemma:LMD_uniform_convergence}]
Recall that for any $h>0$, 
\begin{align*}
    \beta_h(\vec x) &= E\left\{K\Big(\frac{\sigmat^{-1/2}(\vec x-\vec X)}{h}\Big)(\vec x-\vec X)^\top\sigmat^{-1}(\vec x-\vec X)\right\}\text{ and } \\
    \widehat\beta_h(\vec x) &= \frac{1}{n}\sum_{i=1}^{n} K\Big(\frac{\widehat\sigmat^{-1/2}(\vec x-\vec X_i)}{h}\Big)(\vec x-\vec X_i)^\top\widehat\sigmat^{-1}(\vec x-\vec X_i).
\end{align*}
Therefore, $|{\widehat\beta}_h({\bf x})-\beta_h({\bf x})|$ can be expressed as
\begin{align*}
& \left|\frac{1}{n}\sum_{i=1}^{n} K\Big(\frac{\widehat\sigmat^{-1/2}(\vec x-\vec X_i)}{h}\Big)(\vec x-\vec X_i)^\top\widehat\sigmat^{-1}(\vec x-\vec X_i) - E\left\{K\Big(\frac{\sigmat^{-1/2}(\vec x-\vec X)}{h}\Big)(\vec x-\vec X)^\top\sigmat^{-1}(\vec x-\vec X)\right\}\right|\\
&= \Bigg|\frac{1}{n}\sum_{i=1}^{n} \Psi\bigg(\frac{(\vec x-\vec X_i)^\top\widehat\sigmat^{-1}(\vec x-\vec X_i)}{h^2}\bigg)(\vec x-\vec X_i)^\top\widehat\sigmat^{-1}(\vec x-\vec X_i)\\
&\kern40ex -E\left\{\Psi\bigg(\frac{(\vec x-\vec X)^\top\sigmat^{-1}(\vec x-\vec X)}{h^2}\bigg)(\vec x-\vec X)^\top\sigmat^{-1}(\vec x-\vec X)\right\}\Bigg|\\
& \le \Bigg|\frac{1}{n}\sum_{i=1}^{n} \Psi\bigg(\frac{(\vec x-\vec X_i)^\top\widehat\sigmat^{-1}(\vec x-\vec X_i)}{h^2}\bigg)(\vec x-\vec X_i)^\top\big(\widehat\sigmat^{-1}\mkern-2mu-\sigmat^{-1})(\vec x-\vec X_i)\Bigg|\\
&\kern3ex + \Bigg|\frac{1}{n}\sum_{i=1}^{n} \bigg\{\Psi\bigg(\frac{(\vec x-\vec X_i)^\top\widehat\sigmat^{-1}(\vec x-\vec X_i)}{h^2}\bigg)-\Psi\bigg(\frac{(\vec x-\vec X_i)^\top\sigmat^{-1}(\vec x-\vec X_i)}{h^2}\bigg)\bigg\}(\vec x-\vec X_i)^\top\sigmat^{-1}(\vec x-\vec X_i) \Bigg|\\
&\kern3ex+ \Bigg|\frac{1}{n}\sum_{i=1}^{n}\Psi\bigg(\frac{(\vec x-\vec X_i)^\top\sigmat^{-1}(\vec x-\vec X_i)}{h^2}\bigg)(\vec x-\vec X_i)^\top\sigmat^{-1}(\vec x-\vec X_i)\\
&\kern40ex-E\left\{\Psi\bigg(\frac{(\vec x-\vec X)^\top\sigmat^{-1}(\vec x-\vec X)}{h^2}\bigg)(\vec x-\vec X)^\top\sigmat^{-1}(\vec x-\vec X)\right\}\Bigg|\\
&=: A_{1,n}(\vec x) + A_{2,n}(\vec x) + A_{3,n}(\vec x).
\end{align*}
If $\Psi$ is bounded, i.e., $\sup_t |\Psi(t)| \le M$ for some $M>0$, following the proof of Lemma~\ref{lemma:MD_uniform_convergence}, we get
\[
A_{1,n}(\vec x) \le M \max_{i=1,\ldots,d}\big|\lambda_i(\widehat\sigmat^{-1}\mkern-2mu-\sigmat^{-1})\big|\,\frac{1}{n}\sum_{i=1}^{n} \|\vec x-\vec X_i\|^2=O_P(n^{-1/2}) \Big\{\|\vec x\|^2 + \frac{1}{n}\sum_{i=1}^{n} \|\vec X_i\|^2\Big\}.
\]
Since $n^{-1} \sum_{i=1}^n \|\vec X_i\|^2 = O_P(1)$, it follows that for any compact set $\mathcal C$, $\sup_{\vec x \in \mathcal C} A_{1,n}(\vec x) =O_p(n^{-1/2})$. Next, since $\Psi$ is Lipschitz continuous, we have
\begin{align*}
A_{2,n}(\vec x) &\le \frac{1}{n} \sum_{i=1}^{n} L \frac{1}{h^2}\Big|(\vec x-\vec X_i)^\top(\widehat\sigmat^{-1}\mkern-2mu-\sigmat^{-1})(\vec x-\vec X_i)\Big|\,(\vec x-\vec X_i)^\top\sigmat^{-1}(\vec x-\vec X_i)\\
&\le \frac{L}{h^2} \max_{i=1,\ldots,d}\big|\lambda_i(\widehat\sigmat^{-1}\mkern-2mu-\sigmat^{-1})\big|\,\lambda_{\max}(\sigmat^{-1})\,\frac{1}{n}\sum_{i=1}^{n} \|\vec x-\vec X_i\|^4 = O_P(n^{-1/2}) \Big\{\|\vec x\|^4 + \frac{1}{n}\sum_{i=1}^{n}\|\vec X_i\|^4\Big\},
\end{align*}
where $L$ is the Lipschitz constant for $\Psi$. Since $\vec X$ has finite fourth moments, we have $n^{-1}\sum_{i=1}^n \|\vec X_i\|^4 = O_P(1)$, from which it follows that $\sup_{\vec x \in \mathcal C} A_{2,n}(\vec x) = O_P(n^{-1/2})$. Finally, $A_{3,n}$ is of the form $n^{-1}\sum_{i=1}^n Y_i - \E(Y_1)$, where $Y_1,\ldots,Y_n$ are i.i.d.\ random variables. Moreover, since $\Psi$ is bounded and $\vec X$ has finite fourth moments, it follows that
\begin{align*}
E(Y^2) &= E\left[\bigg\{\Psi\bigg(\frac{(\vec x-\vec X)^\top\sigmat^{-1}(\vec x-\vec X)}{h^2}\bigg)(\vec x-\vec X)^\top\sigmat^{-1}(\vec x-\vec X)\bigg\}^2\right] \\
& \le M^2 E\big[\{(\vec x-\vec X)^\top\sigmat^{-1}(\vec x-\vec X)\}^2] \\
& \le M^2 \lambda_{\max}^2(\sigmat^{-1}) E(\|\vec x-\vec X\|^4) \le 8 M^2\lambda_{\max}^2(\sigmat^{-1})\,\{\|\vec x\|^4 + E(\|\vec X\|^4)\}.
\end{align*}
Therefore, $\sup_{\vec x \in \mathcal C} Var(Y)$ is finite. Hence, using the Central Limit Theorem, we have 
\[
A_{3,n}(\vec x) = O_P\big(n^{-1/2}\sqrt{Var(Y)}\big) = O_P(n^{-1/2}),
\]
uniformly over $\vec x \in \mathcal C$, i.e., $\sup_{\vec x \in \mathcal C} A_{3,n}(\vec x) = O_P(n^{-1/2})$. Combining the three parts, we get 
\[
\sup_{\vec x \in \mathcal C} \big|\widehat\beta_h(\vec x)-\beta_h(\vec x)\big| \le \sup_{\vec x \in \mathcal C} \{A_{1,n}(\vec x) + A_{2,n}(\vec x) + A_{3,n}(\vec x)\} = O_P(n^{-1/2}).
\]
Recall that 
\[
|\widehat\gamma^{h}(\vec x)-\gamma^{h}(\vec x)| = \begin{cases}
    |\widehat\beta^{h}(\vec x)-\beta^{h}(\vec x)| & \text{if } h > 1,\\
    |\widehat\beta^{h}(\vec x)-\beta^{h}(\vec x)|/h^{d+2} & \text{if } h \le 1. \end{cases}
\]
Thus, $\sup_{\vec x \in \mathcal C} |\widehat\gamma^{h}(\vec x)-\gamma^{h}(\vec x)| = O_P(n^{-1/2})$, as intended.
\end{proof}

\begin{proof}[Proof of Theorem~\ref{thm:MD_asymptotics_HDLSS}]
From (A1), for independent random vectors $\vec X \sim F_j$ and $\vec Y, \vec Z \sim F_{j^\prime}$, we have
\begin{align*}
\left|\frac{1}{d}\|{\bf X}-{\bf Y}\|^2 - \frac{1}{d}E(\|{\bf X}-{\bf Y}\|^2)\right| \stackrel{P}{\rightarrow} 0 \mbox{ and }  \left|\frac{1}{d} \langle{\bf X}-{\bf Y},{\bf X}-{\bf Z}\rangle - \frac{1}{d}E(\langle{\bf X}-{\bf Y},{\bf X}-{\bf Z}\rangle)\right| \stackrel{P}{\rightarrow} 0 \mbox{ as } d \rightarrow \infty.
\end{align*}
Now, using (A2), one can verify that $d^{-1} E(\|{\bf X}-{\bf Y}\|^2) \to \nu_{jj^\prime}^2+\sigma_j^2+\sigma_{j^\prime}^2$ as $d \to \infty$. So,
\[
\frac{\|\vec X-\vec Y\|^2}{d} \overset{P}{\to} \begin{cases} \nu_{jj^\prime}^2 + \sigma_j^2 + \sigma_{j^\prime}^2 & \text{if } j \neq j^\prime, \\
                                                2 \sigma_j^2 & \text{if } j = j^\prime. \end{cases}
\]
Again, using (A2), one can show that ${d}^{-1}E(\langle{\bf X}-{\bf Y},{\bf X}-{\bf Z}\rangle)$ converges to $\nu_{jj'}^2+\sigma_j^2$ as $d \rightarrow \infty$. So,
\[
\frac{\langle \vec X-\vec Y,  \vec X-\vec Z \rangle}{d} \overset{P}{\to} \begin{cases} \nu_{jj^\prime}^2 + \sigma_j^2 & \text{if } j \neq j^\prime, \\
                                                \sigma_j^2 & \text{if } j = j^\prime. \end{cases}
\]

\medskip
\noindent
(a) Since we have used $\widehat\muvec_j = \overline{\vec X}_j = n_j^{-1} \sum_{i=1}^{n_j} \vec X_{ji}$ and $\widehat\sigmat_j = \vec I$, the estimated Mahalanobis distance is given by $\widehat\delta_j(\vec X) = \|\vec X - \overline{\vec X}_j\|^2$. So, for any $\vec X_{ji}$ from the $j$-th class,
\begin{align}\label{eq:sample_MD_convergence_proof_1}
d^{-1}\widehat\delta_j(\vec X_{ji}) = d^{-1}\|{\vec X}_{ji} -{\overline{\vec X}}_j\|^2 &= \frac{1}{n_j^2} \left[ \sum_{k \neq i} d^{-1}\|{\vec X}_{ji}-{\vec X}_{jk}\|^2 + \sum_{k \neq k^{\prime} \neq i} d^{-1} \langle {\vec X}_{ji}-{\vec X}_{jk}, {\vec X}_{ji} -{\vec X}_{jk^{\prime}} \rangle \right] \nonumber\\
& \stackrel{P}{\rightarrow} \frac{1}{n_j^2}\Big[(n_j-1)2\sigma_j^2 + (n_j-1)(n_j-2)\sigma_j^2\Big] = \Big(1-\frac{1}{n_j}\Big)\sigma_j^2 = \theta_{jj}.
\end{align}
Similarly, for $j \neq j^{\prime}$, we have 
\begin{align}\label{eq:sample_MD_convergence_proof_2}
d^{-1}\widehat\delta_{j^{\prime}}(\vec X_{ji}) = d^{-1}\|{\vec X}_{ji} -{\overline{\vec X}}_{j^{\prime}}\|^2 &= \frac{1}{n_{j^{\prime}}^2} \left[ \sum_{k} d^{-1}\|{\vec X}_{ji}-{\vec X}_{j^{\prime}k}\|^2 + \sum_{k \neq k^{\prime}} d^{-1} \langle {\vec X}_{ji}-{\vec X}_{j^{\prime}k}, {\vec X}_{ji} -{\vec X}_{j^{\prime}k^{\prime}} \rangle \right] \nonumber\\
&\stackrel{P}{\rightarrow} \frac{1}{n_{j^{\prime}}^2}\Big[n_{j^{\prime}}(\nu_{jj^{\prime}}^2+\sigma_j^2+\sigma_{j^{\prime}}^2) + n_{j^{\prime}}(n_{j^{\prime}}-1)(\nu_{jj^{\prime}}^2+\sigma_j^2)\Big] \nonumber\\
&= \nu_{jj^{\prime}}^2+\sigma_j^2+\frac{\sigma_{j^{\prime}}^2}{n_{j^{\prime}}} = \theta_{jj^\prime}.
\end{align}
Combining \eqref{eq:sample_MD_convergence_proof_1} and \eqref{eq:sample_MD_convergence_proof_2}, we get that $d^{-1}\widehat\Delta(\vec X_{ji}) = d^{-1}\big(\widehat\delta_1(\vec X_{ji}),\ldots,\widehat\delta_J(\vec X_{ji})\big) \overset{P}{\to} (\theta_{j1},\ldots,\theta_{jJ}) = \Theta_j$.

\medskip
\noindent
(b) Now, for a future observation  ${\vec X}$ from the $j$-th class,
\begin{align*}
d^{-1}\widehat\delta_j(\vec X) &= d^{-1}\|{\vec X} -{\overline{\vec X}}_j\|^2=\frac{1}{n_j^2} 
\left[ \sum_{k} d^{-1}\|{\vec X}-{\vec X}_{jk}\|^2 + \sum_{k \neq k^{\prime}} d^{-1} \langle {\vec X}-{\vec X}_{jk}, {\vec X} -{\vec X}_{jk^{\prime}} \rangle \right] \\
& \stackrel{P}{\rightarrow} \frac{1}{n_j^2}\Big[2n_j\sigma_j^2 + n_j(n_j-1)\sigma_j^2\Big] = \Big(1+\frac{1}{n_j}\Big)\sigma_j^2
\end{align*}
For $j \neq j^{\prime}$, the convergence of $d^{-1} \widehat\delta_{j^{\prime}}(\vec X)$ follows from part (a), which in turn proves the probability convergence of ${\widehat \Delta}({\bf X})$ to $\Theta_j^{\ast}$.
\end{proof}

\begin{proof}[Proof of Theorem~\ref{thm:LMD_asymptotics_HDLSS}]
Recall that $\widehat\Gamma_h(\vec x) = \big(\widehat\gamma^{h}_1(\vec x),\ldots,\widehat\gamma^{h}_J(\vec x)\big)$, where 
\[
\widehat\gamma^{h}_j(\vec x) = \begin{cases} \widehat\beta_{h,j}(\vec x) & \text{if } h > 1, \\
                               \widehat\beta_{h,j}(\vec x)/h^{d+2} & \text{if } h \le 1. \end{cases}
\]
For $\widehat\sigmat_j = \vec I$ and large $h$ (here $h$ is of the same asymptotic order as $d^{1/2}$), $\widehat\gamma_{h,j}=\widehat\beta_{h,j}$ is given by
\[
\widehat\gamma_{h,j}(\vec x) = \frac{1}{n_j} \sum_{i=1}^{n_j} \Psi\left(\frac{\|\vec x - \vec X_{ji}\|^2}{h^2}\right) \|\vec x - \vec X_{ji}\|^2.
\]
From the first part in the proof of Theorem~\ref{thm:MD_asymptotics_HDLSS}, it follows that if $\vec X \sim F_j$ and $\vec Y \sim F_{j^\prime}$ are independent, then
\begin{align*}
\frac{1}{d} \Psi\left(\frac{\|\vec X-\vec Y\|^2}{h^2}\right)\|\vec X-\vec Y\|^2 &= \Psi\left(\frac{\|\vec X-\vec Y\|^2}{d} \frac{d}{h^2}\right)\,\frac{\|\vec X-\vec Y\|^2}{d} \\
& \stackrel{P}{\rightarrow} \Psi\left(\frac{\sigma_j^2+\sigma_{j^\prime}^2+\nu_{jj^\prime}^2}{C_0}\right) (\sigma_j^2+\sigma_{j^\prime}^2+\nu_{jj^\prime}^2) =: \theta_{jj^\prime}^{0}.
\end{align*}
Here, the last step follows from the continuity of $\Psi$. Since the sample sizes $n_1,\ldots,n_J$ are fixed, as $d \to \infty$, we get 
\[
d^{-1} \widehat\gamma_{k,h}(\vec X_{ji}) = \frac{1}{n_k} \sum_{i^\prime=1}^{n_k} \Psi\left(\frac{\|\vec X_{ji}-\vec X_{ki^\prime}\|^2}{h^2}\right) \frac{\|\vec X_{ji}-\vec X_{ki^\prime}\|^2}{d} \overset{P}{\to} \begin{cases} \theta_{kj}^{0} & \text{if } k \ne j,\\ \Big(1-\frac{1}{n_j}\Big)\theta_{jj}^{0} & \text{if } k=j. \end{cases}
\]
For this last step, note that for any $k=1,\ldots,J$, one of the summands in $\widehat\beta_{h,k}(\vec X_{ji})$ is zero if $k=j$. Note that since $h^2/d \to C_0>0$, if follows that $h>1$ for large enough $d$. Part (a) of the theorem now follows easily. Part (b) of the theorem also follows in a similar manner by noting that none of the summands are zero in $\widehat\beta_{h,k}(\vec Z)$ for an independent observation $\vec Z$ outside of the training sample.
\end{proof}

\bibliographystyle{chicago}
\bibliography{biblio}

\end{document}